\declaretheorem[numberwithin=section]{theorem}
\declaretheorem[sibling=theorem]{lemma}
\declaretheorem[sibling=theorem]{corollary}
\declaretheorem[sibling=theorem]{claim}
\declaretheorem[sibling=theorem]{remark}
\declaretheorem[sibling=theorem]{example}
\newcommand{\R}{\mathbb{R}}
\newcommand{\Z}{\mathbb{Z}}
\newcommand{\N}{\mathbb{N}}
\newcommand{\Ex}{\mathbb{E}}
\newcommand{\x}[1]{\ensuremath{\mathsf{x}_{#1}}}
\newcommand{\w}[1]{\ensuremath{\mathsf{w}_{#1}}}
\newcommand{\W}{\ensuremath{\mathsf{W}}}
\newcommand{\wmin}{\ensuremath{\mathsf{w}_{\min}}}
\newcommand{\wmax}{\ensuremath{\mathsf{w}_{\max}}}
\newcommand{\barw}{\ensuremath{\overline w}}
\newcommand{\barV}{\ensuremath{\bar V}}
\newcommand{\barG}{\ensuremath{\bar G}}
\def\dist{\operatorname{dist}}
\newcommand{\GIRG}{\mathcal{G}}
\newcommand{\Space}{\mathcal{X}}
\renewcommand{\epsilon}{\ensuremath{\varepsilon}}
\newcommand{\eps}{\ensuremath{\varepsilon}}
\newcommand{\temporary}[1]{}
\newcommand{\pushright}[1]{\ifmeasuring@#1\else\omit\hfill$\displaystyle#1$\fi\ignorespaces}
\newcommand{\pushleft}[1]{\ifmeasuring@#1\else\omit$\displaystyle#1$\hfill\fi\ignorespaces}
\begin{document}

\title{Average Distance in a General Class of Scale-Free Networks \\ with Underlying Geometry}
\author{Karl Bringmann\thanks{Max-Planck-Institute for Informatics, Saarbr\"ucken, Germany, \texttt{kbringma@mpi-inf.mpg.de}} \and Ralph Keusch\thanks{Institute of Theoretical Computer Science, ETH Zurich, Switzerland, \texttt{rkeusch@inf.ethz.ch}} \and Johannes Lengler\thanks{Institute of Theoretical Computer Science, ETH Zurich, Switzerland, \texttt{lenglerj@inf.ethz.ch}}}
\date{}
\maketitle

\medskip

\begin{abstract}
~In Chung-Lu random graphs, a classic model for real-world networks, each vertex is equipped with a weight drawn from a power-law distribution, and two vertices form an edge independently with probability proportional to the product of their weights. Chung-Lu graphs have average distance $O(\log\log n)$ and thus reproduce the small-world phenomenon, a key property of real-world networks. 
Modern, more realistic variants of this model also equip each vertex with a random position in a specific underlying geometry. The edge probability of two vertices then depends, say, inversely polynomial on their distance. 

In this paper we study a generic augmented version of Chung-Lu random graphs. We analyze a model where the edge probability of two vertices can depend arbitrarily on their positions, as long as the marginal probability of forming an edge (for two vertices with fixed weights, one fixed position, and one random position) is as in Chung-Lu random graphs. The resulting class contains Chung-Lu random graphs, hyperbolic random graphs, and geometric inhomogeneous random graphs as special cases. 

Our main result is that every random graph model in this general class has the same average distance as Chung-Lu random graphs, up to a factor $1+o(1)$. This shows in particular that specific choices, such as the underlying geometry being Euclidean or the dependence on the distance being inversely polynomial, do not significantly influence the average distance.
The proof also shows that every random graph model in our class yields a giant component and polylogarithmic diameter with high probability. 
\end{abstract}



\bigskip

\section{Introduction}
Large real-world networks, like social networks or the internet infrastructure, are almost always \emph{scale-free}, i.e., their degree distribution follows a power law with parameter $2<\beta<3$. 
Such networks have been studied in detail since the 60s. One of the key findings is the \emph{small-world phenomenon}, which is the observation that two nodes in a network typically have very small graph-theoretic distance. In the 90s, this phenomenon was explained by theoretical models of random graphs. Since then, random graph models have been the basis for the statistical study of real-world networks, as they provide a macroscopic perspective and reproduce structural properties observed in real data. In this line of research, one studies the diameter of a graph, i.e., the largest distance between any pair of vertices in the largest component, and its average distance, i.e., the expected distance between two random nodes of the largest component. A random graph model is said to be \emph{small world} if its diameter is bounded by $(\log n)^{O(1)}$ or even $O(\log n)$, and \emph{ultra-small world} if its average distance is only $O(\log \log n)$. 

Chung-Lu random graphs are a prominent model of scale-free networks \cite{chung2002avg,Chung02}. In this model, every vertex $v$ is equipped with a weight $\w v$, and two vertices $u,v$ are connected independently with probability $\min\{1, \w u \w v / \W\}$, where $\W$ is the sum over all weights $\w v$.  The weights are typically assumed to follow a power-law distribution with power-law exponent $\beta>2$. Chung-Lu random graphs have the ultra-small world property, since in the range $2<\beta<3$ the average distance is $(2 \pm o(1)) \frac{\log \log(n)}{|\log(\beta -2)|}$ \cite{chung2002avg,Chung02}.

However, Chung-Lu random graphs fail to capture other important features of real-world networks, such as high clustering or navigability. This is why dozens of papers propose more realistic models which also possess some local structure, many of which combine Chung-Lu random graphs (or other classic models such as preferential attachment~\cite{BarabasiA99}) with an underlying geometry, see, e.g., hyperbolic random graphs~\cite{BogunaPK10,PapadopoulosKBV10,gugelmann2012random}, geometric inhomogeneous random graphs~\cite{bringmann2015euclideanGIRG,koch2016bootstrap,bringmann2017routing}, and many others~\cite{spatialpreferred,BollobasSR07,bonato2010geometric,bradonjic2008structure,
deijfen2013scale,jacob2013spatial,SerranoKB08}. 
In these models, each vertex is additionally equipped with a random position in some underlying geometric space, and the edge probability of two vertices depends on their weights as well as the geometric distance of their positions. Typical choices for the geometric space are the unit square, circle, or torus, and for the dependence on the distance are inverse polynomial, exponential, or threshold functions. 
Such models can naturally yield a large clustering coefficient, since there are many edges among geometrically close vertices.
For some of these models the average distance has been studied and shown to be the same as in Chung-Lu graphs, up to a factor $1+o(1)$, see, e.g.,~\cite{abdullah2015typical,BollobasSR07,deijfen2013scale}.

For these results, it is unclear how much they depend on the particular choice of the underlying geometry. In particular, it is not known whether any of the important properties of Chung-Lu random graphs transfer to versions with a \emph{non-metric} underlying space. Such spaces are well-motivated in the context of social networks, where two persons are likely to know each other if they share a feature (e.g., they are in the same sports club) regardless of their differences in other features (e.g., their profession), which gives rise to a non-metric distance (see Section~\ref{sec:distancemodel}). 

\paragraph{Our Contribution:}
As main result of this paper we prove that all geometric variants of Chung-Lu random graphs have the same average distance $(2 \pm o(1)) \frac{\log \log(n)}{|\log(\beta -2)|}$ in the regime $2<\beta<3$, showing \emph{universality} of the ultra-small world property. 

We do this by analyzing a generic augmented and very general version of Chung-Lu random graphs. Here, each vertex is equipped with a power-law weight $\w v$ and an independently random position $\x v$ in some ground space~$\Space$. Two vertices $u,v$ form an edge independently with probability $p_{uv}$ that only depends on the positions $\x u, \x v$ (and $u,v$ and the weight sequence). The dependence on $\x u, \x v$ may be arbitrary, as long as the edge probability has the same marginal probabilities as in Chung-Lu random graphs. Specifically, for fixed $\x u$ and random $\x v$ we require that the marginal edge probability $\Ex_{\x v}[p_{uv} | \x u]$ is within constant factors of the Chung-Lu edge probability $\min\{1, \w u \w v / W\}$. This is a natural property for any augmented version of Chung-Lu random graphs. Note that our model is stripped of any geometric specifics. In fact, the ground space is not even required to be metric. We retain only the most important features, namely power-law weights and the right marginal edge probabilities. Hence, the main result also demonstrates that there exist random graph models with non-metric underlying geometry, still satisfying the ultra-small world property.

Beyond the average distance, we establish that this general model is scale-free and has a giant component and polylogarithmic diameter. Thus all instantiations of augmented Chung-Lu random graphs share some basic properties that are considered important for models of real-world networks. 

It it quite surprising that the average distance can be computed so precisely in this generality. For example the clustering coefficient varies drastically between different instantiations of the model, as it encompasses the classic Chung-Lu random graphs that have clustering coefficient $n^{-\Omega(1)}$, as well as geometric variants that have constant clustering coefficient~\cite{bringmann2015euclideanGIRG}. Therefore, our results hold on graphs with very different local structure. Note that by the scale-free-property, all variants of the model contain $\Theta(n)$ edges. If an instance has high clustering, many edges are \emph{local} edges inside well-connected subgraphs, and therefore futile for finding short paths between far vertices. Still, our main result implies that in such graphs the average distance is asymptotically the same as in Chung-Lu random graphs, where we have no clustering and every edge is potentially helpful when searching for short paths. We also remark that the statements fail to hold for $\beta > 3$, and that the graphs can look rather diverse depending on the model. For example, some instantiations in this regime do not even have a giant component, but the largest component is of polynomial size $n^{1-\Omega(1)}$~\cite{bode2014geometrisation}. On the other hand, it is also not hard to construct models for $\beta >3$ which do have a giant component, but still have polynomially large average distance, see Remark~\ref{rem:largebeta}. This variety for $\beta >3$ makes it even more surprising that in the regime $2 < \beta < 3$ the average distance can be determined precisely for all instances at once.

A common property of all models in our general class is that for a set $S$ of vertices whose weights sum to $W_S$ (often called \emph{volume} in the literature), the expected number of half-edges going out from $S$ is $\Theta(W_S)$. For the classic Chung-Lu random graphs without geometry, the targets of these half-edges are independent of each other. Thus the quantity $W_S$ is essentially sufficient to determine the size and the volume of the neighborhood $\Gamma(S)$ of $S$ and the analyses of Chung-Lu random graphs are based on this property. However, for non-trivial geometries the size of the neighborhood crucially depends on the geometric position of the vertices in $S$. For example, if the clustering coefficient is constant, then even if $S$ consists of only two adjacent vertices there is already a non-negligible probability that they share some neighbors. Thus the proofs for classic Chung-Lu random graphs do not carry over to the general setting. On the other hand, existing proofs for geometric scale-free networks~\cite{abdullah2015typical, deijfen2013scale} rely rather heavily on the specifics of the underlying geometry.

In the general setting, we can therefore only borrow one step from previous proofs, namely the ``greedy path'' argument (Lemma~\ref{lem:path-to-core}). We use this idea to prove that for all vertices of at least poly-logarithmic weight there exists an ultra-short path to the ``heavy core'', which is well-connected and contains the vertices of highest weight. From a technical point of view, the most important contribution of this paper is the ``bulk lemma'' (Lemma~\ref{lem:path-to-core3}). It contains a delicate and subtle analysis of the neighborhood of a vertex restricted to small-weight vertices. The lemma studies the probability that the $k$-neighborhood of a random vertex $v$ of low-weight contains a node $v'$ that is connected to a high-weight vertex, from which we can then apply the ``greedy path'' argument. We emphasize that both the size and the shape of the $k$-neighborhood of such a vertex crucially depend on the underlying geometry. Therefore, we are forced to use the geometry \emph{implicitly}, in order to make the argument general enough for being valid universally in our general class of random graphs. 
Finally, we obtain the bound on the average distance by applying the bulk lemma repeatedly for different values of $k$ and carefully summing up the resulting terms.

\paragraph{Organization of the Paper:}
In Section~\ref{sec:modelresults} we present the details and a precise definition of the model, and we formally state the results. In Section~\ref{sec:notation} we introduce notation and prove a concentration inequality which will be used later in the proofs. After some basic and preliminary results (Section~\ref{sec:basic}), we prove the connectivity properties and the main result in Section~\ref{sec:diameter}, and determine the degree distribution of our model in Section~\ref{sec:degreesequence}. We discuss several special cases of the model in Section~\ref{sec:distancemodel}, and make some concluding remarks in Section~\ref{sec:conclusion}.

\section{Model and Results}
\label{sec:modelresults}
\subsection{Definition of the Model}\label{sec:model}

In this paper we study properties of a very general random graph model, where both the set of vertices $V$ and the set of edges $E$ are random. Each vertex $v$ comes with a weight $\w v$, which will essentially be the expected degree of $v$, and with a random position $\x v$ in a geometric space $\Space$. We now give the full definition, first for the weight sequence and then for the resulting random graph.

\paragraph{Power law weights:}
For $n \in \N$ let $\w{} = (\w 1, \ldots, \w n)$ be a non-increasing sequence of positive weights. We call $\W := \sum_{v=1}^n \w v$ the \emph{total weight}. 
Throughout this paper we will assume that the weights follow a \emph{power law}: the fraction of vertices with weight at least $w$ is $\approx w^{1-\beta}$ for some $\beta > 2$ (the \emph{power-law exponent} of $\w{}$). More precisely, we assume that for some $\barw = \barw(n)$ with $n^{\omega(1/\log\log n)}\leq\barw \leq n^{(1-\Omega(1))/(\beta-1)}$, the sequence $\w{}$ satisfies the following conditions:
\begin{enumerate}[(PL1)]
\item the minimum weight is constant, i.e., $$\wmin := \min\{\w{v} \mid 1 \le v \le n\} = \Omega(1);$$ 
\item for all $\eta >0$ there exist constants $c_1,c_2>0$ such that
$$
c_1\frac{n}{w^{\beta-1+\eta}} \leq \#\{1 \le v \le n \mid \w{v} \geq w\} \leq c_2\frac{n}{w^{\beta-1-\eta}},
$$
where the first inequality holds for all $\wmin \leq w \leq \barw$ and the second for all $w \geq \wmin$. 
\end{enumerate}
We remark that these are standard assumptions for power-law graphs with average degree
$\Theta(1)$. Note that since $\barw \le
n^{(1-\Omega(1))/(\beta-1)}$, there are $n^{\Omega(1)}$ vertices with weight at
least $\barw$. On the other hand, no vertex has weight larger than $(c_2
n)^{1/(\beta-1-\eta)}$. 

\paragraph{Random graph model:}
Let $\Space$ be a non-empty set, and assume we have a measure $\mu$
on $\Space$ that allows to sample elements from $\Space$. We call $\Space$ the \emph{ground space} of the model and the elements in $\Space$ \emph{positions}.
The random graph $\GIRG(n,\Space,\w{},p)$ has vertex set
$V=[n] = \{1,\ldots,n\}$. For any vertex $v$ we independently draw a position $\x{v} \in \Space$ according to measure $\mu$. Conditional on $\x{1}, \ldots, \x{n}$, we connect any two
vertices $u \ne v$ independently with probability
$$p_{uv} := p_{uv}(\x{u},\x{v}) :=
p_{uv}(\x{u},\x{v};n,\Space,\w{}),$$ 
where $p$ is a (symmetric in $u,v$ and measurable) function mapping to $[0,1]$ that satisfies the following condition:
\begin{enumerate}[(EP1)]
  \item for any $u,v$, if we fix position $\x u \in \Space$ and draw position $\x v$ from $\Space$ according to $\mu$, then the marginal edge probability is
  $$
   \Ex_{\x v}[p_{uv}(\x u, \x v) \mid \x u] = \Theta\Big(\min\Big\{1,\frac{\w u \w v}{\W}\Big\}\Big).
  $$ 
\end{enumerate}

\noindent For most results we also need an additional condition, to ensure the existence of a unique giant component:
\begin{enumerate}[(EP2)]
  \item for all $\eta > 0$, any $u,v$ with $\w{u},\w{v} \geq \barw$, and any fixed positions $\x{u},\x{v} \in \Space$ we have
  $$
   p_{uv}(\x u, \x v) \geq \Big(\frac{n}{\barw^{\beta-1+\eta}}\Big) ^{-1+\omega(1/\log\log n)}.
  $$
\end{enumerate}

\paragraph{Discussion of the model:}
Let us first argue why condition (EP2) is necessary to obtain a unique giant component. Suppose we
have an instantiation of our model $G$ on a space $\Space$. We will see in this paper that
with high probability $G$ has a giant component that contains all high-degree
vertices. Now make a copy $\Space'$ of $\Space$, and consider a graph where all
vertices draw geometric positions from $\Space \cup \Space'$. Vertices in
$\Space$ are never connected to vertices in $\Space'$, but within $\Space$ and
$\Space'$ we use the same connection probabilities as for $G$. Then the
resulting graph will satisfy all properties of our model except for (EP2),
but it will have two giant components, one in $\Space$ and one in $\Space'$. As
we will see, (EP2) ensures that the high-weight vertices form a single dense
network, so that the graph indeed has a unique giant component. However, for our results on the degree sequence (EP2) is not necessary.

Since the right hand side of (EP1) is the edge probability of Chung-Lu graphs, this is a natural condition for any augmented version of Chung-Lu graphs. In particular, (EP1) ensures that the expected degree of a vertex $v$ with weight $\w v$ is indeed $\Theta(\w v)$.
For similar reasons as discussed for (EP2), we cannot further relax (EP1) to a condition on the marginal probability over random positions $\x u$ and $\x v$, i.e, a condition like $\Ex_{\x u, \x v}[p_{uv}(\x u, \x v)] = \Theta\left(\min\left\{1,\frac{\w u \w v}{\W}\right\}\right)$. Indeed, consider the same setup as above, with $G$, $\Space$, and copy $\Space'$. For two vertices of weight at most $\bar w$, connect them only if they are in the same copy of $\Space$. For two vertices of weight larger than $\bar w$, always treat them as if they would come from the same copy (then condition (EP2) is satisfied). For a vertex $u$ of weight at most $\bar w$ and $v$ of weight larger than $\bar w$, connect them only if $u$ is in $\Space'$. Then the high-weight vertices form a unique component, but it is only connected to vertices in $\Space'$, while the low-weight vertices in $\Space$ may form a second giant component. Thus, in (EP1) it is necessary to allow any fixed $\x u$.

\paragraph{Sampling the weights:} In the definition we assume that the weight
sequence $\w{}$ is fixed. However, if we sample the weights according to an appropriate distribution, then the sampled weights will follow a power law with probability $1 - n^{-\Omega(1)}$, so that a model with sampled weights is almost surely included in our model. For the precise statement, see
Lemma~\ref{lem:sampleweights}.

\paragraph{Examples:}
We regain the Chung-Lu model as a special case by setting $\Space = \{x\}$ (the trivial
ground space) and $p_{uv} = \min\left\{1,\frac{\w u \w v}{\W}\right\}$, since then (EP1) is  trivially satisfied and (EP2) is satisfied for $2 < \beta < 3$.

We discuss more examples in Sections~\ref{sec:distancemodel}. 
In particular, the model includes \emph{geometric inhomogeneous random graphs} (GIRGs) that were introduced in~\cite{bringmann2015euclideanGIRG}. Consider the $d$-dimensional ground space $\Space=[0,1]^d$ with the standard (Lebesgue) measure, where $d \ge 1$ is a
(constant) parameter of the model. Let $\alpha \neq 1$ be a second
parameter that determines how strongly the geometry influences edge probabilities. Finally, let $\|.\|$ be the Euclidean distance on $[0,1]^d$, where we identify $0$ and $1$ in each coordinate (i.e., we take the distance on the torus). We show in Theorem~\ref{thm:distancemodel} that every edge probability function $p$ satisfying
\begin{equation}\label{eq:Euclideanmodel}
  p_{uv} = \Theta\Big( \min\Big\{1, (\|\x u-\x v\|)^{-d\alpha} \cdot \Big( \frac{\w u \w v}{\W}\Big)^{\max\{\alpha,1\}} \Big\} \Big)
\end{equation}
follows~(EP1) and (EP2), so it is a special case of our model. As was shown in~\cite{bringmann2015euclideanGIRG}, an instance of hyperbolic random graphs satisfies~\eqref{eq:Euclideanmodel} asymptotically almost surely (over the choice of random weights $\w{}$). Thus, hyperbolic random graphs, which have gained a lot of theoretical and experimental interest during the last years (see, e.g., \cite{BogunaPK10,KrioukovPKVB,gugelmann2012random,friedrich2015cliques}), also are a special case of our general model.

In Section~\ref{sec:distancemodel} we will see that GIRGs can be varied as follows. As before, let $\Space= [0,1]^d$. For $x = (x_1,\ldots,x_d)$ and $y = (y_1,\ldots,y_d) \in \Space$, we define the \emph{minimum component distance} $\|x-y\|_{\min}
:= \min\{|x_i-y_i| \mid 1\leq i \leq d\}$, where the differences $x_i-y_i \in [-1/2,1/2)$ are computed modulo 1, or, equivalently, on the circle. This distance reflects the property of social networks that two
individuals may know each other because they are similar in only one feature
(e.g., they share a hobby), regardless of the differences in other features.
Note that the minimum component distance is not a metric, since there are
$x,y,z\in \Space$ such that $x$ and $y$ are close in one component, $y$ and $z$
are close in one (different) component, but $x$ and $z$ are not close in any
component. Let $V(r)$ be the volume of the ball $B_r(0) := \{x \in \Space \mid \|x\|_{\min} \leq r\}$. Then any $p$ satisfying
\begin{equation*}
 p_{uv} = \Theta\Big( \min\Big\{1, V(\|\x u-\x v\|)^{-\alpha} \cdot \Big( \frac{\w u \w v}{\W}\Big)^{\max\{\alpha,1\}} \Big\} \Big)
\end{equation*}
satisfies conditions~(EP1) and (EP2), so it is a special case of our model.\footnote{These examples also show that our model is incomparable to the (also very general) model of inhomogeneous random graphs studied by Bollob\'as, Janson, and Riordan~\cite{BollobasSR07}. Their model requires sufficiently many long-range edges, so that setting $\alpha > 1$ in (\ref{eq:Euclideanmodel}) yields an edge probability that is not supported by their model. Similarly, the example with the minimum component distance is also not supported by their model.}

\subsection{Results of this paper}
Our results generalize and improve the understanding of Chung-Lu random graphs, hyperbolic random graphs, and other models, as they are special cases of our fairly general model. 
We study the following fundamental structural questions. 

\paragraph{Scale-freeness:} Since we plug in power-law weights $\w{}$, it is not surprising that our model is scale-free. 

\begin{theorem}[Section~\ref{sec:degreesequence}] \label{thm:degseq1}
  Whp\footnote{We say that an event holds \emph{with high probability} (whp) if it holds with probability $1 - n^{-\omega(1)}$.} the degree sequence of our random graph model, not necessarily fulfilling (EP2), follows a power law with exponent~$\beta$ and average degree $\Theta(1)$.
\end{theorem}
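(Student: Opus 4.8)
The plan is to control each vertex's degree in expectation, upgrade this to concentration wherever it is available, and then read off the power law for the tail counts $N_k := \#\{v : \deg v \ge k\}$. The first step is the expected degree. Using that $\W = \Theta(n)$ and the standard power-law estimate $\sum_{u}\min\{1, \w u w / \W\} = \Theta(w)$ for $\wmin \le w \le \wmax$ (which follows from (PL1), (PL2) and $2 < \beta < 3$: the vertices $u$ with $\w u \le \W/w$ contribute $\Theta(w)$, the heavier ones $O(w)$), together with condition (EP1) and the symmetry of $p$ (so that for fixed $\x v$ and random $\x u$ the marginal edge probability is again $\Theta(\min\{1,\w u\w v/\W\})$), one obtains
\[
  \Ex[\deg v \mid \x v] \;=\; \sum_{u\ne v} \Ex_{\x u}\big[\,p_{uv}(\x u,\x v)\mid \x v\,\big] \;=\; \Theta(\w v)
\]
with absolute constants, \emph{uniformly} over $v$ and over $\x v\in\Space$. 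Conditioned on $\x v$, both $\deg v$ and $\mu_v := \sum_{u\ne v} p_{uv}(\x u,\x v)$ are sums of independent quantities (one for each $u \ne v$) with mean $\Theta(\w v)$.

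Next I would fix two high-probability events, both obtained by a Chernoff bound applied with deviation threshold $\max\{2\Ex[\deg v\mid\x v],\log^2 n\}$ followed by a union bound over the at most $n$ vertices: $(\mathrm{E})$ ``$\deg v\le\max\{C_1\w v,\log^2 n\}$ for all $v$, and $\deg v=\Theta(\w v)$ for every $v$ with $\w v\ge\log^2 n$'', and $(\mathrm{E}')$ ``$\mu_v\le\max\{C_1\w v,\log^2 n\}$ for all $v$''; on $(\mathrm{E}')$ we get $\sum_{u<v}p_{uv}=\tfrac12\sum_v\mu_v=O(\W)+O(n\log^2 n)=O(n\log^2 n)$. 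On $(\mathrm{E})$, for every $k$ above a fixed polylogarithmic threshold, $N_k$ is squeezed between $\#\{v:\w v\ge c'k\}$ and $\#\{v:\w v\ge c''k\}$, which by (PL2) both equal $\Theta(nk^{1-\beta})$ up to a factor $k^{\pm\eta}$ (such $k$ are still $\le\barw$, since $\barw\ge n^{\omega(1/\log\log n)}$); moreover $(\mathrm{E})$ already gives the upper bound $N_k\le c_2 nk^{-(\beta-1-\eta)}$ for every $k>\log^2 n$.

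The remaining range $1\le k\le\log^2 n$ is the crux, because low-weight vertices are not individually concentrated. Here I would first observe that, for \emph{every} position vector, $\Ex[N_k\mid\text{positions}]=\sum_v\Pr[\deg v\ge k\mid\x v]$ lies between $\Omega(nk^{1-\beta-\eta})$ and $O(nk^{1-\beta+\eta})$: the lower bound comes from the $\Theta(nk^{1-\beta\pm\eta})$ vertices of weight $\ge Ck$, each of which has $\Pr[\deg v\ge k\mid\x v]\ge\tfrac12$ uniformly (since $\Ex[\deg v\mid\x v]\ge 2k$ for these, by the uniform estimate above), and the upper bound from splitting $\sum_v\Pr[\deg v\ge k\mid\x v]$ at weight $k/C'$ and using Chernoff's upper tail on the light part. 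It then remains to concentrate $N_k$ around this conditional mean. Conditioned on the positions the edges are independent, and flipping one edge changes $N_k$ by at most $2$; on $(\mathrm{E}')$ the variance proxy $\sum_{u<v}p_{uv}(1-p_{uv})$ is $O(n\log^2 n)$, so a variance-sensitive (Bernstein-type) bounded-differences inequality — for instance the concentration inequality established in \secref{notation} — yields $\Pr\big[\,|N_k-\Ex[N_k\mid\text{positions}]|\ge\tfrac12\Ex[N_k\mid\text{positions}]\,\big]\le n^{-\omega(1)}$ for each $k\le\log^2 n$, the exponent being of order $\big(\Ex[N_k\mid\text{positions}]\big)^2/(n\log^2 n)\ge n/(\log n)^{O(1)}=\omega(\log n)$. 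A union bound over the at most $(\log n)^{O(1)}$ relevant values of $k$ completes this range.

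Combining the two ranges, whp $N_k=\Theta(nk^{1-\beta})$ up to a factor $k^{\pm\eta}$ for all $1\le k\le\barw$ and $N_k\le c_2 nk^{-(\beta-1-\eta)}$ for all $k$, which is exactly the assertion that the degree sequence follows a power law with exponent $\beta$. The average degree then drops out: $\sum_v\deg v=\sum_{k\ge1}N_k$, so choosing $\eta<\beta-2$ makes the upper bounds summable and gives $\sum_v\deg v\le c_2 n\sum_{k\ge1}k^{-(\beta-1-\eta)}=\Theta(n)$, while $\sum_v\deg v\ge N_1=\Omega(n)$; hence the average degree is $\Theta(1)$ whp. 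The one genuinely delicate point is the concentration for $k\le(\log n)^{O(1)}$: since individual low-weight degrees fluctuate by constant factors, the naive bounded-differences inequality (with $\binom n2$ coordinates each of Lipschitz constant $O(1)$) is hopelessly weak, and getting a \emph{high}-probability rather than a $1-o(1)$ statement hinges on combining the deterministic sandwich for $\Ex[N_k\mid\text{positions}]$, the event $(\mathrm{E}')$ bounding $\sum_{u<v}p_{uv}$, and a variance-aware concentration bound; aligning the cutoffs $\log^2 n$ and $\barw$ and carrying the $\eta$-slack through is a further, more routine, chore.
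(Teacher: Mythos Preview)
Your high-weight range ($k$ above a polylog threshold) matches the paper: individual degree concentration plus (PL2) pins down $N_k$. The divergence is in the low-weight range, and there your argument has a genuine gap.

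You condition on all positions and then want a Bernstein-type bound on $N_k$ over the edge randomness. For that to give the power law you need the \emph{conditional} mean $\Ex[N_k\mid\text{positions}]=\sum_v\Pr[\deg v\ge k\mid\text{positions}]$ to be sandwiched. But the quantity you actually control is $\Pr[\deg v\ge k\mid \x v]$, conditioning only on $v$'s own position; this is where the ``uniform estimate'' $\Ex[\deg v\mid\x v]=\Theta(\w v)$ lives, and it is \emph{not} the same as $\Pr[\deg v\ge k\mid\text{positions}]$. Once all positions are fixed, the relevant mean is $\mu_v=\sum_{u}p_{uv}(\x u,\x v)$, and for a constant-weight vertex $\mu_v$ is not deterministically $\Theta(\w v)$ --- it can be anywhere up to $\log^2 n$ on your event $(\mathrm{E}')$, and there is no whp lower bound for an individual $v$. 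So the claim ``for every position vector, $\Ex[N_k\mid\text{positions}]$ lies between $\Omega(nk^{1-\beta-\eta})$ and $O(nk^{1-\beta+\eta})$'' is false as stated: for the upper bound, $(\mathrm{E}')$ only gives $\mu_v\le\log^2 n$ for light vertices, useless when $k\le\log^2 n$; for the lower bound, nothing forces $\mu_v\ge 2k$ for the vertices of weight $\ge Ck$. You would need a \emph{second} concentration argument, now for $\Ex[N_k\mid\text{positions}]$ as a function of the $n$ positions, and that step is missing. Also, the concentration inequality in \secref{notation} (Theorem~\ref{thm:concentration}) is \emph{not} variance-sensitive: it is plain McDiarmid with a bad event, $\exp(-t^2/(32mc^2))$, and with $m=\binom n2$, $c=2$ it gives nothing. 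You would need Freedman or an Efron--Stein/entropy-method Bernstein bound, which the paper does not supply.

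For comparison, the paper avoids this two-level issue by a different decomposition. It does \emph{not} condition on positions; instead it realises the whole probability space as $2n-1$ independent coordinates --- the $n$ positions together with one vector $Y_u\in[0,1]^{u-1}$ per vertex encoding all edges from $u$ to lower-indexed vertices --- and applies Theorem~\ref{thm:concentration} directly to $f_d:=\#\{v\in V_{\le n^\eps}:\deg_{G[V_{\le n^\eps}]}(v)\ge d\}$. Restricting to the low-weight subgraph $G[V_{\le n^\eps}]$ makes the max degree at most $n^{2\eps}$ on the good event, so changing one coordinate (one position or one $Y_u$) changes $f_d$ by at most $O(n^{2\eps})$; with $m=O(n)$ coordinates and $t=n^{1-\eps}$ this gives $\exp(-n^{\Omega(1)})$. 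The difference $g_d-f_d$ is absorbed by $\sum_{v\notin V'}\deg v=n^{1-\Omega(1)}$. So the key idea you are missing is not a sharper inequality but the coordinate grouping (one block of edge-randomness per vertex) combined with the truncation to $V_{\le n^\eps}$, which together convert the naive $\binom n2$-coordinate, $c=O(1)$ picture into an $O(n)$-coordinate, $c=n^{o(1)}$ picture where ordinary McDiarmid already wins.
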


\paragraph{Giant component and diameter:}
The connectivity properties of the model for $\beta > 3$ are not very well-behaved, in particular since in this case even threshold hyperbolic random graphs do not possess a giant component of linear size~\cite{bode2014geometrisation}. 
Hence, for connectivity properties we restrict our attention to the regime $2 < \beta < 3$, which holds for most real-world networks~\cite{dorogovtsev2002evolution}. 

\begin{theorem}[Section~\ref{sec:diameter}]\label{thm:diameter}
  Let $2 < \beta < 3$. Whp the largest component of our random graph model has linear size, while all other components have size at most $\log^{O(1)} n$. Moreover, whp the diameter is at most $\log^{O(1)} n$. 
\end{theorem}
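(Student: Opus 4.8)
\textbf{Proof proposal for Theorem~\ref{thm:diameter}.}

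The plan is to establish connectivity through a hierarchical argument centered on the ``heavy core''. First I would define, for a suitable threshold $w^* = w^*(n)$ (chosen polynomially large, say $w^* = \barw$ or slightly below), the set $C$ of all vertices of weight at least $w^*$. By (PL2) there are $n^{\Omega(1)}$ such vertices, and by condition (EP2) every pair of them is connected with probability at least $(n/\barw^{\beta-1-\eta})^{-1+\omega(1/\log\log n)}$, which for this choice of $w^*$ is $|C|^{-1+\omega(1/\log\log n)}$. A standard second-moment / union-bound computation then shows that whp $C$ induces a connected subgraph of diameter $\log^{O(1)} n$ (in fact polyloglog, but polylog suffices): the edge density is high enough that a greedy BFS from any vertex of $C$ reaches all of $C$ within a polylogarithmic number of rounds. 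This is where (EP2) is essential, and it is the step that would fail for the disconnected counterexamples discussed after the model definition.

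Next I would climb from $C$ down to all vertices of at least polylogarithmic weight using the ``greedy path'' argument of Lemma~\ref{lem:path-to-core}, which (as advertised in the introduction) produces for every such vertex an ultra-short path — of length $O(\log\log n)$ — to the heavy core $C$. Combined with the previous step, this shows that whp all vertices of weight $\ge \log^{K} n$ (for an appropriate constant $K$) lie in one component of diameter $\log^{O(1)} n$. The remaining, and main, task is to handle the \emph{bulk} of low-weight vertices. Here I would invoke the bulk lemma, Lemma~\ref{lem:path-to-core3}: applied with $k = \log^{O(1)} n$ it shows that, with probability $1 - n^{-\omega(1)}$ over a uniformly random low-weight vertex $v$, the $k$-neighborhood of $v$ already contains some vertex adjacent to a high-weight vertex, hence $v$ connects to the core within $\log^{O(1)} n$ steps. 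Taking a union bound over all $n$ low-weight vertices, whp \emph{all but} a $o(1)$-fraction — in fact all but $n^{1-\Omega(1)}$ — of the low-weight vertices are attached to the core within polylogarithmically many steps. This yields the giant component of linear size and the polylogarithmic diameter bound simultaneously.

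Finally, for the statement that all other components have size at most $\log^{O(1)} n$, I would argue that any vertex not in the giant component must have its entire $\log^{O(1)} n$-neighborhood consist of vertices that all fail the ``escape to the core'' event; but by the bulk lemma (and Lemma~\ref{lem:path-to-core} for the medium-weight vertices inside that neighborhood) a neighborhood of size $\omega(\log^{O(1)} n)$ would whp contain such an escaping vertex, so each small component has size $\log^{O(1)} n$. A union bound over the at most $n$ possible ``seeds'' closes this. I expect the main obstacle to be the bulk-lemma step and its quantitative bookkeeping: making the $k$-neighborhood growth argument go through \emph{without} using any specific geometry — so that the argument is valid for every instantiation satisfying only (EP1) and (EP2) — is precisely the delicate part the authors flag, and getting the probability bound down to $1-n^{-\omega(1)}$ (so that the union bound over all vertices survives) while keeping $k$ polylogarithmic requires carefully tuning the weight thresholds used at each stage of the hierarchy.
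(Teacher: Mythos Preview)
Your overall architecture matches the paper: heavy core via (EP2), greedy paths from polylog-weight vertices to the core (Lemma~\ref{lem:path-to-core}), and the bulk lemma (Lemma~\ref{lem:path-to-core3}) to handle low-weight vertices. The arguments for the diameter bound and for the polylogarithmic size of non-giant components are essentially the paper's: once the union bound over all vertices gives that whp every $v$ satisfies $\dist(v,V_{\ge w})\le k$ or $|N_v|<k$, the dichotomy immediately yields both conclusions.

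There is, however, a genuine gap in your proof of the \emph{linear} size of the giant. You write that the bulk lemma, applied with $k=\log^{O(1)}n$, shows that with probability $1-n^{-\omega(1)}$ a random low-weight vertex $v$ has a $k$-neighborhood containing a vertex adjacent to $V_{\ge w}$, and hence that all but $n^{1-\Omega(1)}$ vertices attach to the core. This misreads Lemma~\ref{lem:path-to-core3}: the lemma only bounds $\Pr[\dist(v,V_{\ge w})>k \text{ and } |N_v|\ge k]$. It gives a \emph{dichotomy}, not a connection guarantee. A vertex $v$ with $|N_v|<k$ (for instance an isolated vertex) is perfectly consistent with the lemma's conclusion yet never reaches the core. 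And such vertices are abundant: a vertex of constant weight is isolated with constant probability, so $\Theta(n)$ vertices are isolated. Thus your claim that all but $n^{1-\Omega(1)}$ vertices attach to the core is false, and nothing in the bulk lemma tells you how the vertices split between the two sides of the dichotomy.

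The paper supplies the missing argument in Lemma~\ref{lem:path-to-core}(iii), and it is not a one-liner. One partitions the vertices into weight layers $V_i=\{v:\ w_i\le \w v\le w_i^{1+\eta}\}$ with $w_{i+1}=w_i^{\tau}$, and shows by downward induction on $i$ that at least a $(1-D_1\exp\{-cw_i^{\mu-\delta}\})$-fraction of $V_i$ has a weight-increasing path to the core. The inductive step needs two concentration statements (Claims~\ref{cla:nrbadvertices} and~\ref{cla:crossingedges}), proved via the bad-event McDiarmid inequality (Theorem~\ref{thm:concentration}), controlling respectively the number of vertices in $V_i$ with too few neighbors in $V_{i+1}$ and the number of edges from $V_i$ into the ``bad'' part of $V_{i+1}$. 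Only after the induction reaches a constant-weight layer $V_{i_0}$ of size $\Theta(n)$ does one conclude $\Omega(n)$ vertices in the giant. This step is where the generality of the model (no independence across vertices once positions interact) really bites, and it is not subsumed by the bulk lemma.
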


\noindent
A better bound of $\Theta(\log n)$ holds for the diameter of Chung-Lu graphs~\cite{chung2004avg} and for hyperbolic random graphs \cite{FriedrichKrohmer15, mueller2017diameter}. It remains an open problem whether the upper bound $O(\log n)$ holds in general for our model. 

\paragraph{Average distance:}
As our main result, we determine the average distance between two randomly chosen nodes in the giant component to be the same as in Chung-Lu random graphs up to a factor $1+o(1)$, showing that the underlying geometry is negligible for this graph parameter. 

\begin{theorem}[Section~\ref{sec:diameter}]\label{thm:avgdist}
 Let $2 < \beta < 3$. Then the average distance of our random graph model is $(2 \pm o(1))\frac{\log \log n}{|\log(\beta-2)|}$ in expectation and with probability $1-o(1)$. 
\end{theorem}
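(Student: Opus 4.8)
The plan is to prove matching upper and lower bounds on the average distance. For the lower bound $(2-o(1))\frac{\log\log n}{|\log(\beta-2)|}$, I would use a first-moment/neighbourhood-growth argument: starting from a typical vertex $v$ (which has constant weight whp), bound the expected number of vertices reachable within $k$ steps. By (EP1), the expected number of edges from a vertex of weight $w$ to vertices of weight in a dyadic range $[2^i,2^{i+1})$ is $O(w\cdot 2^i \cdot n/(2^i)^{\beta-1}/\W) = O(w\cdot 2^{i(2-\beta)})$, so the weight of the heaviest vertex in the $k$-ball grows at most like a tower: after $k$ steps it is at most roughly $n^{(\beta-2)^{-k}\cdot c}$ or so, more precisely $\exp((\beta-2)^{-k})$ up to constants. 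The ball reaches polynomially-heavy vertices — and hence can have size $n^{\Omega(1)}$ — only once $k \ge (1-o(1))\frac{\log\log n}{|\log(\beta-2)|}$, and symmetrically the target vertex needs the same number of steps, giving the factor~$2$. This is standard and I would expect it to go through by a union bound over dyadic weight classes together with the power-law tail bound (PL2); the main care is handling the $\min\{1,\cdot\}$ truncation and the vertices of weight near $\barw$.

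The upper bound is where the real work lies and where I would invoke the paper's own machinery. The strategy, as the introduction signals, is a three-layer argument. (i) The \emph{heavy core} — vertices of weight at least some $n^{\Theta(1)}$ threshold — has constant diameter: by (EP2) any two such vertices are connected with probability close to~$1$ in the relevant sense, and a short-path/expansion argument (this is presumably already used in the proof of Theorem~\ref{thm:diameter}) shows the core is connected with diameter $O(1)$ or at most $O(\log\log n / \text{something})$ — in any case $o(\log\log n)$. (ii) The \emph{greedy-path lemma} (Lemma~\ref{lem:path-to-core}): any vertex of weight at least polylogarithmic reaches the heavy core in at most $(1+o(1))\frac{\log\log n}{|\log(\beta-2)|}$ hops, by repeatedly jumping to a neighbour of much larger weight; the weight roughly squares-to-a-power at each step, so $\frac{\log\log n}{|\log(\beta-2)|}$ steps suffice to climb from polylog weight to $n^{\Theta(1)}$ weight. (iii) The \emph{bulk lemma} (Lemma~\ref{lem:path-to-core3}): a typical low-weight vertex $v$ reaches \emph{some} vertex of polylogarithmic weight (equivalently, a vertex adjacent to the heavy core) within $o(\log\log n)$ hops, whp over the choice of $v$. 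This last step is the delicate one because the low-weight neighbourhood's size and shape depend heavily on the unknown geometry; the lemma must be stated and used in a way that only references (EP1) and never the metric structure.

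Given these three ingredients, the upper bound on the distance between two typical vertices $u,v$ is: $\mathrm{dist}(u,v) \le \mathrm{dist}(u,\text{core}) + \mathrm{diam}(\text{core}) + \mathrm{dist}(\text{core},v) \le 2\big((1+o(1))\tfrac{\log\log n}{|\log(\beta-2)|}\big) + o(\log\log n)$, where each $\mathrm{dist}(\cdot,\text{core})$ term is itself split as (bulk hops from typical vertex to polylog-weight vertex) $+$ (greedy hops from there to the core) $= o(\log\log n) + (1+o(1))\tfrac{\log\log n}{|\log(\beta-2)|}$. Then I would convert the ``whp over a single typical vertex'' statements into a statement about the \emph{average} distance: since the bulk lemma fails only for an $o(1)$-fraction of start vertices (and the non-giant part is $o(n)$ by Theorem~\ref{thm:diameter}), and distances in the giant are always $\mathrm{poly}\log n = o\big(\text{anything}\cdot\log\log n\big)$... wait — that bound is too weak to absorb into the average unless the exceptional fraction is $o(1/\log^{O(1)} n)$; so I would actually need the bulk lemma to hold except with probability $o(1/\mathrm{polylog})$, or argue the average directly by linearity, bounding $\Ex[\mathrm{dist}(u,v)] \le (2+o(1))\tfrac{\log\log n}{|\log(\beta-2)|}\cdot\Pr[\text{good}] + \log^{O(1)}n\cdot\Pr[\text{bad}]$ and requiring $\Pr[\text{bad}] = o(1/\log^{O(1)}n)$.

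The \textbf{main obstacle} is unquestionably step~(iii), the bulk lemma: controlling, without any handle on the geometry, how fast the small-weight part of the neighbourhood of a random low-weight vertex grows and the probability that it contains a vertex linked to a heavy vertex. Everything else (core diameter, greedy path, the final summation over dyadic scales $k$, and the lower bound) is comparatively routine given the setup, though the bookkeeping in summing the per-scale contributions to get the clean constant~$2/|\log(\beta-2)|$ — and checking the $o(1)$ error terms are uniform over all instantiations of the model — will require care.
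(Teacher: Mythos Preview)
Your overall architecture matches the paper's: lower bound by first moment on path counts (the paper simply cites \cite{dereich2012typical}), upper bound via heavy core $+$ greedy path $+$ bulk lemma, and you correctly flag the bulk lemma as the crux. However, your plan for combining the bulk and greedy steps has a genuine gap.

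You propose that step~(iii) shows a typical vertex reaches a vertex of \emph{polylogarithmic} weight in $o(\log\log n)$ hops. The bulk lemma cannot deliver this: it requires $k \ge w^{\beta+\eps}$ steps to guarantee reaching weight $w$, so to reach $w=(\log n)^C$ you need $k = (\log n)^{\Omega(1)}$ hops, not $o(\log\log n)$. With your decomposition you would only recover the polylogarithmic diameter bound, not the sharp average distance. The paper's actual mechanism is different: it does \emph{not} fix a polylog threshold. Instead, for every excess length $\ell\ge 3$ it sets $w(\ell)=\ell^{1/(\beta+1)}$ (so the bulk constraint $\ell \ge w^{\beta+\eps}$ is met), uses the bulk lemma to reach weight $w(\ell)$ in $\ell$ steps, and then invokes the greedy lemma from that \emph{small} weight $w(\ell)$ with failure probability $\exp(-w(\ell)^{\Omega(\eps)}) = \exp(-\ell^{\Omega(\eps)})$. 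This yields the tail bound $\Pr[\infty>\dist(v,\barV)\ge \ell+\lambda_\eps]\le O(e^{-\ell^{\Omega(\eps)}})$, which sums (via $\sum_\ell e^{-\ell^\kappa}\le \Gamma(1+1/\kappa)\le (1/\kappa)^{O(1/\kappa)}$) to something that is $o(\log\log n)$ once $\eps=\eps(n)\to 0$ sufficiently slowly. That summable tail is also exactly what resolves your worry about needing $\Pr[\text{bad}]=o(1/\text{polylog})$: you never isolate a single ``bad'' event but integrate the whole tail. So the missing idea is the coupling $w=w(\ell)$ and the resulting stretched-exponential tail, not merely bookkeeping.
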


\section{Preliminaries and Notation}\label{sec:notation}

\subsection{Notation}\label{subsec:notation}
For $w\in \R_{\geq 0}$, we use the notation
$  V_{\geq w} := \{v\in V\; |\; \w{v}\geq w\}$ and $V_{\leq w} := \{v\in V\; |\; \w{v}\leq w\}$,
as well as
$
  \W_{\geq w}:=\sum_{v\in V_{\geq w}} \w{v}$ and $\W_{\leq w}:=\sum_{v\in V_{\leq w}}\w{v}$ for sums of weights. Recall that $\wmin = \min\{\w v \mid 1 \le v \le n\}$, similarly we put $\wmax := \max\{\w v \mid 1 \le v \le n\}$ for the maximum weight.
For $u,v\in V$ we write $u\sim v$ if $u$ and $v$ are adjacent, and for
$A,B\subseteq V$ we write $A\sim v$ if there exists $u\in A$ such that $u\sim v$, and we write $A \sim B$ if there exists $v \in B$ such that $A\sim v$. For a vertex
$v\in V$, we denote its neighborhood by $\Gamma(v)$, i.e.\ $\Gamma(v):=\{u\in
V\mid u\sim v\}$. We say that an event holds \emph{with high probability} (whp) if it holds with probability $1 - n^{-\omega(1)}$.

\subsection{Concentration inequalities}\label{sec:tools}
In the proofs we will use the following concentration inequalities.
\begin{theorem}[Chernoff-Hoeffding bound, Theorem 1.1 in \cite{dubhashi2009concentration}] \label{thm:dubhashichernoff}
  Let $X:=\sum_{i\in [n]}X_i$ where for all $i \in [n]$, the random variables $X_i$ are independently
  distributed in $[0,1]$. Then 
\begin{enumerate}[(i)]
\item $\Pr[ X > (1+\eps)\mathbb{E}[X]] \leq \exp\left( -\frac{\eps^2}{3}\mathbb{E}[X] \right)$ for all $0<\eps<1$,
\item $\Pr[ X < (1-\eps)\mathbb{E}[X]] \leq \exp\left( -\frac{\eps^2}{2}\mathbb{E}[X] \right)$ for all $0<\eps<1$, and
\item $\Pr[X>t]\leq 2^{-t}$ for all $t>2e\Ex[X]$.
\end{enumerate}  
\end{theorem}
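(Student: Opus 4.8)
\section*{Proof proposal for Theorem~\ref{thm:dubhashichernoff}}

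\textbf{Setup and strategy.} The statement to prove is the Chernoff--Hoeffding bound for a sum $X = \sum_{i\in[n]} X_i$ of independent random variables each taking values in $[0,1]$, with the three tail estimates (i), (ii), (iii). The plan is the standard exponential-moment (Bernstein/Chernoff) method: for any $\lambda > 0$, Markov's inequality applied to $e^{\lambda X}$ gives $\Pr[X > a] \le e^{-\lambda a}\,\Ex[e^{\lambda X}]$, and by independence $\Ex[e^{\lambda X}] = \prod_i \Ex[e^{\lambda X_i}]$. The entire proof reduces to bounding each factor $\Ex[e^{\lambda X_i}]$ and then optimizing over $\lambda$.

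\textbf{Key steps for the upper tail (i).} First I would establish the single-variable bound: if $Y\in[0,1]$ with $\mu_i := \Ex[Y]$, then by convexity of $t\mapsto e^{\lambda t}$ on $[0,1]$ we have $e^{\lambda Y} \le 1 - Y + Y e^{\lambda}$ pointwise, hence $\Ex[e^{\lambda Y}] \le 1 - \mu_i + \mu_i e^{\lambda} \le \exp\!\big(\mu_i(e^{\lambda}-1)\big)$, using $1+x \le e^x$. Multiplying over $i$ and writing $\mu := \Ex[X] = \sum_i \mu_i$ yields $\Ex[e^{\lambda X}] \le \exp\!\big(\mu(e^{\lambda}-1)\big)$, which is exactly the moment bound one would get if $X$ were Poisson with mean $\mu$. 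Then $\Pr[X > (1+\eps)\mu] \le \exp\!\big(\mu(e^{\lambda}-1) - \lambda(1+\eps)\mu\big)$; choosing $\lambda = \ln(1+\eps)$ gives the classical form $\Pr[X > (1+\eps)\mu] \le \big(e^{\eps}/(1+\eps)^{1+\eps}\big)^{\mu}$. It then remains the purely analytic step of checking $e^{\eps}/(1+\eps)^{1+\eps} \le e^{-\eps^2/3}$ for $0 < \eps < 1$, i.e.\ that $g(\eps) := \eps - (1+\eps)\ln(1+\eps) + \eps^2/3 \le 0$ on $(0,1)$; this follows by computing $g(0)=0$, $g'(0)=0$, and checking $g''(\eps) = -\tfrac{1}{1+\eps} + \tfrac{2}{3} \le 0$ for $\eps \le 1/2$, together with a direct numerical verification on $[1/2,1)$ (or a slightly sharper second-derivative argument). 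For part~(iii), instead of optimizing I would keep $\lambda$ as a free large parameter: with $t > 2e\mu$, taking $\lambda = \ln 2$ (or $\lambda$ such that $e^{\lambda} = 2$) gives $\Pr[X > t] \le \exp(\mu(e^{\lambda}-1) - \lambda t) = e^{\mu}\,2^{-t} \le 2^{t/(2e)}\,2^{-t} \le 2^{-t/2}$; to land exactly on $2^{-t}$ one takes $\lambda$ a bit larger (say $e^{\lambda}-1 = t/(2\mu) \ge e$, so $\lambda \ge 1+\ln 2 > \ln 2$ won't quite do it directly) — more cleanly, set $\lambda = \ln(t/\mu)$, so that $\Ex[e^{\lambda X}] \le \exp(t/\mu \cdot \mu - \mu) $ wait, I would simply use $\Pr[X>t]\le (e\mu/t)^t$ (from $\lambda=\ln(t/\mu)$) and then $(e\mu/t)^t \le (1/2)^t$ since $e\mu/t < 1/2$ when $t > 2e\mu$.

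\textbf{Key steps for the lower tail (ii).} This is symmetric: for $\lambda > 0$, $\Pr[X < (1-\eps)\mu] = \Pr[-X > -(1-\eps)\mu] \le e^{\lambda(1-\eps)\mu}\,\Ex[e^{-\lambda X}]$. The single-variable convexity bound gives $\Ex[e^{-\lambda X_i}] \le \exp\!\big(\mu_i(e^{-\lambda}-1)\big)$, hence $\Pr[X < (1-\eps)\mu] \le \exp\!\big(\mu(e^{-\lambda}-1) + \lambda(1-\eps)\mu\big)$; optimizing with $\lambda = -\ln(1-\eps)$ yields $\Pr[X < (1-\eps)\mu] \le \big(e^{-\eps}/(1-\eps)^{1-\eps}\big)^{\mu}$, and then the analytic estimate $e^{-\eps}/(1-\eps)^{1-\eps} \le e^{-\eps^2/2}$ for $0 < \eps < 1$, equivalently $h(\eps) := -\eps - (1-\eps)\ln(1-\eps) + \eps^2/2 \le 0$, which follows from $h(0)=0$, $h'(0)=0$, and $h''(\eps) = -\tfrac{1}{1-\eps} + 1 \le 0$ on $(0,1)$ — the lower tail estimate is clean throughout, with no boundary case.

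\textbf{Main obstacle.} The conceptual content (exponential moment method, Poissonization bound) is routine. The only genuinely fiddly part is the analytic verification for the \emph{upper} tail that $g(\eps) = \eps - (1+\eps)\ln(1+\eps) + \eps^2/3 \le 0$ on all of $(0,1)$: unlike the lower-tail function $h$, the second derivative $g''(\eps) = \tfrac{2}{3} - \tfrac{1}{1+\eps}$ is \emph{positive} for $\eps > 1/2$, so $g$ is convex there and a pure second-derivative argument does not close the interval; one must combine it with the boundary value $g(1) = 1 - 2\ln 2 + 1/3 = 4/3 - 2\ln 2 \approx -0.053 < 0$ and the fact that $g$ has at most one interior local max (since $g'$ is monotone where $g''$ has fixed sign) to conclude $g \le 0$ on $[1/2,1]$. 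Since this is precisely the form quoted from Dubhashi--Panconesi~\cite{dubhashi2009concentration}, I would present the exponential-moment derivation in full and then either reference the textbook for these elementary one-variable inequalities or dispatch them with the monotonicity-plus-endpoint argument sketched above.
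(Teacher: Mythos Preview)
The paper does not give a proof of this theorem: it is quoted verbatim as Theorem~1.1 of Dubhashi--Panconesi and used as a black box. Your exponential-moment derivation is the standard textbook proof and is correct; in particular your final route for~(iii), taking $\lambda = \ln(t/\mu)$ to obtain $\Pr[X>t] \le (e\mu/t)^t \le 2^{-t}$ when $t > 2e\mu$, is clean (the earlier attempts with $\lambda = \ln 2$ can be deleted). The only place to tidy is the upper-tail analytic check $g(\eps) = \eps - (1+\eps)\ln(1+\eps) + \eps^2/3 \le 0$ on $(0,1)$: as you note, $g'' > 0$ for $\eps > 1/2$, so rather than a numerical verification you can argue that $g'(\eps) = \tfrac{2}{3}\eps - \ln(1+\eps)$ has a unique zero in $(0,1)$ (at $\eps=0$ it vanishes, then $g'<0$, then $g'$ increases after $\eps=1/2$ but $g'(1) = 2/3 - \ln 2 < 0$), so $g$ is decreasing on all of $(0,1)$ and hence $g \le g(0) = 0$.
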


We will need a concentration inequality which bounds large deviations taking into account some bad event $\mathcal{B}$. 
We start with the following variant of McDiarmid's inequality as given in~\cite{kutinextension}.

\begin{theorem}[Theorem~3.6 in \cite{kutinextension}, slightly simplified] \label{thm:weakconcentration}
Let $X_1,\ldots,X_m$ be independent random variables over $\Omega_1, \ldots, \Omega_m$. Let $X = (X_1,\ldots,X_m)$, $\Omega = \prod_{k=1}^m \Omega_k$ and let $f\colon \Omega \to \mathbb{R}$ be measurable with $0 \le f(\omega) \le M$ for all $\omega \in \Omega$. Let $\mathcal{B} \subseteq \Omega$ such that 
for some $c > 0$ and for all $\omega \in \overline{\mathcal{B}},\omega' \in \Omega$ that differ in only one component we have
$|f(\omega)-f(\omega')| \le c$.
Then for all $t>0$
\begin{equation}\label{eq:twosteps}
\Pr[\vert f(X)-\Ex[f(X)] \vert \ge t] \le 2e^{-\frac{t^2}{8mc^2}}+2\tfrac{mM}{c}\Pr[\mathcal{B}].
\end{equation}
\end{theorem}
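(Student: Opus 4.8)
The plan is to run the standard Doob (exposure) martingale on the coordinates $X_1,\dots,X_m$, but with a stopping-time truncation keyed to the conditional probability of $\mathcal{B}$, so that $\mathcal{B}$ only ever contributes an additive error term and the bounded-differences behaviour on $\overline{\mathcal{B}}$ drives an ordinary Azuma--Hoeffding estimate. Concretely, I would set $Z_0:=\Ex[f(X)]$ and $Z_k:=\Ex[f(X)\mid X_1,\dots,X_k]$, so $(Z_k)$ is a martingale with $Z_m=f(X)$, and then carefully bound its increments $D_k:=Z_k-Z_{k-1}$.

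The one delicate point is the increment bound, and it has to come out in the right form. Writing $Z_{k-1}$ by resampling the $k$-th coordinate, one has $D_k=\Ex[f(X_1,\dots,X_{k-1},X_k,X_{k+1},\dots,X_m)-f(X_1,\dots,X_{k-1},X_k',X_{k+1},\dots,X_m)]$, where in the first term the $k$-th coordinate is the \emph{realised} $X_k$, in the second it is a fresh copy $X_k'$, and the expectation is over $X_k',X_{k+1},\dots,X_m$. The two arguments differ in a single coordinate, so by the hypothesis the integrand is at most $c$ whenever the \emph{resampled} point lies outside $\mathcal{B}$, and at most $M$ always; hence it is bounded by $c+M\cdot\mathbf{1}[(X_1,\dots,X_{k-1},X_k',X_{k+1},\dots,X_m)\in\mathcal{B}]$. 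Since $X_k',X_{k+1},\dots,X_m$ together carry exactly the conditional law of $X$ given $X_1,\dots,X_{k-1}$, taking expectations gives
\[
 |D_k|\;\le\; c+M\cdot s_{k-1},\qquad\text{where } s_j:=\Pr\big[X\in\mathcal{B}\mid X_1,\dots,X_j\big].
\]
The crucial feature is that the bound involves $s_{k-1}$, which is already determined by $X_1,\dots,X_{k-1}$, rather than $s_k$; and $\Ex[s_j]=\Pr[\mathcal{B}]$ for every $j$. (It is this choice of charging the resampled point to $\mathcal{B}$ that will make the truncation fit together.)

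Now I would dispose of the trivial regime $\Pr[\mathcal{B}]\ge \tfrac{c}{2mM}$ (where the claimed bound is $\ge 1$), set the threshold $\theta:=\tfrac{c}{2M}$, and let $\tau$ be the first index $k\in\{1,\dots,m\}$ with $s_k>\theta$ (and $\tau:=m+1$ if none); as $s_k$ is $\sigma(X_1,\dots,X_k)$-measurable, $\tau$ is a stopping time. Consider the stopped martingale $Y_k:=Z_{\tau\wedge k}$. The event $\{\tau\ge k\}$ depends only on $X_1,\dots,X_{k-1}$ and forces $s_{k-1}\le\theta$, so its $k$-th increment has absolute value at most $c+M\theta\le 2c$ deterministically. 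Azuma--Hoeffding then yields $\Pr[\,|Y_m-\Ex[Y_m]|\ge t\,]\le 2\exp(-t^2/(2m(2c)^2))=2\exp(-t^2/(8mc^2))$. Since $\Ex[Y_m]=Z_0=\Ex[f(X)]$ and $Y_m=Z_m=f(X)$ on $\{\tau>m\}$, and since a union bound plus Markov's inequality give $\Pr[\tau\le m]\le\sum_{k=1}^m\Pr[s_k>\theta]\le\theta^{-1}\sum_{k=1}^m\Ex[s_k]=\tfrac{2mM}{c}\Pr[\mathcal{B}]$, combining the two estimates gives exactly
\[
 \Pr\big[\,|f(X)-\Ex[f(X)]|\ge t\,\big]\;\le\; 2e^{-t^2/(8mc^2)}+\tfrac{2mM}{c}\Pr[\mathcal{B}].
\]

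The genuine obstacle is not any one calculation but getting the two ingredients to mesh: one must bound $|D_k|$ by $c+M\,\Pr[\mathcal{B}\mid X_1,\dots,X_{k-1}]$ — the naive version with conditioning on $X_1,\dots,X_k$ would leave the increment at the stopping time uncontrolled — and then choose the stopping time through $s_k$ so that, on $\{\tau\ge k\}$, the quantity $s_{k-1}$ bounding $D_k$ is already below the threshold. Once that alignment is in place, the rest is the textbook Doob-martingale/Azuma argument together with a union bound, and no structure of the ground space $\Space$ or of the graph model enters at all.
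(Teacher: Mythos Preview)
The paper does not prove this statement: Theorem~\ref{thm:weakconcentration} is quoted from Kutin~\cite{kutinextension} as a black box and then used to derive the paper's own Theorem~\ref{thm:concentration}. So there is no ``paper's own proof'' to compare against.

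That said, your argument is correct and is essentially the proof one finds in Kutin's paper: expose coordinates one by one, bound the Doob increment $D_k$ by $c+M\,s_{k-1}$ with $s_j=\Pr[\mathcal{B}\mid X_1,\dots,X_j]$ (your observation that the resampled point carries the law conditioned on $X_1,\dots,X_{k-1}$, not on $X_1,\dots,X_k$, is exactly the point that makes the stopping time line up), stop when $s_k$ first exceeds $\theta=c/(2M)$, apply Azuma--Hoeffding to the stopped martingale, and control $\Pr[\tau\le m]$ by Markov and a union bound. Two minor remarks: at $k=1$ you implicitly use $s_0=\Pr[\mathcal{B}]\le\theta$, which is exactly what your ``trivial regime'' case $\Pr[\mathcal{B}]\ge c/(2mM)$ disposes of; and your per-step bound is actually $c+M\theta=3c/2$, so the constant $8$ in the exponent could be sharpened, but $2c$ is of course a valid upper bound and reproduces the stated inequality verbatim.
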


Our improved version of this theorem is the following, where in the Lipschitz condition both $\omega$ and $\omega'$ come from the good set $\overline{\mathcal B}$, but we have to consider changes of \emph{two} components at once. Recently, a similar inequality has been proven by Combes \cite{Combes15}, see also \cite{warnke2016method}.

\begin{theorem}
\label{thm:concentration}
Let $X_1,\ldots,X_m$ be independent random variables over $\Omega_1, \ldots, \Omega_m$. Moreover, let $X = (X_1, \ldots, X_m)$, $\Omega = \prod_{k=1}^m \Omega_k$, and let $f\colon \Omega \to \mathbb{R}$ be measurable such that $0 \le f(\omega) \le M$ for all $\omega \in \Omega$. Let $\mathcal{B} \subseteq \Omega$ such that 
for some $c > 0$ and for all $\omega \in \overline{\mathcal{B}},\omega' \in \overline{\mathcal{B}}$ that differ in at most \emph{two} components we have
\begin{equation}\label{eq:Lipschitz}
|f(\omega)-f(\omega')| \le c.
\end{equation}
Then for all $t \ge 2 M \Pr[\mathcal B]$
$$\Pr\big[\vert f(X)-\Ex[f(X)] \vert \ge t \big] \le 2e^{-\frac{t^2}{32mc^2}}+(2\tfrac{mM}{c} + 1)\Pr[\mathcal{B}].$$
\end{theorem}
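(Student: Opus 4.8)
## Proof Proposal for Theorem~\ref{thm:concentration}

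The plan is to \textbf{bootstrap} the improved inequality from the weaker Theorem~\ref{thm:weakconcentration} (the Kutin-style McDiarmid variant) by a ``two-step exposure'' argument. The obstacle with applying Theorem~\ref{thm:weakconcentration} directly is that it requires the bounded-differences condition to hold for $\omega \in \overline{\mathcal B}$ and \emph{arbitrary} $\omega' \in \Omega$, i.e.\ when we perturb a single coordinate we may fall out of the good set and lose control of $|f(\omega)-f(\omega')|$. The fix is to process the coordinates in \textbf{pairs}: group $X_1,\dots,X_m$ into $\lceil m/2\rceil$ super-variables $Y_j = (X_{2j-1},X_{2j})$ over product spaces $\Omega'_j = \Omega_{2j-1}\times\Omega_{2j}$, and view $f$ as a function $g$ of $Y = (Y_1,\dots,Y_{\lceil m/2\rceil})$. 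Changing one super-variable $Y_j$ changes at most two of the original coordinates, so the hypothesis \eqref{eq:Lipschitz} says exactly that $g$ satisfies a bounded-difference condition \emph{with constant $c$} whenever both $\omega,\omega' \in \overline{\mathcal B}$ — but now ``differing in one super-coordinate'' is the natural granularity.

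Concretely, first I would set $m' := \lceil m/2 \rceil$ and apply Theorem~\ref{thm:weakconcentration} to $g$ and the bad set $\mathcal B$ (re-interpreted as a subset of $\prod_j \Omega'_j$ in the obvious way), but with a twist to handle the asymmetry: Theorem~\ref{thm:weakconcentration} only needs $\omega\in\overline{\mathcal B}$ and $\omega'\in\Omega$, whereas we only control $\omega,\omega'$ both in $\overline{\mathcal B}$. To bridge this, define a modified function $\tilde g$ that agrees with $g$ on $\overline{\mathcal B}$ and is ``frozen''/clipped on $\mathcal B$ — for instance $\tilde g(\omega) := g(\pi(\omega))$ where $\pi$ projects any $\omega\in\mathcal B$ to a fixed nearby point of $\overline{\mathcal B}$, or more simply truncate $g$ so that single-super-coordinate changes are controlled everywhere at the cost of an extra additive $M\Pr[\mathcal B]$ term when passing between $f$ and $\tilde g$. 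Then $\tilde g$ satisfies the honest single-coordinate bounded-difference condition with constant $c$ on \emph{all} of $\prod_j\Omega'_j$ (or at least: on $\overline{\mathcal B}$ against arbitrary $\omega'$), $0 \le \tilde g \le M$, and $|\Ex[\tilde g(Y)] - \Ex[g(Y)]| \le M\Pr[\mathcal B]$, while $\Pr[g(Y)\ne \tilde g(Y)] \le \Pr[\mathcal B]$.

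Applying Theorem~\ref{thm:weakconcentration} to $\tilde g$ with $m'=\lceil m/2\rceil \le m$ super-variables and Lipschitz constant $c$ gives
$$
\Pr\big[\,|\tilde g(Y) - \Ex[\tilde g(Y)]| \ge t/2\,\big] \;\le\; 2e^{-\frac{(t/2)^2}{8 m' c^2}} + 2\tfrac{m' M}{c}\Pr[\mathcal B] \;\le\; 2e^{-\frac{t^2}{16 m c^2}} + \tfrac{m M}{c}\Pr[\mathcal B].
$$
Then I would undo the modifications: on the event $\{g(Y) = \tilde g(Y)\}$ (complement has probability $\le\Pr[\mathcal B]$) we have $f(X) = \tilde g(Y)$, and $|\Ex[f(X)] - \Ex[\tilde g(Y)]| \le M\Pr[\mathcal B]$, so for $t \ge 2M\Pr[\mathcal B]$ the mean-shift is absorbed into a slightly larger deviation threshold (say $t/2$ versus $t$, which is why the exponent constant degrades from $16$ to $32$). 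Collecting terms via a union bound yields
$$
\Pr\big[\,|f(X) - \Ex[f(X)]| \ge t\,\big] \;\le\; 2e^{-\frac{t^2}{32 m c^2}} + \Big(2\tfrac{m M}{c} + 1\Big)\Pr[\mathcal B],
$$
matching the claimed bound. The main obstacle — and the step deserving the most care — is the \textbf{truncation/projection construction} making the one-sided Lipschitz hypothesis of Theorem~\ref{thm:weakconcentration} applicable: one must verify that $\tilde g$ genuinely has single-super-coordinate differences bounded by $c$ for $\omega\in\overline{\mathcal B}$ against \emph{all} $\omega'$, and simultaneously track the three small error contributions ($\Pr[g\ne\tilde g]$, the mean shift, and the pairing overhead in $m'$) so that they combine into exactly the stated constant $(2mM/c + 1)$ and the factor-of-$2$ losses land on the exponent as $32mc^2$. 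Everything else is bookkeeping on the union bound and the choice $t\ge 2M\Pr[\mathcal B]$.
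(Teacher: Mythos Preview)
Your pairing idea is appealing but has a real gap at exactly the step you flagged as ``the main obstacle''. Suppose $\omega \in \overline{\mathcal B}$ and $\omega' \in \mathcal B$ differ in one super-coordinate, i.e.\ in the pair of original coordinates $(2j-1,2j)$. You want $|\tilde g(\omega) - \tilde g(\omega')| \le c$, where $\tilde g(\omega') = f(\pi(\omega'))$ for some fixed projection $\pi(\omega') \in \overline{\mathcal B}$. For the two-component hypothesis~\eqref{eq:Lipschitz} to apply to $|f(\omega)-f(\pi(\omega'))|$, you need $\omega$ and $\pi(\omega')$ to differ in at most \emph{two original} coordinates. But $\pi$ is fixed independently of $\omega$: if $\pi(\omega')$ is a one-super-coordinate neighbor of $\omega'$ in some super-coordinate $j' \ne j$, then $\omega$ and $\pi(\omega')$ differ in the four original coordinates $\{2j-1,2j,2j'-1,2j'\}$, and the hypothesis gives you nothing. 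The same obstruction arises if you project to a one-original-coordinate neighbor. No simple ``clipping'' (e.g.\ replacing $g$ by a constant on $\mathcal B$) avoids this, since then $|\tilde g(\omega)-\tilde g(\omega')|$ can be as large as $M$.

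The paper's proof sidesteps this entirely by \emph{not pairing}. Work with the original $m$ coordinates and define $f'$ to agree with $f$ on $\overline{\mathcal B}$, and for $\omega' \in \mathcal B$ that has some one-coordinate neighbor in $\overline{\mathcal B}$, set $f'(\omega') := f(\omega'')$ for one such neighbor $\omega''$. Now if $\omega \in \overline{\mathcal B}$ and $\omega' \in \mathcal B$ differ in one original coordinate, then $\omega'$ certainly has a good neighbor (namely $\omega$), so $f'(\omega') = f(\omega'')$ with $\omega'' \in \overline{\mathcal B}$ a one-coordinate neighbor of $\omega'$. Since $\omega$ and $\omega''$ are both one-coordinate neighbors of $\omega'$, they differ in at most \emph{two} coordinates --- and this is precisely why the hypothesis asks for two-component changes. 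So $|f'(\omega)-f'(\omega')| = |f(\omega)-f(\omega'')| \le c$, and $f'$ satisfies the one-sided condition of Theorem~\ref{thm:weakconcentration} with the same $m$ and $c$. Applying that theorem at threshold $t/2$ and absorbing the mean shift $|\Ex f - \Ex f'| \le M\Pr[\mathcal B] \le t/2$ gives the stated bound with $32mc^2$ and the additive $(2mM/c+1)\Pr[\mathcal B]$. Your ``projection'' intuition is right; it just needs to operate at the original-coordinate level, where the two-component hypothesis is exactly calibrated to close the triangle.
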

\begin{proof}
  We say that $\omega, \omega' \in \Omega$ are \emph{neighbors} if they differ in exactly one component $\Omega_k$.
  Given a function $f$ as in the statement, we define a function $f'$ as follows. 
  On $\overline{\mathcal B}$ the functions $f$ and $f'$ coincide. 
  Let $\omega \in \mathcal B$. If $\omega$ has a neighbor $\omega' \in \overline{\mathcal B}$, then choose any such $\omega'$ and set $f'(\omega) := f(\omega')$. Otherwise set $f'(\omega) := f(\omega)$. 
  
  The constructed function $f'$ satisfies the precondition of Theorem~\ref{thm:weakconcentration}. Indeed, let $\omega \in \overline{\mathcal B}$ and $\omega' \in \Omega$ differ in only one position. If $\omega' \in \overline{\mathcal B}$, then since $f'(\omega) = f(\omega)$ and $f'(\omega') = f(\omega')$, and by the assumption on $f$, we obtain $|f'(\omega) - f'(\omega')| \le c$. Otherwise we have $\omega' \in \mathcal B$, and since $\omega'$ has at least one neighbor in $\overline{\mathcal B}$, namely $\omega$, we have $f'(\omega') = f(\omega'')$ for some neighbor $\omega'' \in \overline{\mathcal B}$ of $\omega'$. Note that both $\omega$ and $\omega''$ are in $\overline{\mathcal B}$, and as they are both neighbors of $\omega'$ they differ in at most two components. Thus, by the assumption on $f$ we have
  $ |f'(\omega) - f'(\omega')| = |f(\omega) - f(\omega'')| \le c$. Hence, we can use Theorem~\ref{thm:weakconcentration} on $f'$ and obtain concentration of $f'(X)$. Specifically, since $\Pr[f(X) \ne f'(X)] \le \Pr[\mathcal B]$, and thus $|\Ex[f(X)] - \Ex[f'(X)]| \le M \Pr[\mathcal B]$, we obtain
  \begin{align*} 
    \Pr[|f(X) - \Ex[f(X)]| \ge t] &\le \Pr[\mathcal B] + \Pr[|f'(X) - \Ex[f'(X)]| \ge t - M \Pr[\mathcal B]]  \\
    &\le \Pr[\mathcal B] + \Pr[|f'(X) - \Ex[f'(X)]| \ge t/2],
  \end{align*}
  since $t \ge 2 M \Pr[\mathcal B]$,
  which together with Theorem~\ref{thm:weakconcentration} proves the claim.
\end{proof}

\section{Basic Properties} \label{sec:basic}
In this section, we prove some basic properties of the considered random graph model which repeatedly occur in our proofs. In particular we calculate the expected degree of a vertex and the marginal probability that an edge between two vertices with given weights is present. Let us start by calculating the partial weight sums $\W_{\le w}$ and $\W_{\ge w}$. The values of these sums will follow from the assumptions on power-law weights in Section~\ref{sec:model}.

\begin{lemma} \label{lem:totalweight}
The total weight satisfies $\W=\Theta(n)$. Moreover, for all sufficiently small $\eta > 0$, 
\begin{enumerate}[(i)]
\item $\W_{\ge w} = O( n w^{2-\beta+\eta})$ for all $w \ge \wmin$,
\item $\W_{\ge w} = \Omega( n w^{2-\beta-\eta})$ for all $\wmin \le w \le \barw$,
\item $\W_{\le w} = O(n)$ for all $w$, and
\item $\W_{\le w} = \Omega(n)$ for all $w=\omega(1)$.
\end{enumerate}
\end{lemma}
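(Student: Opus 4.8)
The plan is to derive all four estimates directly from the power-law assumptions (PL1) and (PL2) by a standard ``layer-cake'' / dyadic decomposition of the weight range, being careful about which direction of the bound requires the upper cutoff $\barw$.

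\textbf{Total weight and upper bounds (items (i) and (iii)).} First I would write $\W_{\ge w} = \sum_{v \in V_{\ge w}} \w v$ and split the vertices with weight at least $w$ into dyadic blocks $\{v : 2^i w \le \w v < 2^{i+1} w\}$ for $i \ge 0$. By the \emph{second} inequality in (PL2) — which holds for \emph{all} $w \ge \wmin$ — the number of vertices in the $i$-th block is at most $\#\{v : \w v \ge 2^i w\} \le c_2 n (2^i w)^{-(\beta-1-\eta)}$, and each contributes weight at most $2^{i+1} w$. Summing the geometric-type series $\sum_i 2^{i+1} w \cdot c_2 n (2^i w)^{-(\beta-1-\eta)} = O(n w^{2-\beta+\eta})$, where convergence uses $\beta - 1 - \eta > 1$, i.e.\ $\eta$ small enough (here $\beta > 2$ is what matters). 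This gives (i). For (iii), note $\W_{\le w} \le \W = \W_{\ge \wmin} = O(n \wmin^{2-\beta+\eta}) = O(n)$ by (i) applied at $w = \wmin = \Theta(1)$; since trivially $\W_{\le w} \le \W$, item (iii) follows, and the upper bound $\W = O(n)$ is done.

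\textbf{Lower bounds (items (ii) and (iv)).} For (ii), I would use the \emph{first} inequality in (PL2), valid only for $\wmin \le w \le \barw$: the set $V_{\ge w}$ has at least $c_1 n w^{-(\beta-1+\eta)}$ vertices, each of weight $\ge w$, so $\W_{\ge w} \ge w \cdot c_1 n w^{-(\beta-1+\eta)} = \Omega(n w^{2-\beta-\eta})$. (One subtlety: the first inequality in (PL2) is stated for the parameter $\eta$ tied to the chosen constants; since the claim is ``for all sufficiently small $\eta$'', I pick $\eta$ first, get $c_1$, and proceed.) For the lower bound $\W = \Omega(n)$: by (PL1) every vertex has weight $\ge \wmin = \Omega(1)$, so $\W \ge n \wmin = \Omega(n)$. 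Finally (iv): for $w = \omega(1)$ we have $\W_{\le w} = \W - \W_{> w} \ge \W - \W_{\ge w}$; since $\W = \Theta(n)$ and, by (i), $\W_{\ge w} = O(n w^{2-\beta+\eta}) = o(n)$ because $w \to \infty$ and $2 - \beta + \eta < 0$ for small $\eta$ (using $\beta > 2$), we get $\W_{\le w} = \W - o(n) = \Omega(n)$.

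\textbf{Main obstacle.} The calculations themselves are routine geometric sums; the only real care needed is bookkeeping with the parameter $\eta$ and the cutoff $\barw$. Specifically, the upper bounds (i), (iii) and the total-weight bounds use the unrestricted tail bound and hold for all $w$, while the lower bound (ii) genuinely needs $w \le \barw$ because (PL2)'s matching lower bound on the count is only assumed there; this asymmetry must be tracked precisely. A second point of attention is making sure the quantifier ``for all sufficiently small $\eta > 0$'' is handled correctly — one should fix $\eta$, invoke (PL2) to get constants $c_1, c_2$ depending on $\eta$, and only then perform the summation, noting that the series converges precisely because $\beta - 1 \pm \eta > 1$ and the exponent $2 - \beta \pm \eta$ is negative, both of which hold for $\eta$ small since $\beta \in (2,3)$.
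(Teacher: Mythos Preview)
Your proof is correct and essentially the same layer-cake argument as the paper's: the paper writes $\W_{\ge w} = w\,|V_{\ge w}| + \int_w^{\wmax} |V_{\ge x}|\,dx$ via Fubini and plugs in (PL2), whereas you discretize the same identity into dyadic shells; both are standard and equivalent. Your shortcuts for (ii), (iv), and $\W \ge n\,\wmin = \Omega(n)$ are slightly more direct than the paper's version, which re-applies the integral formula in each case, but nothing substantive differs.
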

\begin{proof}
Let $w_1\ge w_0 \ge 0$ be two fixed weights. We start by summing up all vertex-weights between $w_0$ and $w_1$. By Fubini's theorem, we can rewrite this sum as 
\begin{equation}\label{eq:weightsum1}
\sum_{v \in V, w_0 \le \w v \le w_1} \w v = \int_0^{\infty}|V_{\ge \max\{w_0,x\}} \setminus V_{> w_1}| dx = w_0 \cdot |V_{\ge w_0}| + \int_{w_0}^{w_1} |V_{\ge x}| dx- w_1 \cdot |V_{>w_1}| .
\end{equation}
We start with (i) and apply \eqref{eq:weightsum1} with $w_0=w$ and $w_1=\wmax$. Then, the set $V_{>w_1}$ is empty, and we have
$\W_{\ge w}=w \cdot |V_{\ge w}| + \int_{w}^{\wmax} |V_{\ge x}|dx$, thus the assumption (PL2) implies that $\W_{\ge w}$ equals
\[|V_{\ge w}| \cdot w + \int_w^{\infty} |V_{\ge x}|dx = O\Big(nw^{2-\beta+\eta}+\int_w^{\infty} nx^{1-\beta+\eta}dx\Big)=O\Big(n w^{2-\beta+\eta}\Big).\]
For (ii) we similarly obtain
$$\W_{\ge w} = \Omega\Big(nw^{2-\beta-\eta}+\int_w^{\wmax}nx^{1-\beta-\eta}dx\Big)=\Omega\Big(n w^{2-\beta-\eta}\Big).$$
For (iii), we see that if $w < \wmin$, then clearly $\W_{\le w}=0$. Otherwise, Equation~\eqref{eq:weightsum1} with $w_0=\wmin$ and $w_1=w$ implies
$$\W_{\le w}=|V_{\ge \wmin}| \cdot \wmin  + \int_{\wmin}^w |V_{\ge x}| dx - |V_{> w}| \cdot w  \le n\wmin + O\Big(\int_{\wmin}^w nx^{1-\beta+\eta} dx\Big)=O(n),$$
and for (iv) we obtain
$$\W_{\le w} \ge \int_{\wmin}^w |V_{\ge x}| dx- |V_{> w}| \cdot w = \Omega\Big(\int_{\wmin}^w nx^{1-\beta-\eta} dx\Big)-O\Big(nw^{2-\beta+\eta}\Big) =\Omega(n)-o(n)=\Omega(n).$$
In particular, with the choice $w=\wmax$ the property $\W=\Theta(n)$ follows from (iii) and (iv).
\end{proof}

Next we consider the marginal edge probability $\Pr[u \sim v]$ of two vertices $u$, $v$ with weights $\w{u}$, $\w{v}$. For a fixed position $\x u \in \Space$, we already know this probability by (EP1).

\begin{lemma} \label{lem:marginal}
  Let $u \in [n]$ and let $\x u \in \Space$ be any fixed position. Then all edges $\{u,v\}$, $u \ne v$, are independently present with probability
\[\Pr[u \sim v \mid \x{u}] = \Theta\big(\Pr[u \sim v]\big) =   \Theta\Big(\min\Big\{1,\frac{\w{u} \w{v}}{\W} \Big\}\Big).
  \end{equation*}
\end{lemma}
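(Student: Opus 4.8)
The statement has two parts: first, that conditioning on $\x u$ does not change the edge probability by more than a constant factor, i.e.\ $\Pr[u \sim v \mid \x u] = \Theta(\Pr[u \sim v])$; second, that this common value is $\Theta(\min\{1, \w u \w v / \W\})$. The plan is to derive both from condition (EP1), which directly gives the marginal edge probability when one position is fixed and the other is random.

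First I would note that since positions are drawn independently, we have
\[
\Pr[u \sim v \mid \x u] = \Ex_{\x v}\big[ p_{uv}(\x u, \x v) \mid \x u \big],
\]
and by (EP1) this equals $\Theta(\min\{1, \w u \w v/\W\})$ for \emph{every} fixed $\x u \in \Space$, with constants uniform in $\x u$ (the constants in (EP1) depend only on $n, \Space, \w{}$, not on the particular position). This immediately establishes the last equality in the lemma, for any fixed $\x u$. The independence of the edges given $\x u$ is inherited directly from the model definition: once all positions are fixed, edges are present independently, and fixing only $\x u$ leaves the remaining randomness (the positions $\x v$ and the independent coin flips) intact, so the events $\{u \sim v\}$ for distinct $v$ remain independent after conditioning on $\x u$ — each depends on its own fresh position $\x v$ and coin flip.

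For the middle equality $\Pr[u \sim v \mid \x u] = \Theta(\Pr[u \sim v])$, I would integrate over $\x u$: by the tower property,
\[
\Pr[u \sim v] = \Ex_{\x u}\big[ \Pr[u \sim v \mid \x u] \big],
\]
and since the integrand is $\Theta(\min\{1,\w u \w v/\W\})$ pointwise with constants independent of $\x u$, the average is also $\Theta(\min\{1,\w u \w v/\W\})$. Hence both $\Pr[u\sim v \mid \x u]$ and $\Pr[u \sim v]$ are $\Theta$ of the same quantity, which gives the claimed relation.

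I do not anticipate a serious obstacle here — the lemma is essentially a restatement of (EP1) combined with the independence structure of the model. The only point requiring a little care is making explicit that the implied constants in (EP1) are uniform over the fixed position $\x u$ (so that they survive both the conditioning and the subsequent averaging); this is already built into the phrasing of (EP1), which asserts the bound for \emph{any} $u,v$ and \emph{any} fixed $\x u$ with a single $\Theta$. A second minor point is to spell out why conditioning on one endpoint's position preserves the mutual independence of the edges incident to $u$, which follows since distinct edges $\{u,v\}$ and $\{u,v'\}$ depend on disjoint sets of remaining random inputs ($\x v$ and its coin flip versus $\x{v'}$ and its coin flip).
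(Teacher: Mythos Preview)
Your proposal is correct and follows essentially the same route as the paper: apply (EP1) to obtain $\Pr[u\sim v\mid \x u]=\Theta(\min\{1,\w u\w v/\W\})$, average over $\x u$ to get the unconditional probability, and note that conditional on $\x u$ the events $\{u\sim v\}$ depend on disjoint remaining randomness. Your write-up is in fact a bit more careful than the paper's in spelling out the uniformity of the $\Theta$-constants and the independence argument, but there is no substantive difference.
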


\begin{proof}Let $u,v \in [n]$. Then by (EP1), it follows directly
$$\Pr[u \sim v] = \Ex_{\x u}\big[\Pr_{\x v}[u \sim v \mid \x{u}]\big]=\Ex_{\x u}\Big[\Theta\Big(\min\Big\{1,\frac{\w{u}\w{v}}{\W}\Big\}\Big)\Big]  =\Theta\Big(\min\Big\{1,\frac{\w{u} \w{v}}{\W} \Big\}\Big).$$

Furthermore, for every fixed $\x u \in \Space$ the edges incident to $u$ are independently present with probability $\Pr_{\x v}[u \sim v \mid \x{u}]$, as the event ``$u \sim v$'' only depends on $\x{v}$, and an independent random choice for the edge $uv$ (after fixing $\x u$).
\end{proof}

The following lemma shows that the expected degree of a vertex is of the same order as the weight of the vertex, thus we can interpret a given weight sequence $\w{}$ as a sequence of expected degrees.

\begin{lemma}\label{lem:expecteddegree}
For any $v \in [n]$ we have $\Ex[\deg(v)]=\Theta(\w{v})$.
\end{lemma}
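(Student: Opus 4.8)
The plan is to compute $\Ex[\deg(v)] = \sum_{u \ne v} \Pr[u \sim v]$ by invoking Lemma~\ref{lem:marginal}, which already gives $\Pr[u \sim v] = \Theta(\min\{1, \w u \w v / \W\})$. Thus it suffices to show $\sum_{u \ne v} \min\{1, \w u \w v / \W\} = \Theta(\w v)$. I would split the sum according to whether $\w u \w v \ge \W$ or not, i.e.\ at the threshold weight $w^* := \W/\w v$. For vertices with $\w u < w^*$ the summand is $\w u \w v / \W$, contributing $\tfrac{\w v}{\W} \W_{\le w^*}$; for vertices with $\w u \ge w^*$ the summand is $1$, contributing $|V_{\ge w^*}|$.

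For the upper bound, the first part is at most $\tfrac{\w v}{\W} \W_{\le \wmax} = \tfrac{\w v}{\W} \cdot O(\W) = O(\w v)$ by Lemma~\ref{lem:totalweight}(iii) (or simply $\W_{\le w^*} \le \W$). The second part, $|V_{\ge w^*}|$, is $O(n / (w^*)^{\beta - 1 - \eta})$ by (PL2); since $w^* = \W/\w v = \Theta(n)/\w v = \Theta(n/\w v)$ and $\beta - 1 - \eta > 1$ for small $\eta$, this is $O(n \cdot (\w v / n)^{\beta-1-\eta}) = O(\w v \cdot (\w v/n)^{\beta - 2 - \eta}) = O(\w v)$, using $\w v \le \wmax = n^{O(1)}$ but more precisely $\w v/n \le 1$ so the extra factor is at most $1$. (One must be slightly careful that $w^* \ge \wmin$ so that (PL2) applies; if $w^* < \wmin$ then $\w v = \Theta(n)$ and the claim is immediate since $\deg(v) \le n - 1$.)

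For the lower bound it suffices to exhibit enough contribution. The cleanest route is the first part: $\tfrac{\w v}{\W}\W_{\le w^*} = \Omega(\w v)$ provided $\W_{\le w^*} = \Omega(\W) = \Omega(n)$, which by Lemma~\ref{lem:totalweight}(iv) holds as long as $w^* = \omega(1)$. Since $\w v \le \wmax \le (c_2 n)^{1/(\beta - 1 - \eta)} = n^{\Omega(1)}$ while $\W = \Theta(n)$, we indeed get $w^* = \W/\w v = n^{1 - o(1)} = \omega(1)$, so this case always applies. Hence $\Ex[\deg(v)] = \Omega(\w v)$ and combined with the upper bound $\Ex[\deg(v)] = \Theta(\w v)$.

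The main obstacle is bookkeeping around the boundary cases and the $\eta$'s: one must check that the threshold $w^*$ really lies in the ranges where (PL2) and Lemma~\ref{lem:totalweight}'s various parts are valid (in particular that $w^* \ge \wmin$ for the upper-bound count of $|V_{\ge w^*}|$, and that the $\eta$ in the power-law bound can be chosen small enough that $\beta - 1 - \eta > 1$, i.e.\ $\eta < \beta - 2$), and that the constant-factor slack from Lemma~\ref{lem:marginal} is absorbed cleanly. None of this is deep, but the ``$\Theta$'' hides a handful of case distinctions driven by how large $\w v$ is relative to $\wmax$ and $\W$.
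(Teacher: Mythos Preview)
Your proof is correct and, for the lower bound, identical to the paper's (same threshold $w^* = \W/\w v$, same appeal to Lemma~\ref{lem:totalweight}(iv) via $w^* = \omega(1)$). For the upper bound the paper is simpler: instead of splitting at $w^*$ and invoking (PL2) to control $|V_{\ge w^*}|$, it just uses $\min\{1,\w u \w v/\W\} \le \w u \w v/\W$ for every $u$, so the whole sum collapses to $\tfrac{\w v}{\W}\sum_u \w u = \w v$ in one line; your extra work with (PL2) and the boundary case $w^* < \wmin$ is valid but unnecessary (and that boundary case in fact never arises, since $\w v \le \wmax = o(n)$ already forces $w^* = \omega(1) \ge \wmin$).
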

\begin{proof}
Let $v$ be any vertex. We estimate the expected degree both from below and above. By Lemma~\ref{lem:marginal}, the expected degree of $v$ is at most
$$
\sum_{u \neq v} \Pr[u \sim v] = \Theta\Big(\sum_{u \neq v} \min\Big\{1,\frac{\w{u}\w{v}}{\W}\Big\}\Big)=O\Big(\sum_{u \in V} \frac{\w{u}\w{v}}{\W}\Big) = O\Big(\frac{\w{v}}{\W}\sum_{u\in V} \w{u}\Big) = O(\w{v}).$$
For the lower bound, $\Pr[u \sim v]=\Theta(\frac{\w{u}\w{v}}{\W})$ holds for all $\w{u} \leq \frac{\W}{\w{v}}$. We set $w' := \frac{\W}{\w{v}}$ and observe that $w'=\omega(1)$. Using Lemma~\ref{lem:totalweight}, we obtain
$$\Ex[\deg(v)]  \ge \sum_{u \neq v, u \in V_{\le w'}} \Pr[u \sim v] = \Omega\Big(\frac{\w{v}}{\W} \W_{\leq w'}\Big) = \Omega(\w{v}).$$
\end{proof}

As the expected degree of a vertex is roughly the same as its weight, it is no surprise that whp the degrees of all vertices with sufficiently large weight are concentrated around the expected value. The following lemma gives a precise statement.

\begin{lemma} \label{lem:largevertices}
The following properties hold whp.
\begin{enumerate}[(i)]
\item $\deg(v) = O(\w{v} + \log^2 n)$ for all $v \in [n]$.
\item $\deg(v)= (1+o(1))\Ex[\deg(v)]= \Theta(\w{v})$ for all $v \in V_{\ge \omega(\log^2 n)}$.
\item $\sum_{v \in V_{\ge w}} \deg(v) = \Theta(\W_{\ge w})$ for all $w=\omega(\log^2 n)$.
\end{enumerate}
\end{lemma}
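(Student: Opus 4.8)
The plan is to prove the three statements by combining the marginal edge probabilities from Lemma~\ref{lem:marginal} with the Chernoff bounds of Theorem~\ref{thm:dubhashichernoff}, then taking a union bound over all $n$ vertices. The key point is that, after fixing the position $\x u$ of a vertex $u$, Lemma~\ref{lem:marginal} tells us that the edges incident to $u$ are present \emph{independently} with probabilities $\Theta(\min\{1,\w u\w v/\W\})$; hence $\deg(u)$ is a sum of independent indicator variables and the Chernoff-Hoeffding machinery applies directly (conditionally on $\x u$, and then the bound holds unconditionally since it holds for every $\x u$).

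For (i), I would split into two cases according to the size of $\Ex[\deg(v)]=\Theta(\w v)$. If $\w v = \Omega(\log^2 n)$, apply part~(i) of Theorem~\ref{thm:dubhashichernoff} with a constant $\eps$ to get $\deg(v) = O(\w v)$ with probability $1 - \exp(-\Omega(\w v)) = 1 - n^{-\omega(1)}$. If $\w v = o(\log^2 n)$, then $\Ex[\deg(v)] = o(\log^2 n)$, and applying part~(iii) of Theorem~\ref{thm:dubhashichernoff} with $t = \Theta(\log^2 n) > 2e\,\Ex[\deg(v)]$ gives $\deg(v) = O(\log^2 n)$ with probability $1 - 2^{-\Omega(\log^2 n)} = 1 - n^{-\omega(1)}$. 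A union bound over all $v\in[n]$ costs only a factor $n$ and preserves the whp guarantee. For (ii), when $\w v = \omega(\log^2 n)$ we have $\Ex[\deg(v)] = \omega(\log^2 n)$, so applying parts~(i) and~(ii) of Theorem~\ref{thm:dubhashichernoff} with $\eps = \eps(n) = \Theta(\sqrt{\log n/\w v}) \cdot \omega(1) \to 0$ slowly enough that $\eps^2 \Ex[\deg(v)] = \omega(\log n)$ yields $\deg(v) = (1\pm o(1))\Ex[\deg(v)]$ with probability $1 - n^{-\omega(1)}$; combining with Lemma~\ref{lem:expecteddegree} gives $\deg(v) = \Theta(\w v)$, and again a union bound over all $n$ vertices finishes the claim.

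For (iii), fix $w = \omega(\log^2 n)$. By part~(ii) (which holds whp for all $v$ simultaneously), every $v \in V_{\ge w}$ has $\deg(v) = (1+o(1))\Ex[\deg(v)]$, and summing gives $\sum_{v\in V_{\ge w}}\deg(v) = (1+o(1))\sum_{v\in V_{\ge w}}\Ex[\deg(v)]$. It then remains to show $\sum_{v\in V_{\ge w}}\Ex[\deg(v)] = \Theta(\W_{\ge w})$. The upper bound is immediate from $\Ex[\deg(v)] = O(\w v)$ (Lemma~\ref{lem:expecteddegree}). For the lower bound one uses $\Ex[\deg(v)] = \Omega(\w v)$ from Lemma~\ref{lem:expecteddegree} directly, giving $\sum_{v\in V_{\ge w}}\Ex[\deg(v)] = \Omega(\W_{\ge w})$. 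One subtlety: there are only countably many ``threshold'' values $w$ of interest up to the coarseness needed, but since the statement must hold for \emph{all} $w = \omega(\log^2 n)$ simultaneously, I would note that $\sum_{v\in V_{\ge w}}\deg(v)$ is a step function of $w$ changing only at the $n$ distinct weight values, so it suffices to verify the bound at those $n$ breakpoints, which is covered by the whp statement of part~(ii).

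The main obstacle is a minor one: choosing the rate at which $\eps \to 0$ in part~(ii) so that the deviation probability is $n^{-\omega(1)}$ while still $\eps = o(1)$ — this works precisely because $\w v = \omega(\log^2 n)$ gives enough slack, i.e.\ $\Ex[\deg(v)]/\log n \to \infty$, so we may take for instance $\eps = (\log n/\Ex[\deg(v)])^{1/3}$. Everything else is a routine union bound over $n$ events, each failing with probability $n^{-\omega(1)}$.
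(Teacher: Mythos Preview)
Your proposal is correct and follows essentially the same approach as the paper: condition on the position $\x v$ so that $\deg(v)$ becomes a sum of independent indicators with mean $\Theta(\w v)$, apply the appropriate Chernoff bound (part~(iii) for small weights, parts~(i)--(ii) for large weights), and take a union bound over all $n$ vertices. The paper's choice for~(ii) is $\eps = \log n/\sqrt{\mu}$, which is one concrete instance of your family of choices, and for~(iii) the paper simply sums $\deg(v) = \Theta(\w v)$ over $V_{\ge w}$ without making your observation about step functions explicit, but your remark there is a harmless extra justification.
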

\begin{proof}
Let $v \in V$ with fixed position $\x{v} \in \Space$ and let $\mu :=  \Ex[\deg(v) \mid \x{v}]=\Theta(\w{v})$. By definition of the model, conditioned on the position $\x{v}$ the degree of $v$ is a sum of independent Bernoulli random variables. By Lemma~\ref{lem:expecteddegree} there exists a constant $C$ such that $2e\mu < C \log^2 n$ holds for all vertices $v \in V_{\le \log^2 n}$ and all positions $\x{v} \in \Space$. Thus, if $v \in V_{\le \log^2 n}$, we apply a Chernoff bound (Theorem~\ref{thm:dubhashichernoff}.(iii)), and obtain $\Pr[\deg(v) > C \log^2 n] \le 2^{-C \log^2 n}=n^{-\omega(1)}$. If $v \in V_{\ge \log^2 n}$, we similarly obtain $\Pr[\deg(v) > 3\mu/2] \le e^{-\Theta(\mu)}=n^{-\omega(1)}$ and $\mu=\Theta(\w{v})$ by Lemma~\ref{lem:expecteddegree}. Then (i) follows by applying a union bound over all vertices.

For (ii), let $v \in V$ such that $\w{v}=\omega(\log^2 n)$, let $\mu$ be as defined above and put $\eps=\frac{\log n}{\sqrt{\mu}}=o(1)$. Thus by the Chernoff bound, 
\[\Pr\big[|\deg(v)-\mu| > \varepsilon \cdot\mu\big] \le e^{-\Theta(\varepsilon^2 \cdot\mu)}=n^{-\omega(1)},\]
and we obtain (ii) by applying Lemma~\ref{lem:expecteddegree} and a union bound over all such vertices. Finally, from (ii) we infer $\sum_{v \in V_{\ge w}} \deg(v) = \sum_{v \in V_{\ge w}} \Theta(\w{v}) = \Theta(\W_{\ge w})$ for all $w = \omega(\log^2 n)$, which shows (iii).
\end{proof}

We conclude this section by proving that if we sample the weights randomly from an appropriate distribution, then almost surely the resulting weights satisfy our conditions on power-law weights.

\begin{lemma} \label{lem:sampleweights}
Let $\wmin$ be a strictly positive constant, let $F=F_n: \R \rightarrow [0,1]$ be non-decreasing such that $F(z)=0$ for all $z \le \wmin$, and $F(z)=1-\Theta(z^{1-\beta})$ for all $z \ge \wmin$. Suppose that for every vertex $v \in [n]$, we choose the weight $\w{v}$ independently according to the cumulative probability distribution $F$. Then with $\barw = (n/\log^2 n)^{1/(\beta-1)}$, the resulting weight vector $\w{}$ satisfies deterministically (PL1), whp the lower bound of (PL2), and for all $\eta=\eta(n)=\omega(\log \log n/\log n)$ with probability $1-n^{-\Omega(\eta)}$ the upper bound of (PL2) .
\end{lemma}

In particular, this lemma proves that for all small constants $\eta>0$, with probability $1-n^{-\Omega(1)}$ (PL1) and (PL2) are fulfilled for weights sampled according to $F(\cdot)$. Moreover, it follows that any property which holds with probability $1-q$ for weights satisfying (PL1) and (PL2) also holds in a model of sampled weights with probability at least $1-q-n^{-\Omega(1)}$. However, we claim without proof that all our results hold with the original probability in a model of sampled weights.

\begin{proof}[Proof of Lemma~\ref{lem:sampleweights}]

Condition (PL1) is fulfilled by definition of $F$, and we only need to prove (PL2). 
For all $z>\wmin$, denote by $Y_z$ the number of vertices with weight at least $z$ and observe that
\begin{equation} \label{eq:expsampleweights}
\Ex[Y_z] = n (1-F(z))= \Theta(n z^{1-\beta}).
\end{equation}

Let us first consider the case $z \in [\wmin,\barw]$. For all $z$ in this range we have $\Ex[Y_z] = \Omega(\log^2 n)$, so for any $z \in [\wmin,\barw]$ the Chernoff bound (Theorem~\ref{thm:dubhashichernoff}.(i) and (ii)) yields
$$ \Pr[|Y_z-\Ex[Y_z]| \ge 0.5 \Ex[Y_z]] \le \exp(-\Omega(\Ex[Y_z])) = n^{-\Omega(\log n)}. $$
Note that $Y_z$ is always an integer and at most $n$. Clearly, $\Ex[Y_z]$ is decreasing.
Hence, we can assume without loss of generality that either $z \in \{\wmin,\barw\}$ or $0.5 \Ex[Y_z]$ or $1.5 \Ex[Y_z]$ is an integer, because if 
$0.5 \Ex[Y_z]< Y_z < 1.5 \Ex[Y_z]$
holds for these values of~$z$, then it \emph{must} hold for all other $z \in [\wmin,\barw]$ as well. Thus, we can restrict $z$ to a set of size $O(n)$. This allows us to take a union bound, and it follows that with probability $1-n^{-\omega(1)}$, $Y_z=\Theta(nz^{1-\beta})$ holds for all $z \in [\wmin,\barw]$. In this case, all $z$ in our range satisfy both the lower and upper bound of (PL2) even for $\eta=0$. In particular, this proves that with probability $1-n^{-\omega(1)}$, the lower bound of (PL2) holds for all $\eta\ge 0$.

It only remains the upper bound of (PL2) for $z\ge \barw$. Let $\eta=\eta(n)=\omega(\log\log n/\log n)$ and $z \ge \barw$. By Markov's inequality and \eqref{eq:expsampleweights},
\begin{equation}\label{eq:Markovsampleweights}
 \Pr[Y_z \ge nz^{1-\beta+\eta}] \le \Pr[Y_z \ge\Omega(z^{\eta})\Ex[Y_z]] \le O(z^{-\eta})  \le n^{-\Omega(\eta)}.
\end{equation}

By the same argument as above, we can restrict $z$ to $\barw$ and values where the intended bound $\Omega(z^{\eta})\Ex[Y_z]$ is integral, which happens only for $O(\log^2 n)$ values of $z$ above $\barw$. Hence, we can use the union bound to obtain error probability 
\[O(n^{-\Omega(\eta)} \log^2 n)=n^{-\Omega(\eta)},\]
since $\eta(n)=\omega(\log\log n/\log n)$.
In particular it also follows that with probability $1-n^{-\Omega(\eta)}$, the maximum weight satisfies $\wmax \le n^{1/(\beta-1-\eta)}$.

\end{proof}

\section{Giant Component, Diameter, and Average Distance}
\label{sec:diameter}

Throughout this section we assume $2<\beta <3$. Under this assumption we prove that whp the general model has a giant component with diameter at most $(\log n)^{O(1)}$, and that all other components are only of polylogarithmic size. 
We further show that the average distance of any two vertices in the giant is $(2 + o(1))\log \log n/|\log(\beta -2)|$ in expectation and with probability $1-o(1)$. The same formula has been known to hold for various graph models, including Chung-Lu~\cite{chung2004avg} and hyperbolic random graphs~\cite{abdullah2015typical}. The lower bound follows from the first moment method on the number of paths of different types. Note that the probability that a fixed path $P=(v_1,\ldots,v_k)$ exists in our model is the same as in Chung-Lu random graphs, since the marginal probability of the event $v_i \sim v_{i+1}$ conditioned on the positions of $v_1,\ldots v_i$ is $\Theta(\min\{1,\w{v_i}\w{v_{i+1}}/\W\})$, as in the Chung-Lu model. In particular, the expected number of paths coincides for both models (save the factors coming from the $\Theta(\cdot)$-notation). Not surprisingly, the lower bound for the expected average distance follows from general statements on power-law graphs, bounding the expected number of too short paths by $o(1)$, cf.~\cite[Theorem 2]{dereich2012typical}. The main contribution of this section is to prove a matching upper bound for the average distance. 

The proof-strategy is as follows. We first prove that whp for every vertex of weight at least $(\log n)^C$ there exists an ultra-short path to the ``heavy core'', which has diameter $o(\log\log n)$ and contains the vertices of highest weight. Afterwards, we show that a random low-weight vertex has a large probability to connect to a vertex of weight at least $(\log n)^C$ within a small number of steps. The statement is formalized below as the ``bulk lemma'' (Lemma~\ref{lem:path-to-core3}). This lemma is the crucial step of the main proof and new compared to previous studies of Chung-Lu random graphs and similar models. It contains a delicate analysis of the $k$-hop-neighborhood of a random vertex, restricted to small weights. Thereby, the underlying geometry is used implicitly, in order to make the argument applicable for the fairly general model that we study.

In the whole section let $G$ be a graph sampled from our model. We start by considering the subgraph induced by the \emph{heavy vertices} $\barV := V_{\geq \barw}$, where $\barw$ is given by the definition of power-law weights, see condition (PL2). We call the induced subgraph $\barG := G[\barV]$ the \emph{heavy core}.

\begin{lemma}[Heavy core]\label{lem:diameter-core}
With high probability $\bar G$ is connected and has diameter $o(\log \log n)$.
\end{lemma}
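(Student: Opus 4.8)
\textbf{Proof plan for Lemma~\ref{lem:diameter-core} (Heavy core).}

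The plan is to show that $\barG = G[\barV]$ behaves essentially like a dense Chung-Lu graph (in the relevant regime), so that it is connected and has $o(\log\log n)$ diameter. First I would record the size and weight profile of the heavy core: by (PL2), $|\barV| = \Theta(n/\barw^{\beta-1\pm\eta})$, which is $n^{\Omega(1)}$ since $\barw \le n^{(1-\Omega(1))/(\beta-1)}$, and $\W_{\ge \barw} = \Theta(n \barw^{2-\beta \pm \eta})$ by Lemma~\ref{lem:totalweight}. Then I would invoke condition (EP2): for any two heavy vertices $u,v$ and \emph{any} fixed positions, $p_{uv} \ge q$, where $q := \big(n/\barw^{\beta-1-\eta}\big)^{-1+\omega(1/\log\log n)}$. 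The crucial quantitative observation is that $|\barV| \cdot q \ge |\barV|^{\omega(1/\log\log n)} = n^{\omega(1/\log\log n)}$ (using $\barw \ge n^{\omega(1/\log\log n)}$ to absorb the $\eta$ slack), so every heavy vertex has, deterministically over positions, a conditional expected degree inside $\barV$ that is $n^{\omega(1/\log\log n)}$, in particular super-polylogarithmic.

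The main work is then a standard neighborhood-expansion / BFS argument, but carried out with the worst-case edge lower bound $q$ so that positions play no role. I would argue: (i) \emph{No isolated vertices / minimum degree.} For each $v \in \barV$, $\deg_{\barG}(v)$ stochastically dominates a $\Bin(|\barV|-1, q)$ variable regardless of all positions, with mean $\ge n^{\omega(1/\log\log n)} \gg \log^2 n$; by the Chernoff bound (Theorem~\ref{thm:dubhashichernoff}) and a union bound over the $n$ vertices, whp every heavy vertex has $\barG$-degree at least, say, $\tfrac12 |\barV| q$. (ii) \emph{Expansion.} For any fixed set $S \subseteq \barV$ with $|S| \le |\barV|/2$, the number of vertices in $\barV \setminus S$ with no neighbor in $S$ is dominated by a sum of independent indicators each of probability $(1-q)^{|S|} \le e^{-q|S|}$; its expectation is $\le |\barV| e^{-q|S|}$. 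When $|S| \ge (2\ln |\barV|)/q$ this expectation is $\le 1$, and a Chernoff/union bound over all relevant sets (or, more cheaply, over the BFS layers actually produced) shows that whp every such $S$ dominates all of $\barV$ after one more BFS step. Since a single step of BFS starting from any $v$ already reaches $\ge \tfrac12|\barV| q = n^{\omega(1/\log\log n)}$ vertices, and each subsequent step multiplies the reached set by $\Omega(|\barV| q \,/\,|\text{reached}|)$ until it exceeds $(2\ln|\barV|)/q$, the number of steps to reach a set of size $\ge (2\ln |\barV|)/q$ is
$$
O\!\left(\frac{\log |\barV|}{\log(|\barV| q)}\right) = O\!\left(\frac{\log n}{\omega(\log n/\log\log n)}\right) = o(\log\log n),
$$
and one further step reaches all of $\barV$. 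Applying this from two arbitrary heavy vertices and taking a union bound over all $O(n^2)$ pairs of BFS trees (the failure probability of each expansion event being $n^{-\omega(1)}$) gives that whp $\barG$ is connected with diameter $o(\log\log n)$.

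The step I expect to be the main obstacle is making the BFS expansion argument \emph{uniform over all positions} while keeping the failure probability $n^{-\omega(1)}$: one must be careful that the edge indicators inside $\barV$, although not independent of the positions, are conditionally independent given all positions with success probability bounded below by $q$ uniformly, so that the stochastic-domination and Chernoff steps go through after conditioning on the entire position vector. A secondary technical point is bookkeeping the $\eta$'s in (PL2) and (EP2): one fixes a small enough $\eta$ at the outset so that the exponent $\omega(1/\log\log n)$ survives, using $\barw \ge n^{\omega(1/\log\log n)}$ to dominate the polynomial-in-$\barw$ error factors. Both are routine once set up correctly, so the lemma follows.
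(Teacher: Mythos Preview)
Your proposal is correct and follows the same core idea as the paper: use (EP2) to get a uniform lower bound $q$ on all heavy-heavy edge probabilities (valid for \emph{every} position configuration), so that after conditioning on positions $\barG$ stochastically dominates an Erd\H{o}s--R\'enyi graph $G_{\bar n, q}$ with $\bar n q = \bar n^{\omega(1/\log\log n)}$; the diameter then comes out to $O(\log \bar n / \log(\bar n q)) = o(\log\log n)$.

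The only difference is cosmetic. The paper does not spell out the BFS expansion; it simply invokes the standard fact (citing~\cite{Draief2010}) that $G_{\bar n, p}$ with $p = \bar n^{-1+\omega(1/\log\log n)}$ has diameter $\Theta(\log \bar n/\log(p\bar n))$ with probability $1-\bar n^{-\omega(1)}$, and then uses $\bar n = n^{\Omega(1)}$ to upgrade this to whp in $n$. Your version reproves this Erd\H{o}s--R\'enyi diameter bound by hand via the minimum-degree and one-step-covering argument, which is more self-contained but longer. Both routes are fine; the paper's is shorter, yours avoids the external citation. Your remarks about conditioning on the full position vector before applying Chernoff, and about absorbing the $\eta$-slack using $\bar n = n^{\Omega(1)}$, are exactly the right technical checks.
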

\begin{proof}
Let $\bar n$ be the number of vertices in the heavy core, and let $\eta>0$ be small enough. Since $\barw \leq n^{(1-\Omega(1))/(\beta-1)}$, we may bound $\bar n = \Omega(n\barw^{1-\beta -\eta}) = n^{\Omega(1)}$. By (EP2), the connection probability for any heavy vertices $u,v$, regardless of their position, is at least
\begin{equation*}
p_{uv}(\x u, \x v) \geq \Big(\frac{n}{\barw^{\beta-1+\eta}}\Big) ^{-1+\omega(1/\log\log n)} \ge \bar n^{-1+\omega(1/\log\log n)}.
\end{equation*}
Therefore, the diameter of the heavy core is at most the diameter of an Erd\H{o}s-R\'enyi random graph $G(\bar n,p)$, with $p= \bar n^{-1+\omega(1/\log\log n)}$. However, with probability $1-\bar n^{-\omega(1)}$ this diameter is $\Theta(\log \bar n/\log(p\bar n))=o(\log \log n)$ \cite{Draief2010}. Since $\bar n = n^{\Omega(1)}$, this proves the lemma.
\end{proof}

Next we show that if we start at a vertex of weight $w$, going greedily to neighbors of largest weight yields a short path to the heavy core with a probability that approaches $1$ as $w$ increases.
\begin{lemma}[Greedy path]\label{lem:path-to-core}\leavevmode
\begin{enumerate}[(i)]
\item Let $0<\eps<1$ and let $v$ be a vertex of weight $2 \le w < \barw$. Then with probability at least 
$1-O\left(\exp\left(-w^{\Omega(\eps)}\right)\right)$ 
there exists a weight-increasing path of length at most $(1+\eps)\frac{\log \log n}{|\log(\beta-2)|}$ from $v$ to the heavy core.
\item For every $\eps>0$ there exists a constant $C=C(\eps)>0$ such that whp for all $v \in V_{\geq (\log n)^{C}}$ there exists a weight-increasing path of length at most $(1+\eps)\frac{\log \log n}{|\log(\beta-2)|}$ from $v$ to the heavy core.
\item Whp there are $\Omega(n)$ vertices in the same component as the heavy core.
\end{enumerate}
\end{lemma}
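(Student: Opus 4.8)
My plan is to analyze the "greedy path" process: starting from $v$ of weight $w$, repeatedly move to the neighbor of largest weight, and show that the weight roughly squares (in the relevant exponent) at each step, reaching $\barw$ after about $\frac{\log\log n}{|\log(\beta-2)|}$ steps. The core one-step estimate is: if $u$ has weight $\w u = w'$ with $2 \le w' \le \barw$, then $u$ has a neighbor of weight at least $w'^{1/(\beta-2) - o(1)}$ with high conditional probability. To see this, fix $\x u$ and consider the candidate set $V_{\ge w''}$ for $w'' := w'^{1/(\beta-2-\eta')}$ for a small $\eta' = \eta'(\eps)$; by (PL2) this set has size $\Omega(n \cdot w''^{1-\beta-\eta})$, and by Lemma~\ref{lem:marginal} each such vertex is a neighbor of $u$ independently with probability $\Theta(\min\{1, \w u \w{v}/\W\}) = \Theta(w' w''/\W)$ as long as $w' w'' \le \W$ (which holds since $w', w'' \le \barw \le n^{(1-\Omega(1))/(\beta-1)}$ so $w'w'' = n^{1-\Omega(1)}$). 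The expected number of such neighbors is $\Omega(n \cdot w''^{2-\beta-\eta} \cdot w' / \W) = \Omega(w' \cdot w''^{2-\beta-\eta}) = \Omega(w'^{1 - (\beta-2+\eta)/(\beta-2-\eta')}) = w'^{\Omega(\eps)}$ for $\eta, \eta'$ small enough depending on $\eps$ (here I use $\beta - 2 < 1$, so the exponent $1 - (\beta-2)/(\beta-2) $ would be $0$ but the gap between $\eta$'s gives a genuinely positive power of $w'$). A Chernoff bound (Theorem~\ref{thm:dubhashichernoff}.(ii)) then gives that $u$ has such a heavy neighbor with probability $1 - \exp(-w'^{\Omega(\eps)})$.

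For part~(i), I iterate this bound along the greedy path. Set $w_0 = w$ and $w_{i+1} = w_i^{1/(\beta-2-\eta')}$, so $w_i = w^{(1/(\beta-2-\eta'))^i}$, and the path reaches weight $\ge \barw$ once $i \ge \frac{\log\log_\barw -  \log\log w}{|\log(\beta-2-\eta')|} = (1+O(\eta'))\frac{\log\log n}{|\log(\beta-2)|}$, using $\barw \ge n^{\omega(1/\log\log n)}$ so $\log\log\barw = \log\log n - o(\log\log n)$ — wait, more carefully $\log \barw \ge (\log n)^{\omega(1/\log\log n)}$... let me just say $\log\log \barw = (1-o(1))\log\log n$ from $\barw \ge n^{\omega(1/\log\log n)}$. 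Choosing $\eta'$ small enough in terms of $\eps$ makes the path length $\le (1+\eps)\frac{\log\log n}{|\log(\beta-2)|}$. The total failure probability is a union bound over the $O(\log\log n)$ steps; since the weights are increasing and the $i$-th step fails with probability $\exp(-w_i^{\Omega(\eps)}) \le \exp(-w^{\Omega(\eps)})$, and $\exp(-w^{\Omega(\eps)}) \cdot \log\log n = O(\exp(-w^{\Omega(\eps)}))$, the claimed bound follows. One subtlety: we must condition on the positions and neighborhoods revealed so far; since at each step we only expose edges from the current vertex to the candidate set $V_{\ge w_{i+1}}$ and these candidate sets are (essentially) disjoint across steps for $w$ large, the independence needed for the Chernoff bound holds — more robustly, one reveals edges step by step and each step's estimate is conditional on the current vertex's position only, by Lemma~\ref{lem:marginal}.

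Part~(ii) follows from part~(i) by taking $w = (\log n)^C$: then $\exp(-w^{\Omega(\eps)}) = \exp(-(\log n)^{\Omega(\eps C)}) = n^{-\omega(1)}$ once $C = C(\eps)$ is chosen large enough that $(\log n)^{\Omega(\eps C)} = \omega(\log n)$, i.e. $\Omega(\eps C) > 1$; a union bound over all $\le n$ vertices in $V_{\ge(\log n)^C}$ preserves the whp statement. For part~(iii): by (PL2), $|V_{\ge (\log n)^C}| = \Omega(n (\log n)^{-C(\beta-1+\eta)}) $, which is not $\Omega(n)$, so instead I take $V_{\ge W_0}$ for a suitable slowly growing $W_0 = \omega(1)$ — actually the cleanest route is: apply part~(i) with a fixed small constant $\eps$ and a slowly growing weight threshold, or simply observe that by Lemma~\ref{lem:totalweight}(iv) and the fact that a constant fraction of total weight (hence, since weights are $\Omega(1)$, a linear number of vertices) have weight $\omega(1)$, and each such vertex independently-enough reaches the heavy core with probability $1 - o(1)$; a second-moment / concentration argument over these $\Omega(n)$ vertices shows $\Omega(n)$ of them succeed whp. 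The main obstacle is making the conditioning/independence across greedy steps rigorous while keeping the candidate sets large enough — this is handled by the standard edge-exposure martingale argument, revealing the path vertex by vertex and only exposing edges to fresh candidate sets.
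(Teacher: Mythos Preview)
Your approach to parts (i) and (ii) is essentially the paper's, but your one-step estimate in (i) has the growth exponent backwards. You set $w'' = (w')^{1/(\beta-2-\eta')}$, i.e.\ a jump exponent $\tau = 1/(\beta-2-\eta') > 1/(\beta-2)$; the resulting expected count is $\Omega\big((w')^{1-(\beta-2+\eta)/(\beta-2-\eta')}\big)$, and since the numerator exceeds the denominator this exponent is \emph{negative}, not $\Omega(\eps)$. The fix is to take $\tau$ slightly \emph{below} $1/(\beta-2)$, as the paper does with $\tau=(\beta-2)^{-1/(1+\eps/2)}$; then $1-\tau(\beta-2+\eta)=\Omega(\eps)$ for $\eta$ small. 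Also, your union bound ``$\exp(-w^{\Omega(\eps)})\cdot\log\log n = O(\exp(-w^{\Omega(\eps)}))$'' is not valid as stated; what actually works is that the $w_i$ grow double-exponentially, so $\sum_i \exp(-w_i^{\Omega(\eps)})$ is dominated by its first term.

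The real gap is part (iii). First, the claim that ``a linear number of vertices have weight $\omega(1)$'' is false: by (PL2), $|V_{\ge w}| = O(n w^{1-\beta+\eta}) = o(n)$ for any $w\to\infty$. So you are forced to work with vertices of constant weight, for which part (i) only gives success probability $1-c$ for some constant $c>0$, and you need concentration of the number of successes. But the greedy paths from different low-weight vertices all pass through the same pool of high-weight vertices, so the success events are genuinely correlated, and a second-moment argument does not go through as you sketch. The paper handles this with a substantially more elaborate argument: it partitions the weight range into layers $V_i=\{v: w_i\le \w v\le w_i^{1+\eta}\}$, proves via the McDiarmid-type inequality (Theorem~\ref{thm:concentration}) that whp only a tiny fraction of each $V_i$ fails to have many neighbors in $V_{i+1}$ (Claim~\ref{cla:nrbadvertices}), separately bounds the number of edges from $V_i$ into the already-failed part of $V_{i+1}$ (Claim~\ref{cla:crossingedges}), and then inductively shows that for each $i$ a $1-\exp(-\Omega(w_i^{\mu-\delta}))$ fraction of $V_i$ reaches the core. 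This layer-by-layer induction with concentration is the missing idea; your sketch does not supply a substitute for it.
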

\begin{proof}
Let $0<\eps<1$, let $v$ be a vertex of weight $2 \le w \le \barw$, and let 
\[\tau=\tau(\varepsilon) := (\beta-2)^{-1/(1+\eps/2)}.\] 
Note that $1 < \tau < 1/(\beta-2)$, and that $1/\log \tau = (1+\eps/2)/|\log(\beta-2)|$. Moreover, we define an increasing weight sequence $w_0, w_1, \ldots, w_{i_{\max}} := \barw$ such that for all $1 \le i \le i_{\max}$ it holds $w_i := w_{i-1}^{\tau}$, and such that $w_0 \le \w v < w_1$. For all $i < i_{\max}$ we put $V_i := V_{\ge w_i} \setminus V_{\ge w_{i+1}}$. Furthermore, we put $V_{i_{\max}} := \barV$ and $v_0 := v$. We will show that with sufficiently high probability, for all $0 \le i <  i_{\max}$ the vertex $v_{i}$ has at least one neighbor $v_{i+1} \in V_{i+1}$. Note that
$$i_{\max} = \lceil\log_\tau\left(\log \barw/\log w\right)\rceil,$$ so this implies that there is a path from $v$ to the heavy core of length at most 
\[i_{\max} \leq (1+\eps/2)\frac{\log \log n}{|\log(\beta-2)|} + 1 \le (1+\eps) \frac{\log \log n }{|\log(\beta-2)|},\] for sufficiently large $n$, and thus proves statement (i).

Let $0\leq i< i_{\max}$ and assume by induction that there exists a weight-increasing path from $v_0$ to $v_{i}$ where $v_{i} \in V_i$. Note that this event only depends on the random graph induced by the vertex set $V_{< w_{i+1}}$. We want to verify that $v_{i}$ connects to at least one vertex $v_{i+1} \in V_{i+1}$. First, observe that by condition~(PL2), each layer $V_i$ contains at least $\Omega(n w_i^{1-\beta-\eta})$ and at most $O(n w_i^{1-\beta+\eta})$ vertices. Next, by condition~(EP1) the edges from $v_{i}$ to vertices $v$ with $v\in V_{i+1}$, are independently present with probability $\Omega(\min\{\w{v}w_{i}/\W,1\})$, respectively. If $w_{i}w_{i+1} \geq \W$, this probability is $\Omega(1)$. However, then $w_{i} \ge n^{1/(1+\tau)}$ and we deduce $|V_{i+1}|=n^{\Omega(1)}$. In this case, the probability that $v_{i}$ connects to at least one vertex of the next weight layer is 
$$1-\exp(-n^{\Omega(1)}) = 1-\exp(-w_i^{\Omega(\eps)}).$$ So assume $w_{i}w_{i+1} < \W$, where we can lower-bound the edge probability by $\Omega(w_{i} w_{i+1} / \W)$. 
Thus, for any $\eta>0$ the probability that $v_{i-1}$ does \emph{not} connect to a vertex in $V_{i+1}$ is at most
$$p_i := \prod_{v \in V,\, \w{v}\geq w_{i+1}} \Big(1-\Omega\Big(\frac{w_i w_{i+1}}{\W}\Big)\Big)   \leq  \exp(-\Omega\Big(\frac{w_i w_{i+1}}\W \cdot |V_{i+1}|\Big)  \Big)  \leq \exp\Big(-\Omega\big(w_i w_{i+1}^{2-\beta-\eta}\big) \Big),$$
where we used Lemma~\ref{lem:totalweight} in the last step. Since $w_{i+1} \le w_i^\tau$, we obtain
\[p_i \le \exp\big(-\Omega\big(w_i^{1-\tau(\beta-2+\eta)}\big) \big).\]
Note that as $\tau < 1/(\beta-2)$, the exponent of $w_i$ in this expression is positive for sufficiently small $\eta>0$. More precisely, we have
$1 - \tau(\beta-2) = 1 - (\beta-2)^{\eps/(2+\eps)} = \Omega(\eps)$,
and thus for $\eta > 0$ sufficiently small compared to $\eps$ we have
\begin{equation} \label{eq:greedypath}
  p_i \le \exp\big( -w_i^{\Omega(\eps)} \big). 
\end{equation}
By the union bound, the probability that for every $0\leq i < i_{\max}$ the vertex $v_i$ has a neighbor in the next weight layer is at least $1-\sum_i\exp(- w_i^{\Omega(\eps)} ) = 1-\exp(-w^{\Omega(\eps)} )$, which proves the first claim.

For the second statement, let $C=C(\eps)=\Omega(1/\eps)$ with sufficiently large hidden constant. If a vertex $v$ has weight at least $(\log n)^C$ then the error probability estimated above is at least $1-e^{-(\log n)^{\Omega(1)}} = 1-n^{-\omega(1)}$. The claim now follows from a union bound over all vertices of weight at least $(\log n)^C$. 

For the size of the giant component, we apply the same arguments as before in the proof of (i) for $w=2$. Let $\eps>0$ be sufficiently small, let $\eta >0$ sufficiently small compared to $\eps$, and consider the same system of weight layers $V_i$ as before. Let $i \ge 0$ such that $w_i \le (\log n)^C$, where $C$ is the constant $C(\eps)$ given by the proof of statement (ii). For every $v \in V_i$, let $\Gamma_i(v) := \{u \in V_{i+1} \mid v \sim u\}$, and let $E_i(v) := \Ex[|\Gamma_i(v)|]$. Moreover, let $\gamma := \tau(2-\beta-\eta)+1>0$. Then for every $v \in V_i$, by (EP1),
\[E_i(v) \geq \Omega\Big(n w_{i+1}^{1-\beta-\eta} \cdot \frac{w_i w_{i+1}}{\W}\Big) \geq \Omega(w_{i+1}^{2-\beta-\eta}w_i) \geq \Omega(w_i^{\tau(2-\beta-\eta)+1})\geq \Omega(w_i^{\gamma}).\]
As this lower bound is independent of $v$, we also have 
\[E_i := \min_{v\in V_i} E_i(v) = \Omega(w_i^{\gamma}).\]
Let $\lambda := \min\{\gamma, \frac{1}{2C\tau}\}$. Furthermore put $B_i := \{v\in V_i \mid |\Gamma_i(v)| \leq E_i/2\}$. This set will play the role of ``bad'' vertices. 

\begin{claim}\label{cla:nrbadvertices}
There is a constant $c>0$ such that for all $i \ge 0$ with $w_i \le (\log n)^C$, whp it holds $|B_i| \leq 2\exp(-cw_i^{\lambda})\cdot |V_i|$.
\end{claim}
We postpone the proof of Claim~\ref{cla:nrbadvertices} (and Claim~\ref{cla:crossingedges} below) until we have finished the main argument. We uncover the sets $V_i$ one by one, starting with the largest weights. Let $\delta>0$ be so small that $\tau(\lambda -\delta) > \lambda$. Note that when applying Claim~\ref{cla:nrbadvertices}, we may replace the factor $2$ by any other factor $D_1\geq 2$ without violating the statement of the claim. We will show by induction that if $D_1 = O(1)$ is sufficiently large, then whp the fraction of vertices in $V_i$ with a weight-increasing path to the inner core is at least $1-D_1\exp(-c w_i^{\lambda -\delta})$. Note that for any $i_0 = i_0(c) = O(1)$ the statement is trivial for all $i \leq i_0$, if we choose $D_1= D_1(i_0,c)$ sufficiently large. Also, if $w_i \geq (\log n)^C$ then we already know that whp all vertices in $V_i$ are connected to the inner core with weight-increasing paths. For the remaining values of $i$, denote by $V_{i}'$ the set of vertices in $V_{i}$ for which there is no weight-increasing path to the inner core that uses exactly one vertex per layer. Furthermore, let $\Lambda_i$ be the set those $D_1\exp(-c w_{i-1}^{\lambda -\delta})|V_{i}|$ vertices in $V_i$ with the smallest neighborhood in $V_{i+1} \setminus V_{i+1}'$ (where we break ties according to some previously fixed order). We then use the following claim.

\begin{claim}\label{cla:crossingedges}
There exists $D_2>0$ such that whp, for all $i \ge i_0$ with $w_i \le (\log n)^C$, it holds
\[|E(V_{i},\Lambda_{i+1})| \leq D_1\exp(-cw_{i+1}^{\lambda -\delta})\cdot |V_i| \cdot E_{i} \cdot w_{i}^{D_2}.\]
\end{claim}

Consider some $i_0<i \leq i_{\max}$ such that $w_i \le (\log n)^C$, and assume by induction that for sufficiently many vertices of $V_{i+1}$ there is a weight-increasing path to the inner core, that is, $|V_{i+1}'| \leq D_1\exp(-cw_{i+1}^{\lambda -\delta})\cdot |V_{i+1}|$. By construction, this implies $V_{i+1}' \subseteq \Lambda_{i+1}$. Now we consider $B_i' := \{v \in V_{i} \mid |E(\{v\}, V_{i+1}')| \geq E_{i}/2\}$. If the low-probability event of Claim~\ref{cla:crossingedges} does not occur, using $w_{i+1}^{\lambda-\delta} = w_i^{\lambda+\Omega(1)}$ it follows
\begin{align}\label{eq:giantbad2}
|B_i'| & \leq \frac{2|E(V_{i},V_{i+1}')|}{E_i}\leq 2D_1\exp(-cw_{i+1}^{\lambda -\delta})\cdot |V_i| \cdot w_{i}^{D_2} \leq D_1 \exp(-cw_{i}^{\lambda})\cdot |V_i| ,
\end{align}
provided that $i_0 = i_0(c)$ (and thus, $w_{i_0}$) is a sufficiently large constant. It remains to observe that every vertex in $V_i \setminus (B_i \cup B_i')$ has at least one edge into $V_{i+1} \setminus V_{i+1}'$. Since the latter vertices are all connected to the inner core, we have at least $|V_i|-|B_i|-|B_{i}'|$ vertices in $V_i$ that are connected to the inner core. By Claim~\ref{cla:nrbadvertices} and Equation~\eqref{eq:giantbad2}, whp both $B_i$ and $B_i'$ have size at most $D_1\exp(-cw_{i}^{\lambda})|V_i|$, so together they have size at most $D_1\exp(-cw_{i}^{\lambda-\delta})|V_i|$, for all $i \geq i_0$ where $i_0$ is sufficiently large. This concludes the induction modulo Claims~\ref{cla:nrbadvertices} and~\ref{cla:crossingedges}. The existence of the giant component now follows because whp a constant fraction of $V_{i_0}$ is connected to the inner core, and $V_{i_0}$ has linear size by~(PL2).
\end{proof}

\begin{proof}[Proof of Claim~\ref{cla:nrbadvertices}]
Let $i\ge 0$ such that $w_i \le (\log n)^C$. For a single $v \in V_i$, the events ``$v \sim u$'' are independent for all $u \in V_{i+1}$. So by the Chernoff bound (Theorem~\ref{thm:dubhashichernoff}), there is a constant $c>0$ such that $\Pr[v \in B_i] \leq \exp(-cw_i^{\gamma})$ and $\Ex[|B_i|] \leq \exp(-cw_i^{\gamma})|V_i|$. Let $G_i$ be the subgraph induced by $V_i$ and $V_{i+1}$ and observe that the size of $B_i$ only depends on $G_i$. In order to prove concentration of $|B_i|$ we will use Theorem~\ref{thm:concentration}. For this, we need to argue that the considered probability space $\Omega$ is a product of independent random variables. Recall that two different random processes are applied to create the geometric graph. First, we choose the positions $\x v \in \Space$ independently at random. Afterwards, every edge $\{u,v\}$ is inserted with some probability $p_{uv}$. So far, these random variables are \emph{not} independent.

W.l.o.g.\ assume that the vertices are sorted by weights in decreasing order. For every vertex $u \in V_i \cup V_{i+1}$ we first have the random variable $\x v$ for its position. Now, for each $u \in V_i \cup V_{i+1}$ we introduce a second, independent random variable $Y_u := (Y_u^1, \ldots, Y_u^{u-1})$, where each $Y_u^v$ is a real number chosen independently and uniformly at random from the interval $[0,1]$. Then for $v<u$, we include the edge $\{u,v\}$ in the graph if and only if
\[p_{uv} > Y_u^v.\]
We observe that indeed this implies $\Pr[u \sim v \mid \x{u},\x{v}]=p_{uv}(\x{u},\x{v})$, as desired. Furthermore, all random variables of the set $\cup_{u \in V_i \cup V_{i+1}} \{\x{u} \cup Y_u\}$ are independent, and together define a product probability space that is equivalent to $\Omega$ and consists of $2(|V_i|+|V_{i+1}|)$ independent coordinates. Formally, every $\omega \in \Omega$ defines a graph $G_i(\omega)$.

We study the bad event $\mathcal{B}$ that there exists a vertex $v \in V_i \cup V_{i+1}$ with degree larger than $(\log n)^{2C\tau^2}$ in $G_{i}$. By Lemma~\ref{lem:largevertices} we have $\Pr[\mathcal{B}] = n^{-\omega(1)}$, since $w_i \le (\log n)^C$ and therefore $\w v \le (\log n)^{C\tau^2}$ for all $v \in V_i \cup V_{i+1}$. Let $\omega,\omega' \in \overline{\mathcal{B}}$ such that they differ in at most two coordinates of our product probability space $\Omega$. We observe that changing one coordinate $\x{u}$ or $Y_u$ can only influence the degrees of $u$ itself and of the vertices that are neighbors of $u$ before or after the coordinate change. 
Therefore, $\left| |B_i(\omega)| - |B_i(\omega')| \right| \le (\log n)^{O(1)}$. We pick $t = \exp(-cw_i^{\lambda})\cdot |V_i|$ and observe that $w_i^{\lambda} \le (\log n)^{1/2}$ by our choice of $\lambda$. Then Theorem~\ref{thm:concentration} implies
\begin{align*}
\Pr[|B_i|-\Ex[|B_i|] \ge t] &\le 2\exp\Big(-\frac{t^2}{128|V_i|(\log n)^{O(1)}}\Big)+n^{O(1)}\Pr[\mathcal{B}]\\ &\le 2\exp\Big(-\frac{e^{-2cw_i^\lambda}|V_i|}{128 (\log n)^{O(1)}}\Big)+n^{O(1)}\Pr[\mathcal{B}] = n^{-\omega(1)}.
\end{align*}
Hence, whp we have $|B_i| \leq \Ex[|B_i|] + t \leq (\exp(-cw_i^{\gamma})+\exp(-cw_i^{\lambda}))\cdot |V_i|$. For $i$, the statement now follows since $\lambda < \gamma$ and $w_i >1$, and then the proof of the claim is finished by a union bound over all $O(\log\log n)$ choices of $i$.
\end{proof}

\begin{proof}[Proof of Claim~\ref{cla:crossingedges}]
Let $i\ge i_0$ such that $w_i \le (\log n)^C$. We assume that the subgraph induced by $V_{\ge w_{i+2}}$ is given, and now we uncover $V_i$ and $V_{i+1}$ to obtain the subgraph induced by $V_{\ge w_i}$. Similarly as in the proof of Claim~\ref{cla:nrbadvertices} we can assume that this probability space $\Omega$ is a product probability space with $2(|V_i|+|V_{i+1}|)$ coordinates. Recall that $G_i$ denotes the subgraph induced by $V_i \cup V_{i+1}$. We consider the same bad event $\mathcal{B}$ as in the proof of Claim~\ref{cla:nrbadvertices}, i.e., $\mathcal{B}$ denotes the event that the maximum degree in $G_{i}$ is larger than $(\log n)^{2C\tau^2}$. Note that $\mathcal{B}$ is independent of $V_{\ge w_{i+2}}$, so indeed Lemma~\ref{lem:largevertices} can be again applied to deduce $\Pr[\mathcal{B}]=n^{-\omega(1)}$. 

Let $Z_i := |E(V_i, \Lambda_{i+1})|$, and let $\omega, \omega' \in \overline{\mathcal{B}}$ such that they differ in at most two coordinates of $\Omega$. If we change a coordinate of $\Omega$ that stems from a vertex $v \in V_i$, under $\overline{\mathcal{B}}$ the influence on $Z_i$ is at most $(\log n)^{O(1)}$. If a coordinate belonging to a vertex of $V_{i+1}$ is changed, this may result in a different set $\Lambda_{i+1}$. However, the symmetric difference between the old $\Lambda_{i+1}$ and the new $\Lambda_{i+1}$ is at most two (as $\Lambda_{i+1}$ is defined via a fixed ordering), and under $\overline{\mathcal{B}}$ the influence on $Z_i$ is again upper-bounded by $(\log n)^{O(1)}$. Finally, the same is true if the set $\Lambda_{i+1}$ does not change. Repeating the argument for the second coordinate change, we conclude that $|Z_i(\omega)-Z_i(\omega')| \le (\log n)^{O(1)}$.

Next, we want to upper-bound $\Ex[Z_i]$. First, we uncover $V_{i+1}$ to obtain the subgraph induced by $V_{\ge w_{i+1}}$. Then the set $\Lambda_{i+1}$ is determined. In a second step, we uncover $V_i$. By (EP1) and linearity of expectation, we deduce
\begin{align*}
\Ex[Z_i] & \leq |\Lambda_{i+1}| \cdot |V_{i}| \cdot O\Big(\frac{w_{i+1}^{1+\eta} w_{i}^{1+\eta}}{\W}\Big)\\ 
& = O\Big(D_1\exp(-cw_{i+1}^{\lambda-\delta})n w_{i+1}^{1-\beta+\eta}\cdot |V_i|\cdot \frac{w_{i+1}^{1+\eta} w_{i}^{1+\eta}}{\W}\Big) \\
& \leq D_1\exp(-cw_{i+1}^{\lambda-\delta}) \cdot |V_i| \cdot O\big(w_{i+1}^{2-\beta+2\eta}w_{i}^{1+\eta}\big) \\
& \leq D_1\exp(-cw_{i+1}^{\lambda-\delta})\cdot  |V_i|\cdot  O\big(w_{i}^{\tau(2-\beta)+1+\eta(2\tau+1)}\big) \\
& \leq D_1\exp(-cw_{i+1}^{\lambda-\delta})\cdot  |V_i|\cdot E_i \cdot O\big(w_{i}^{\eta(3\tau+1)}\big).
\end{align*}
Since we assumed $w_i \geq 2$, we may upper-bound the $O(\cdot)$-term by $0.5 w_i^{D_2}$ for a sufficiently large $D_2>0$.

Now we can apply Theorem~\ref{thm:concentration} with $t = 0.5D_1\exp(-cw_{i+1}^{\lambda-\delta})\cdot |V_i| \cdot E_{i} \cdot w_{i}^{D_2}$. Using $\lambda\tau C \le \frac12$, it follows similarly as in the proof of Claim~\ref{cla:nrbadvertices} that $\Pr\big[|Z_i - \Ex[Z_i]| \ge t \big] = n^{-\omega(1)}$, and we conclude that with probability $1-n^{-\omega(1)}$ it holds
\[|Z_i| \le \Ex[|Z_i|]+ t \le D_1\exp(-cw_{i+1}^{\lambda-\delta})\cdot |V_i| \cdot E_{i} \cdot w_{i}^{D_2}.\]
Now the claim follows by a union bound over all $O(\log\log n)$ choices of $i$.
\end{proof}

By Lemma~\ref{lem:path-to-core}~(ii), whp every vertex of weight at least $(\log n)^C$ has small distance from the heavy core. It remains to show that every vertex in the giant component has a large probability to connect to such a high-weight vertex in a small number of steps. The next lemma shows that the more vertices of small weight we have in the neighborhood of a vertex, the more likely it is that there is an edge from the neighborhood to a vertex of large weight.

\begin{lemma}[Bulk lemma]\label{lem:path-to-core3}
Let $\eps >0$. Let $\wmin \leq w \leq \barw$ be a weight, and let $k \geq \max\{2, w^{\beta+\eps}\}$ be an integer. For a vertex $v \in V_{<w}$, let $N_v$ be the set of all vertices within distance at most $k$ of $v$ in the graph $G_{< w}$. Then for a random vertex $v\in V_{<w}$,
\[
\Pr\big[\dist(v,V_{\geq w}) > k \text{ and } |N_v| \geq k\big] \leq O\big(e^{-w^{\Omega(1)}}\big).
\]
\end{lemma}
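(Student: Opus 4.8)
The plan is to expose the randomness in two stages and reduce the statement to a bound on the expected number of heavy neighbours of $N_v$. Following the product-space construction in the proof of Claim~\ref{cla:nrbadvertices} (the positions $\x 1,\dots,\x n$ together with independent edge-coins), first reveal the positions of all light vertices $V_{<w}$ and all light--light edges; call this the \emph{light exposure}, and note that it determines $G_{<w}$ and hence $N_v$. Since $k\ge 2$, we may replace $N_v$ by the ball $N'_v$ of radius $k-1$: it still has at least $k$ vertices (a shortest path to a vertex at distance exactly $k$ already supplies $k$ vertices at distance $\le k-1$), and on the event $\dist(v,V_{\ge w})>k$ it satisfies $N'_v\not\sim V_{\ge w}$; so it suffices to bound $\Pr[\,|N'_v|\ge k \text{ and } N'_v\not\sim V_{\ge w}\,]$. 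Conditioned on the light exposure the heavy positions and the light--heavy edge-coins are untouched, and for distinct $x\in V_{\ge w}$ the events $\{x\sim N'_v\}$ are mutually independent, each depending only on $\x x$ and the coins of the pairs $\{u,x\}$ with $u\in N'_v$. Writing $\Phi_x:=\Pr_{\x x}[x\sim N'_v]$, this gives
\[
\Pr[\,N'_v\not\sim V_{\ge w}\mid \text{light exposure}\,]=\prod_{x\in V_{\ge w}}(1-\Phi_x)\le \exp\Big(-\sum_{x\in V_{\ge w}}\Phi_x\Big),
\]
so the lemma reduces to: with probability $1-O(e^{-w^{\Omega(1)}})$ over the light exposure, $|N'_v|\ge k$ forces $\sum_{x\in V_{\ge w}}\Phi_x\ge w^{\Omega(1)}$. (For $w=O(1)$ the claimed bound is trivial, so assume $w=\omega(1)$.)

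It is worth seeing why the obvious estimates do not suffice, as this isolates the difficulty. Applying (EP1)/Lemma~\ref{lem:marginal} to one vertex of $N'_v$ and a heavy vertex (whose weight is $\ge\barw\ge w$) gives $\Phi_x\ge\Omega(w/n)$, hence $\sum_x\Phi_x\ge\Omega(|V_{\ge w}|\,w/n)=\Omega(w^{2-\beta-\eta})$ by Lemma~\ref{lem:totalweight} --- but since $\beta>2$ this tends to $0$. Summing the marginal over all of $N'_v$ shows the expected number of \emph{edges} between $N'_v$ and $V_{\ge w}$ is $\Omega(|N'_v|\,w^{2-\beta-\eta})\ge\Omega(w^{2+\eps-\eta})$, which is large; but if $N'_v$ were geometrically concentrated these edges could all fall on a handful of heavy vertices, so this does not bound $\sum_x\Phi_x$, the expected number of \emph{distinct} heavy neighbours. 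The heart of the proof is to exclude such concentration: one shows that $|N'_v|\ge k$ forces $N'_v$ to occupy $\mu$-mass $\gtrsim k/n$, which is much larger than the scale $w/n$ of the light--heavy kernel since $k\ge w^{\beta+\eps}\gg w$, and that occupying that much mass makes $\Phi_x$ at least of order $k/(n\log^{O(1)}n)$ --- comparable to a dense blob of $k$ light vertices. Granting this, $\sum_{x\in V_{\ge w}}\Phi_x\gtrsim \frac{k}{n\log^{O(1)}n}\,|V_{\ge w}|=\frac{k}{\log^{O(1)}n}\,w^{1-\beta-\eta}\ge w^{1+\eps-O(\eta)}/\log^{O(1)}n\ge w^{\Omega(1)}$ for $\eta$ small and $w$ large, as needed.

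For the spreading step I would discretise $\Space$ into measurable cells of $\mu$-mass $\Theta(1/n)$ --- the scale of the light--light marginal $\Theta(\wmin^2/\W)=\Theta(1/n)$ --- so that each cell holds $O(1)$ light vertices in expectation. A Chernoff bound (Theorem~\ref{thm:dubhashichernoff}) gives, for any fixed region of $\mu$-mass $o(k/n)$, probability $e^{-\Omega(k)}$ of containing $k$ light vertices; a union bound over all unions of at most $O(k/\log^{O(1)}n)$ cells then shows that with probability $1-e^{-\Omega(k)}\ge 1-e^{-w^{\Omega(1)}}$ no such region contains $\ge k$ light vertices. Since $N'_v$ is connected in $G_{<w}$, its $k$ vertices then genuinely occupy $\mu$-mass $\gtrsim k/(n\log^{O(1)}n)$ --- this mass/cardinality correspondence is the one place the geometry enters, and it enters implicitly. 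Finally, $\Phi_x=\Pr_{\x x}[x\sim N'_v]\ge 1-\Ex_{\x x}\big[e^{-\sum_{u\in N'_v}p_{ux}}\big]\ge \tfrac12\Ex_{\x x}\big[\min\{1,\sum_{u\in N'_v}p_{ux}\}\big]$, and combining (EP1) with the mass bound on $N'_v$ lower-bounds this expectation by $\Omega(k/(n\log^{O(1)}n))$, closing the chain. Throughout, the concentration arguments cannot use plain McDiarmid: the relevant functions of the light positions (the number of $N'_v$-vertices in a fixed region, or $\Phi_x(N'_v)$ itself) can change a lot when a vertex of very large degree moves, which is controlled only on the good event that no vertex has degree $\log^{\omega(1)}n$ (Lemma~\ref{lem:largevertices}) --- exactly the situation of Theorem~\ref{thm:concentration}.

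The main obstacle is the spreading step: turning the combinatorial fact $|N'_v|\ge k$ into the quantitative statement that $N'_v$ occupies $\mu$-mass $\gtrsim k/n$, and then converting occupied mass into a lower bound on the connection probability of a \emph{random} heavy vertex using only the marginal condition (EP1). It must be done for an arbitrary, possibly non-metric ground space, with care for the $\pm\eta$ slack in the power-law conditions and for the whole admissible range of $w$; by contrast the two-stage reduction, the independence of the events $\{x\sim N'_v\}$, and the Chernoff and union bounds are routine.
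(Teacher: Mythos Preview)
Your two-stage exposure and the observation that the events $\{x\sim N'_v\}$ are conditionally independent over $x\in V_{\ge w}$ are correct, and the reduction to a lower bound on $\sum_{x}\Phi_x$ is exactly how the paper begins. The difficulty is indeed to rule out that $\Phi_x$ stays as small as the trivial bound $\Theta(w/n)$; here your argument and the paper's diverge, and your route has a genuine gap.

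The problematic step is the last one: from the statement that the positions $\{\x u:u\in N'_v\}$ occupy $\mu$-mass $\gtrsim k/(n\log^{O(1)}n)$ you conclude
\[
\Ex_{\x x}\Big[\min\Big\{1,\sum_{u\in N'_v}p_{ux}(\x u,\x x)\Big\}\Big]\ \gtrsim\ \frac{k}{n\log^{O(1)}n}.
\]
You cite only (EP1), but (EP1) fixes nothing beyond the two one-dimensional marginals of the kernel $p_{ux}(\cdot,\cdot)$; it gives no relation between the $\mu$-mass of a set of \emph{source} positions and the $\mu$-mass of the set of \emph{target} positions that see them. After the light exposure the points $\x u$, $u\in N'_v$, are fixed, and the functions $\x x\mapsto p_{ux}(\x u,\x x)$ could, as far as (EP1) is concerned, all be supported on one common region of $\mu$-mass $\Theta(w/n)$. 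Your spreading argument controls the empirical measure of the $\x u$'s only at the cell scale $1/n$; but the kernel can vary within cells, so the marginal constraint $\int p_{ux}(y,\x x)\,d\mu(y)=\Theta(w/n)$ for every $\x x$ does \emph{not} bound $\sum_{u\in N'_v}p_{ux}(\x u,\x x)$ for the specific, already-revealed positions $\x u$. (The sentence ``since $N'_v$ is connected in $G_{<w}$'' is a red herring here --- connectivity plays no role in the mass bound, and it does not help turn mass into a pointwise kernel bound.) In short, you have identified the obstacle correctly but not overcome it: the implication ``occupied mass $\Rightarrow$ large $\Phi_x$'' is asserted, not proved, and it would need a substantial new argument exploiting both directions of (EP1) simultaneously.

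For comparison, the paper never talks about $\mu$-mass at all. It works directly with $R(v):=\Phi_x$ and splits into the cases $R\ge w^{\beta}/n$ (done by Chernoff, as you do) and $R<w^{\beta}/n$. For the latter it proves a second-moment type statement: whenever $R$ is small, a random heavy vertex $u$ that does connect to $N_v$ must have $\Omega(|N_v|w^{1-\beta})$ neighbours inside $N_v$ (this is a one-line pigeonhole on $\Ex[|N_v\cap\Gamma(u)|]$ versus $R$). Then, after greedily selecting vertices $v$ with pairwise disjoint $N_v$'s (controlled by a charging argument), the total contribution to $\deg(u)$ from all bad $v$'s would be large; but $\deg(u)=O(w)$ with exponentially small failure probability, and comparing the two yields that only an $e^{-\Omega(w)}$-fraction of $v$'s can have small $R$. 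So the paper converts ``$R$ small'' into ``$u$ has abnormally high degree'' --- a purely combinatorial resource argument with no geometry --- rather than trying to show directly that $R$ is large.
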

\begin{proof}Before starting with the formal proof, let us sketch some of the main ideas. We first uncover the graph $G_{<w}$ induced by vertices of weights less than $w$. For a fixed vertex $v \in V_{<w}$ and a vertex $u \in V_{\ge w}$, we know (a lower bound on) $\Ex[|N_v \cap \Gamma(u)| \mid G_{<w}]$ by Lemma~\ref{lem:marginal}. This can only come from two cases: either $N_v$ is such that $\Pr[u \sim N_v]$ is large, or we have a relatively large probability that $u$ connects to many vertices of $N_v$ at the same time. In a geometric setting like hyperbolic random graphs or GIRGs (see Section~\ref{sec:distancemodel}), intuitively the first case occurs if the vertices in $N_v$ are spread out in the geometric space, while the second case occurs if the vertices in $N_v$ form a bulk. In the first case, it is very likely that there is at least one edge between $N_v$ and $V_{\ge w}$, and from there on it is likely to find a greedy path to the heavy core by Lemma~\ref{lem:path-to-core}. So it suffices to show that the second case is unlikely. Indeed, suppose that the second occurs for a large fraction of the vertices $v \in V_{< w}$. In this case we can carefully choose a set $V_u \subseteq V_{< w}$ that has in particular the property that the sets $N_v$ for $v \in V_u$ are disjoint. Then the vertices $u\in V_u$ have a significantly increased probability that $\deg(u)$ is large, which can only happen with very small probability by Lemma~\ref{lem:expecteddegree}. This is the most technical part of the proof, and the vague statements above are made precise by Claims~\ref{cla:claim2} and~\ref{cla:disjointNv} below. We can deduce that it is very unlikely that the second case happens for a large fraction of $V_{<w}$, from which the statement follows. \smallskip

We may assume $w \leq n^{1/2}$, since otherwise $k> n$, and the statement is trivial. Let $c>0$ be such that for all vertices $u$ of weight at least $w$, all vertices $u' \in V$, and every fixed position $\x {u'} \in \Space$ we have $\Pr[u \sim u' \mid \x{u'}] \geq cw/n$, i.e., $c$ is the hidden constant of condition~(EP1). Finally by the power-law assumption (PL2), for any sufficiently small $0<\eta<1$ we may choose $\tilde w = O(w^{1+\eta})$ such that there are at least $\Omega(n/w^{\beta-1+\eta})$ vertices with weights between $w$ and $\tilde w$.

We first uncover the graph $G_{< w}$ induced by vertices of weight less than $w$, i.e., we uncover the positions of these vertices and the edges in the induced subgraph. Let $v \in V_{<w}$ and let $N_v := N_v(k,w)$ be the $k$-neighborhood of $v$ in $G_{< w}$. Once $G_{< w}$ is fixed, consider a random vertex $u$, conditioned on $\w u \in [w,\tilde w]$. Let $R := R(v) := \Pr_u[u \sim N_v \mid G_{< w}]$. 

\begin{claim}\label{cla:Qlowerbound}
$Q := \Pr_u[|N_v \cap \Gamma(u)| \geq cw|N_v|/(2nR) \mid G_{< w} ] \geq \frac{cw}{2n}$.
\end{claim}
\begin{proof}
Let $x:= cw|N_v|/(2nR)$. We first use $|N_v \cap \Gamma(u)| \leq |N_v|$ to bound 
\[
\Ex\big[|N_v \cap \Gamma(u)| \big| G_{< w} \big] \leq Q|N_v| + (R-Q)x \leq Q|N_v|+Rx.
\]
On the other hand,  the left hand side is at least $cw|N_v|/n$ by our choice of $c$. Together, it follows $Q \geq cw/n -Rx/|N_v| = cw/(2n)$, proving the claim.
\end{proof}

Now we distinguish three cases for the vertex $v$. (1) If $|N_v| < k$ then obviously there is nothing to show. (2) If $R\geq w^{\beta}/n$, then 
$$\Ex\big[|\{ u \text{ with } \w{u}\in [w,\tilde w]\text{ and }u \sim N_v\}| \big| G_{< w}, R\geq w^{\beta}/n\big] \geq \Omega\Big(\frac{w^{\beta}}{n}\cdot \frac{n}{w^{\beta-1+\eta}}\Big) = \Omega(w^{\Omega(1)}),$$
because the number of vertices of weight in $[w,\tilde w]$ is at least $\Omega(n/w^{\beta-1+\eta})$ by (PL2). Since every $u$ draws its position and its edges to $V_{\le w}$ independently from each other, we may apply the Chernoff bounds and obtain 
\begin{equation}\label{eq:largeR}
\Pr\big[\exists u \text{ with } \w{u}\in [w,\tilde w]\text{ and }u \sim N_v \big| G_{< w}, R\geq w^{\beta}/n\big] \geq 1-O\big(e^{-w^{\Omega(1)}}\big),
\end{equation}
as desired.

(3) For the last case, $|N_v|\geq k$ and $R< w^{\beta}/n$, we will show that it is very unlikely that this case occurs for a random $v$ (over a random choice in $V_{<w}$). More precisely, let $V_R\subseteq V_{<w}$ be the set of vertices $v$ of weight less than $w$ for which $|N_v|\geq k$ and $R(v) < w^{\beta}/n$. Further, let $\mathcal{E}$ be the event that $|V_R| \geq ne^{-c'w}$, where $c'$ is a constant to be fixed later. Then we will show that $\Pr[\mathcal{E}] = e^{-\Omega(w)}$. Note that with this statement, we can conclude the proof as follows. Let $v$ be a random vertex of weight less than $w$. When we uncover $G_{<w}$, then $\mathcal{E}$ occurs only with probability $e^{-\Omega(w)}$. On the other hand, if $\mathcal{E}$ does not occur, then there at most $ne^{-c'w}$ vertices $v' \in V_{<w}$ for which $|N_{v'}|\geq k$ and $R(v')< w^{\beta}/n$, and the probability that $v$ is among them is at most 
\[\frac{ne^{-c'w}}{|V_{<w}|} = O\Big(\frac{ne^{-c'w}}{n(1-w^{1-\beta+\eta})}\Big)= O(e^{-\Omega(w)})\] for any $\eta>0$. Finally, if $v$ is not among these vertices, then either $|N_v| < k$, and we are done, or $R(v) \geq w^{\beta}/n$, and then $N_v \not\sim V_{\geq w}$ with probability at most $O(e^{-w^{\Omega(1)}})$ by~\eqref{eq:largeR}. Thus the theorem follows by a union bound, and it remains to show the following claim.

\begin{claim} \label{cla:claim2}
Let $V_R:= \{v \in V_{<w} \mid |N_v|\geq k \text{ and }R=R(v)< w^{\beta}/n\}$ and denote by $\mathcal{E}$ the event that $|V_R| \geq ne^{-c'w}$. Then
\begin{equation}
\Pr[\mathcal{E}] = O(e^{-\Omega(w)}).
\end{equation}
\end{claim}

Before we prove Claim~\ref{cla:claim2}, we need some preparation. Sort the vertices $v \in V_R$ decreasingly by $|N_v|$. We go through the list one by one, and pick greedily a set $V_{Gr} \subseteq V_R$ such that the $N_v$, $v\in V_{Gr}$ are pairwise disjoint. Then after this procedure, the following holds.
\begin{claim}
\label{cla:disjointNv}
$\sum_{v \in V_{Gr}} 2|N_{v}|^5 \geq |V_R|$.
\end{claim}
\begin{proof}[Proof of Claim~\ref{cla:disjointNv}]
We prove Claim~\ref{cla:disjointNv} by the following charging argument. Whenever we pick a vertex $v$ to be included into $V_{Gr}$, we inductively define levels $L_s(v) \subseteq V_R$, $s\geq 0$ by $L_0(v) := \{u \in N_{v} \mid |N_u| \leq |N_v|^2\}$ and 
\[L_{s+1}(v) := \bigcup_{v'\in L_s(v)}\big\{u \in N_{v'} \big\vert |N_u| \leq  |N_v|^{2^{-s}} \big\}.\] The vertex $v$ pays one coin to each vertex in $\cup_{s\geq 0} L_s(v)$. We claim that (i) every vertex $v$ that we pick pays at most $2|N_v|^5$ coins, and (ii) every vertex in $V_R$ is paid at least one coin. Note that (i) and (ii) together will imply Claim~\ref{cla:disjointNv}.

To prove (i), we observe that $|L_0(v)| \leq |N_v|$ and $|L_1(v)|/|L_0(v)| \leq |N_v|^2$ by definition of $L_0(v)$, and $|L_{s+1}(v)|/|L_s(v)| \leq |N_v|^{2^{1-s}}$ for all $s\geq 1$ by definition of $L_{s}(v)$. Therefore, $|L_s(v)| \leq |N_v|^{1+2+\sum_{j=1}^{s-1} 2^{1-j}}$. Moreover, for all $s > s_0 :=  \lfloor\log_2\log_k |N_v|\rfloor$ we have $|N_v|^{2^{-s}} < k$, so $L_{s+1} = \emptyset$ by definition of $V_R$. On the other hand, for all $s \leq s_0$ we have $|N_v|^{2^{-s}} \geq k \geq 2$, thus the terms $|N_v|^{3+\sum_{j=1}^{s-1} 2^{1-j}}$ increase at least geometrically fast for $s \leq s_0$. Hence, 
\[\sum_{s=0}^{s_0} |L_s(v)| \leq \sum_{s=0}^{s_0} |N_v|^{3+\sum_{j=1}^{s-1} 2^{1-j}} \leq 2 |N_v|^{3+\sum_{j=1}^{s_0-1} 2^{1-j}} \leq 2 |N_v|^5,\] proving (i).

For (ii), we show the following statement inductively for all vertices $v$. After $v$ has paid its coins, every vertex $u$ which comes after $v$ in the ordering, and for which $N_u \cap N_v \neq \emptyset$ holds, has received at least one coin. Note that it will follow that each vertex that we consider and that we do not pick has been paid by an earlier vertex. So assume that $u$ comes after $v$ in the ordering, and that $N_u \cap N_v \neq \emptyset$. Since we go through the vertices in descending order with respect to $|N_v|$, we have $|N_u|\leq |N_v|$. Let $v' \in N_u \cap N_v$. If $|N_{v'}| \leq |N_v|^2$, then $v'\in L_0$ and $u\in L_1$, so $v$ pays to $u$. If $|N_{v'}| > |N_v|^2$, then we have considered $v'$ before $v$. However, since we picked $v$, and since $v' \in N_v$ (and thus, $v\in N_{v'}$), $v'$ was not picked. Therefore, by induction hypothesis $v'$ had been paid by some earlier vertex $v''$, so $v'\in L_s(v'')$ for some $s \geq 0$. Since $|N_u| \leq |N_v| < |N_{v'}|^{1/2} \leq |N_{v''}|^{2^{-s}}$, we obtain $u \in L_{s+1}(v'')$, so $u$ has been paid by $v''$ as well. This proves (ii), and thus concludes the proof of Claim~\ref{cla:disjointNv}. Note that $2\sum_{v \in V_{Gr}} |N_v|^5 \geq |V_R| \geq ne^{-c'w}$ if $\mathcal{E}$ holds. 
\end{proof}

\begin{proof}[Proof of Claim~\ref{cla:claim2}]
With Claim~\ref{cla:disjointNv}, we can finally prove Claim~\ref{cla:claim2} as follows. Fix a vertex $u$ such that $\w{u}\leq \tilde w$. Then for each position $\x{u}$ of $u$, the expected degree of $u$ conditioned on $\x{u}$ is in $O(\tilde w)$, and it is the sum of independent random variables by Lemmas~\ref{lem:marginal} and~\ref{lem:expecteddegree}. Note that the hidden constant in the $O(\cdot)$-notation is independent of $\w{u}$ and of $\x{u}$. Therefore, by the Chernoff-Hoeffding bound (Theorem~\ref{thm:dubhashichernoff}), there are constants $c', C>0$ independent of $\w{u}$ and $\x{u}$ such that $\Pr[\deg(u)\geq i] \leq e^{-2c'i}$ for all $i \geq C\tilde w$, and this also holds if $u$ is a \emph{random} vertex with weight in $[w,\tilde w]$. So let $u$ be a random vertex with weight in $[w,\tilde w]$, and let 
\[V_u := \big\{v \in V_{Gr} \big\vert |N_v\cap\Gamma(u)|\geq |N_v|cw^{1-\beta}/2\big\}.\] 
Consider the random variables
\[
S_1 := 2\sum_{v \in V_u} |N_{v}|^5 \quad \text{and} \quad S_2 := \frac{cw^{1-\beta}}{2}\sum_{v \in V_u} |N_{v}|.\]
Note that $S_2 \leq \deg(u)$ by definition of $V_u$, and since all $v\in V_u\subseteq V_{Gr}$ have disjoint $N_v$ by construction. Hence, $\Pr[S_2 \geq i] \leq \Pr[\deg(u) \geq i] \leq e^{-2c'i}$ for all $i \geq C\tilde w$. Now consider the expectation of $S_1$ conditioned on $\mathcal{E}$. On the one hand, since we are in the case $R< w^{\beta}/n$, we have $|N_v|cw^{1-\beta}/2 < |N_v|cw/(2nR)$, and thus $\Pr[v \in V_u \mid v \in V_{Gr}] \geq cw/(2n)$ by Claim~\ref{cla:Qlowerbound}. Hence, by Claim~\ref{cla:disjointNv} we have 
\[\Ex[S_1 \mid \mathcal{E}] \geq \frac{cw}{n}\cdot \sum_{v \in V_{Gr}}|N_{v}|^5 \geq \frac{cw}{2}e^{-c'w}.\] 
On the other hand, since $\sum_{v\in V_u} |N_v|^5 \leq (\sum_{v\in V_u} |N_v|)^5$, we may lower-bound $$S_1 \leq 2 \cdot \left(2w^{\beta-1}S_2/c\right)^5.$$ Both inequalities together yield
\begin{align*}
\frac{cwe^{-c'w}}{2} & \leq \Ex[S_1 \mid \mathcal{E}] \leq 2\cdot \Big(\frac{2w^{\beta-1}}{c}\Big)^5\cdot \Ex[S_2^5 \mid \mathcal{E}] \\ & =  2\cdot \Big(\frac{2w^{\beta-1}}{c}\Big)^5\cdot \sum_{i\geq 1} i^5 \Pr[S_2=i \mid \mathcal{E}] \\
& \leq 2\cdot \Big(\frac{2w^{\beta-1}}{c}\Big)^5\cdot \frac{\sum_{i\geq 1} i^5 \Pr[S_2=i]}{\Pr[\mathcal{E}]}.
\end{align*}
Solving for $\Pr[\mathcal{E}]$ yields $\Pr[\mathcal{E}] \leq w^{O(1)}e^{c'w}\sum_{i\geq 1}i^5\Pr[S_2=i]$. Observe that $S_2 >0$ already implies $S_2 > cw^{1-\beta}k/2 =\omega(\tilde w)$, since $|N_{v}| \geq k$ for all $v \in V_{Gr}$. So if $w$ is sufficiently large then the first $C\tilde w$ terms of $\sum_{i\geq 1}i^5\Pr[S_2=i]$ vanish. On the other hand, recall that $\Pr[S_2 \geq i] \leq e^{-2c'i}$ for all $i \geq C\tilde w$. Hence, if $w$ is sufficiently large, 
$$\Pr[\mathcal{E}] \leq w^{O(1)}e^{c'w}\sum_{i\geq C \tilde w}i^5 \Pr[S_2 \geq i] \leq w^{O(1)}e^{c'w}\sum_{i\geq C \tilde w}i^5 e^{-2c'i}  = \tilde w^{O(1)}e^{-\Omega(\tilde w)} = O(e^{-\Omega(w)}).$$
This concludes the proof of Claim~\ref{cla:claim2}, and thus of the lemma.
\end{proof}
\end{proof}

The upper bounds on the diameter and the average distance now follow easily from the lemmas we proved so far. We collect the results in the following theorem, which reformulates and specifies Theorem~\ref{thm:diameter} and Theorem~\ref{thm:avgdist}.

\begin{theorem}[Components and Distances]\label{thm:components}\leavevmode
\begin{enumerate}[(i)]
\item Whp, there is a giant component, i.e., a connected component which contains $\Omega(n)$ vertices.
\item Whp, all other components have at most polylogarithmic size.
\item Whp, the giant component has polylogarithmic diameter.
\item In expectation and with probability $1-o(1)$, the average distance (i.e., the expected distance of two uniformly random vertices in the largest component) is $\frac{2 + o(1)}{|\log \beta-2 |}\log \log n$.
\item With probability $1-o(1)$, a $(1-o(1))$-fraction of all pairs of vertices in the giant component have distance at most $\frac{2 + o(1)}{|\log(\beta-2)|} \log \log n$.
\end{enumerate}
\end{theorem}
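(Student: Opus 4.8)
The plan is to assemble Theorem~\ref{thm:components} from the lemmas already proved, with the only genuinely new work being the upper bound on the average distance in part~(iv). Parts~(i)--(iii) are essentially immediate: the giant component and the fact that a constant fraction of every weight-layer $V_i$ connects to the heavy core come directly from Lemma~\ref{lem:path-to-core}~(iii) and its proof; the polylogarithmic diameter follows by combining Lemma~\ref{lem:diameter-core} (the heavy core has diameter $o(\log\log n)$) with Lemma~\ref{lem:path-to-core}~(ii) (whp every vertex of weight $\ge(\log n)^C$ reaches the core in $O(\log\log n)$ steps) and the bulk lemma applied with a polylogarithmic $k$ (every vertex in the giant is within a polylogarithmic number of steps of a vertex of weight $\ge(\log n)^C$, since either its $k$-neighborhood in $G_{<(\log n)^C}$ already meets $V_{\ge(\log n)^C}$, or it has fewer than $k$ vertices, in which case it is a small component). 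The claim that non-giant components have polylogarithmic size is the contrapositive of the bulk lemma with $w=(\log n)^C$ and $k=\log^{O(1)} n$: a component of superpolylogarithmic size would, whp, contain a vertex reaching $V_{\ge(\log n)^C}$ and hence the heavy core.

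The lower bound in~(iv) is quoted from the literature: since the probability that a fixed path exists is, up to constants, the Chung-Lu probability (Lemma~\ref{lem:marginal}, applied along the path conditioning on positions one vertex at a time), the first-moment bound on the number of short paths carries over verbatim from~\cite{dereich2012typical}, giving average distance at least $(2-o(1))\frac{\log\log n}{|\log(\beta-2)|}$ with probability $1-o(1)$ and in expectation. So the real content is the matching upper bound. Here the strategy is: fix $\eps>0$; pick two uniformly random vertices $x,y$ in the giant component; we want to show $\dist(x,y)\le(2+O(\eps))\frac{\log\log n}{|\log(\beta-2)|}$ with probability $1-o(1)$. By symmetry it suffices to show that from a random giant-component vertex there is whp a path of length $(1+O(\eps))\frac{\log\log n}{|\log(\beta-2)|}$ to the heavy core, and then concatenate the two paths through the (diameter $o(\log\log n)$) core. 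The greedy-path lemma already handles vertices of weight $\ge(\log n)^C$. For the remaining, low-weight vertices the plan is to iterate the bulk lemma: set $k_0$ to be a constant and, at stage $j$, use $k_j$-neighborhoods in $G_{<w_j}$ for a geometrically growing sequence of weight thresholds $w_j$ (roughly $w_{j+1}=w_j^{\tau}$ as in the greedy-path proof) and correspondingly $k_j\ge w_j^{\beta+\eps}$. The bulk lemma says that, except with probability $O(e^{-w_j^{\Omega(1)}})$ over the random vertex, either the $k_j$-neighborhood in $G_{<w_j}$ already reaches a vertex of weight $\ge w_j$ (and then we are done by the greedy path from there, once $w_j\ge(\log n)^C$), or that neighborhood has fewer than $k_j$ vertices — but then it is a connected set of size $<k_j$, and since the giant has linear size, a random giant-component vertex lies in such a tiny set with probability $o(1)$; more carefully, one argues that whp only an $o(1)$-fraction of giant vertices have this obstruction at every stage, by summing the $O(e^{-w_j^{\Omega(1)}})$ bounds and using that once $w_j$ is a large enough (but still $n^{o(1)}$) polylog, the failure probability is $n^{-\omega(1)}$.

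The bookkeeping step — carefully summing the path lengths contributed by the successive applications of the bulk lemma and the final greedy segment — is where the constant $2$ (rather than $1$) appears, and where one must be careful: each bulk-lemma step from threshold $w_j$ to $w_{j+1}=w_j^{\tau}$ costs $k_j\approx w_j^{\beta+\eps}$ \emph{hops}, not one hop, and summing $\sum_j w_j^{\beta+\eps}$ over $j$ from a constant up to $j^*$ with $w_{j^*}\approx(\log n)^C$ must be shown to be $o(\log\log n)$ — indeed $w_{j}$ reaches $(\log n)^C$ after only $O(\log\log\log n)$ stages while each $w_j^{\beta+\eps}$ stays polylogarithmic, so the total is $\log^{O(1)}n$, which is \emph{not} $o(\log\log n)$. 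This is the main obstacle, and the resolution is that the bulk-lemma hops are only needed to reach the regime where \emph{some} vertex of weight $\approx(\log n)^C$ is hit, after which the greedy path of length $(1+\eps)\frac{\log\log n}{|\log(\beta-2)|}$ takes over; so one should run the bulk lemma only for a \emph{single} well-chosen $k=\log^{O(1)}n$ and $w=(\log n)^C$, paying $\log^{O(1)}n$ hops once — but this ruins the $(2+o(1))$ bound. The correct accounting, matching~\cite{abdullah2015typical,dereich2012typical}, is that one applies the bulk lemma with $k$ \emph{constant} (so $w$ constant) to get from a random vertex to a vertex of weight exceeding a large constant in $O(1)$ steps with probability $1-o(1)$, then the greedy path from weight $\Theta(1)$ to the heavy core of length $(1+\eps)\frac{\log\log n}{|\log(\beta-2)|}$ by Lemma~\ref{lem:path-to-core}~(i); doing this from both endpoints and routing through the core gives $2(1+\eps)\frac{\log\log n}{|\log(\beta-2)|}+o(\log\log n)$. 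The delicate point is thus to verify that Lemma~\ref{lem:path-to-core}~(i) with $w$ a large constant gives failure probability a small \emph{constant}, not $o(1)$, so one must instead first reach weight $\omega(1)$ (say $\log\log n$) via the bulk lemma with $k=\poly\log\log n$ — costing $o(\log\log n)$ hops — and only then apply the greedy path, whose failure probability $e^{-w^{\Omega(\eps)}}$ is then $o(1)$. Finally one takes a union bound over both endpoints, converts the ``probability $1-o(1)$ over a random vertex'' statements into the expectation statement by noting the diameter is polylogarithmic (so the $o(1)$-fraction of bad pairs contributes $o(\log\log n)$ to the expectation), and concludes.
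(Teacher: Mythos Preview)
Your outline for parts~(i)--(iii) and the lower bound in~(iv) is correct and matches the paper's argument. The genuine gap is in the upper bound on the \emph{expected} average distance (and, for the same reason, in the concentration statement).

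Your final plan is: apply the bulk lemma once with threshold $w=\log\log n$ and $k=(\log\log n)^{\beta+\eps}=o(\log\log n)$, reaching $V_{\ge w}$ in $o(\log\log n)$ hops with failure probability $O(e^{-(\log\log n)^{\Omega(1)}})$; then apply the greedy path from weight $w$, with failure probability $O(e^{-w^{\Omega(\eps)}})=O(e^{-(\log\log n)^{\Theta(\eps)}})$. You then claim that the $o(1)$-fraction of bad pairs, times the polylogarithmic diameter, contributes $o(\log\log n)$ to the expectation. This step fails. The diameter bound is $(\log n)^{C'}$ for some constant $C'$, and the greedy failure probability from weight $\log\log n$ is $e^{-(\log\log n)^{\Theta(\eps)}}$ with exponent $\Theta(\eps)<1$ (this exponent is forced to be small because you need $\tau$ close to $1/(\beta-2)$ to get the tight length bound). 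Hence the bad contribution is of order $\exp\!\big(C'\log\log n-(\log\log n)^{\Theta(\eps)}\big)\to\infty$. No single threshold $w$ fixes this: to make $k=w^{\beta+\eps}=o(\log\log n)$ you need $w\le(\log\log n)^{1/3}$, but to make $e^{-w^{\Theta(\eps)}}\cdot(\log n)^{C'}=o(1)$ you need $w^{\Theta(\eps)}\gg\log\log n$, i.e.\ $w\gg(\log\log n)^{\Theta(1/\eps)}$; these are incompatible for small $\eps$.

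The paper's resolution is to avoid a single threshold altogether and instead establish a \emph{tail bound valid for every} $\ell\ge 3$: choosing $w=w(\ell)=\ell^{1/(\beta+1)}$, the bulk lemma with parameter $k=\ell$ together with the greedy lemma from weight $w(\ell)$ gives
\[
\Pr\!\big[\infty>\dist(v,\barV)\ge \ell+\lambda_\eps\big]\le O\!\big(e^{-\ell^{\Omega(\eps)}}\big),
\qquad \lambda_\eps:=(1+\eps)\tfrac{\log\log n}{|\log(\beta-2)|}.
\]
Summing over $\ell$ via $\sum_{\ell\ge1}e^{-\ell^\kappa}\le\Gamma(1+1/\kappa)\le(1/\kappa)^{O(1/\kappa)}$ yields $\Ex[\dist(v,\barV)\mid v\text{ in giant}]\le\lambda_\eps+(1/\eps)^{O(1/\eps)}$, and then one lets $\eps=\eps(n)\to 0$ slowly enough that $(1/\eps)^{O(1/\eps)}=o(\log\log n)$. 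The same tail bound, combined with Markov's inequality over the random graph $G$, gives the concentration statement $\Pr_G[\Ex_{u,u'}[\dist(u,u')]\ge 2\lambda_\eps]=o(1)$; your proposal conflates this with the (easier) statement $\Pr_{G,u,u'}[\dist(u,u')\ge 2\lambda_\eps]=o(1)$ and does not supply the extra Markov step.
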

\begin{proof}(i) has been proven with Lemma~\ref{lem:path-to-core}~(iii). For (ii) and (iii) we fix a sufficiently small constant $\eps >0$ and conclude from the same lemma that whp the giant contains all vertices of weight at least $w:=(\log n)^C$, for a suitable constant $C>0$, and that whp all such vertices have distance at most $\frac{1+\eps}{|\log(\beta-2)|} \log \log n$ from the heavy core $\barV$. 
We apply Lemma~\ref{lem:path-to-core3} with  $\ell=w^{\beta+\eps}$. Then a random vertex in $V_{< w}$ has probability at least $1- e^{-w^{\Omega(1)}}$ to either be at distance at most $\ell$ of $V_{\geq w}$, or to be in a component of size less than $\ell$. Note that for sufficiently large $C$ this probability is at least $1-n^{-\omega(1)}$. By the union bound, whp one of the two options happens for all vertices in $V_{< w}$. This already shows that whp all non-giant components are of size less than $\ell=(\log n)^{O(1)}$. For the diameter of the giant, recall that whp the heavy core has diameter $o(\log \log n)$ by Lemma~\ref{lem:diameter-core}. Therefore, whp the diameter of the giant component is $O(\ell+\log \log n) = (\log n)^{O(1)}$.

For the average distance, let $\eps >0$, and let $v \in V$ be a vertex chosen uniformly at random. Fix $\ell\geq 3$, $\ell = n^{o(1)}$, and let $w := w(\ell) = \ell^{1/(\beta+1)}$. We sort the vertices by weight and uncover the graph vertex by vertex in increasing order, until either (1) we see for the first time a vertex $v' \in V_{\ge w}$ such that in the subgraph induced by $V_{\le \w {v'}}$ there exists a path of length at most $\ell$ from $v$ to $v'$, or (2) we have uncovered the full graph and (1) never happened. If $\w v \ge w$, then (1) trivially occurs. Otherwise, by Lemma~\ref{lem:path-to-core3}, with probability $1-O(\exp(-w^{\Omega(1)}))$ either case (1) happens or the connected component of $v$ in $G$ has size less than $\ell$. In the latter case, $v$ is not connected to the core and there is nothing to show. Otherwise, we have uncovered only the vertices of weight at most $\w {v'}$, which allows us to apply Lemma~\ref{lem:path-to-core}~(i) since its statement only depends on vertices of \emph{higher} weight. By Lemma~\ref{lem:path-to-core}~(i), with probability $1-O(\exp(-w^{\Omega(\eps)}))$ there is a weight-increasing path from $v'$ to the heavy core of length at most $\lambda_\eps := (1+\eps) \frac{\log \log n}{|\log(\beta-2)|}$. Summarizing, we have shown that for a random vertex $v$ and every $\ell\geq 3$ with $\ell = n^{o(1)}$
\begin{equation}\label{eq:coredistance}
\Pr\big[\infty > \dist(v,V_{\text{core}}) \geq \ell+ \lambda_\eps \big] \leq  e^{-\Omega(w(\ell)^{\Omega(\eps)})} = O\big(e^{-\ell^{\Omega(\eps)}}\big).
\end{equation}

Let us first consider the expectation of the average distance, i.e., if $u,u'$ denote random vertices in the largest component of a random graph $G$ then we consider $\Ex_G[ \Ex_{u,u'} [\dist(u,u') ] ]$. Since $\dist(u,u') \le n$ we can condition on any event happening with probability $1 - n^{-\omega(1)}$, in particular we can condition on the event $\mathcal E$ that $G$ has a giant component containing $V_{\text{core}}$, all other components have size $(\log n)^{O(1)}$, $G$ has diameter $(\log n)^{O(1)}$, and finally the core has diameter $d_{\text{core}} = o(\log \log n)$. Moreover, by bounding 
\[\dist(u,u') \le \dist(u,V_{\text{core}}) + \dist(u',V_{\text{core}}) + d_{\text{core}}\] 
it suffices to bound $2 \cdot \Ex_G[ \Ex_u[ \dist(u,V_{\text{core}})] \mid \mathcal E] + d_{\text{core}}$.
Now, since $\Ex[X] = \sum_{\ell> 0} \Pr[X\geq \ell]$ holds for a random variable $X$ taking values in $\mathbb{N}_{\ge 0}$, we can bound 
\[ \Ex_u[\dist(u,V_{\text{core}})] \le \lambda_\eps + \sum_{\ell = 1}^{(\log n)^{O(1)}} \Pr_u\big[ \dist(u,V_{\text{core}}) \ge \ell + \lambda_\eps \big]. \]
Note that conditioned on $\mathcal E$, since $u$ is chosen uniformly at random from the giant component, $\dist(u,V_{\text{core}}) < \infty$. Taking expectation over $G$, conditioned on $\mathcal E$, we may use \eqref{eq:coredistance} to bound the probability that $\dist(v,V_{\text{core}})$ is too large for a vertex chosen uniformly at random from $V$. Since the giant has size $\Omega(n)$, this probability increases at most by a constant factor if we instead choose $v$ uniformly at random from the giant. Hence, for every constant $\eps>0$ we obtain
\begin{equation}\label{eq:expavdist}
\Ex_G[ \Ex_u[\dist(u,V_{\text{core}})] \mid \mathcal E] 
  \le \lambda_\eps + \sum_{\ell = 1}^{(\log n)^{O(1)}} O(e^{-\ell^{\Omega(\eps)}}) + n^{-\omega(1)}. 
\end{equation}
We now use the inequality
\[ \sum_{\ell=1}^\infty e^{-\ell^\kappa} \le \int_{x=0}^\infty e^{-x^\kappa} d x = \Gamma(1+1/\kappa), \]
where $\Gamma$ is Euler's Gamma function. Since $\Gamma(x)$ is monotonicly increasing on the real axis for $x \ge 2$ and $\Gamma(1+n) = n!$, we have $\Gamma(1+1/\kappa) \le \lceil 1/\kappa \rceil ! \le (1/\kappa)^{O(1/\kappa)}$ for $\kappa \le 1$. Plugging this into equation~\eqref{eq:expavdist} yields
\[\Ex_G\big[ \Ex_u[\dist(u,V_{\text{core}})] \mid \mathcal E\big] 
  \le \lambda_\eps + O(1/\eps)^{O(1/\eps)} + n^{-\omega(1)}. \]
Note that for sufficiently slowly falling $\eps = \eps(n) = o(1)$ we have 
$O(1/\eps)^{O(1/\eps)} = o(\log \log n)$.
This yields the desired bound on the expected average distance of 
\[2\lambda_{o(1)} + o(\log \log n) = \frac{2+o(1)}{\log|\beta-2|}\log \log n.\]

For the concentration, we want to show $\Pr_G[ \Ex_{u,u'}[ \dist(u,u') ] \ge 2\lambda_{\eps}]=o(1)$, where we choose the same $\eps(n)=o(1)$ as before. We may upper-bound $\Pr_G[ \Ex_{u,u'}[ \dist(u,u') ] \ge 2\lambda_{\eps}]$, similarly as before, by
\[ n^{-\omega(1)} + \Pr_G\big[ 2\cdot \Ex_u[\dist(u,V_{\text{core}})] + d_{\text{core}} \ge 2\lambda_{\eps} \mid \mathcal E \big].\]
Let $\gamma > 0$ be a sufficiently small constant, and let 
\[\rho  =\rho(n) = \frac{\eps}{3|\log(\beta-2)|} \cdot \log \log n = o(\log \log n).\] 
We claim that for sufficiently large~$n$, $2\cdot \Ex_u[\dist(u,V_{\text{core}})] + d_{\text{core}} \ge 2\lambda_{\eps}$ can only happen if for some $\ell > \rho$ we have 
\begin{equation}\label{eq:avgdistsomecondition}
\Pr_u[\dist(u,V_{\text{core}}) \ge \ell + \lambda_{\eps/3}] \ge e^{-2\ell^{\gamma \cdot \eps}}.
\end{equation}
Indeed, otherwise we have (conditioned on $\mathcal E$), similarly as before
$$
\Ex_u[\dist(u,V_{\text{core}})] \le \lambda_{\eps/3} + \rho + \sum_{\ell = \rho}^{(\log n)^{O(1)}} \Pr_u\big[\dist(u,V_{\text{core}}) \ge \ell + \lambda_{\eps/3}\big] \le \lambda_{2\eps/3} + O(1/\eps)^{O(1/\eps)}, 
$$
and thus, indeed $\Ex_{u,u'}[\dist(u,u')]$ is at most
\[2 \cdot \Ex_u[\dist(u,V_{\text{core}})] + d_{\text{core}} \le 2 \lambda_{2\eps/3} + o(\log \log n) < 2 \lambda_{\eps}=2\lambda_{o(1)},\] 
if $\eps=\eps(n) = o(1)$ decreases sufficiently slowly.
However, using the union bound over all $\rho \le \ell \le (\log n)^{O(1)}$, the probability that $G$ is such that \eqref{eq:avgdistsomecondition} holds for some $\ell > \rho$ is bounded from above by
\[ \sum_{\ell = \rho}^{(\log n)^{O(1)}} \Pr_G\big[ \Pr_u[\dist(u,V_{\text{core}}) \ge \ell + \lambda_{\eps/3}] \ge e^{-\ell^{\gamma \cdot \eps}} \big| \mathcal E \big]. \]
By \eqref{eq:coredistance} it follows that 
\[\Ex_G[\Pr_u[\infty > \dist(u,V_{\text{core}}) \ge \ell + \lambda_{\eps/3}]] \le O(\exp(-2\ell^{\gamma\cdot\eps})),\] for $\gamma>0$ sufficiently small. We apply Markov's inequality and deduce that 
\[\Pr_G\big[\Pr_u[\infty > \dist(u,V_{\text{core}}) \ge \ell + \lambda_{\eps/3}] > e^{-\ell^{\gamma \cdot \eps}} \big] \le O(e^{-\ell^{\Omega(\eps)}}).\]

Because the giant has linear size, this probability increases at most by a constant factor if we instead draw $v$ from the giant component (conditioned on $\mathcal E$). Thus, the desired probability is bounded by
\begin{equation}\label{eq:avgdistlastbound}
\sum_{\ell = \rho}^{(\log n)^{O(1)}} O(e^{-\ell^{\Omega(\eps)}}), 
\end{equation}
which is $o(1)$, since $\rho = \eps \cdot \log \log n$ grows sufficiently quickly compared to a sufficiently slowly falling $\eps = o(1)$. This shows the concentration of the average distance and proves statement~(iv). 

Regarding the last statement~(v), \eqref{eq:avgdistlastbound} shows that with probability $1-o(1)$ $G$ is such that \eqref{eq:avgdistsomecondition} does not hold for any $\ell > \rho$. However, in this case the fraction of pairs $\{u,u'\}$ of vertices in the giant that have distance at least $2\ell + 2\lambda_{\eps/3}$ is at most $e^{-4\ell^{\gamma\cdot\eps}}$. By taking $\ell=2\rho$ and assuming that $\rho=\eps\log\log n$ grows sufficiently quickly compared to $\eps=o(1)$, we see that a $(1-o(1))$-fraction of pairs $\{u,u'\}$ has distance at most $2\lambda_{o(1)}$, given that $\rho=\eps \cdot \log\log n$ grows sufficiently fast compared to $\eps=o(1)$. This finishes the proof of Theorem~\ref{thm:components}.
\end{proof}

\section{Degree Sequence}
\label{sec:degreesequence}

By definition of the model, we are assuming that the weight sequence $\w{}$ follows a power law. Since the expected degree of a vertex with weight $\w v$ is $\Theta(\w v)$ by Lemma~\ref{lem:expecteddegree}, it is not surprising that the degree sequence of the random graph will also follow a power law. In this section, we give details and prove Theorem~\ref{thm:degseq1}, where we use Theorem~\ref{thm:concentration} for showing concentration. Some ideas of our proof are based on \cite{gugelmann2012random}. We start with the maximum degree $\Delta(G)$, which is a simple corollary of Lemma~\ref{lem:largevertices}.

\begin{corollary} \label{cor:maxdegree}
Whp, 
$\Delta(G)=\Theta(\wmax)$, where $\wmax = \max\{\w v \mid v \in V\}$. In particular,  for all $\eta>0$, whp, $\Delta(G)=\Omega(\barw)$ and $\Delta(G)=O(n^{1/(\beta-1-\eta)})$.
\end{corollary}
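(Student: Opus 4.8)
The plan is to read off the corollary directly from the degree-concentration results of Lemma~\ref{lem:largevertices}, once we observe that the maximum weight $\wmax$ is large. First I would pin down $\wmax$ using only the power-law assumptions: on the one hand, by the discussion following (PL2) there are $n^{\Omega(1)}$ vertices of weight at least $\barw$, so in particular $V_{\geq \barw}\neq\emptyset$ and hence $\wmax \ge \barw$; on the other hand, the upper bound in (PL2) gives, for any $\eta>0$, a constant $c_2$ with $|V_{\geq w}| \le c_2 n / w^{\beta-1-\eta}$, which is smaller than $1$ once $w > (c_2 n)^{1/(\beta-1-\eta)}$, so no vertex has weight exceeding $(c_2 n)^{1/(\beta-1-\eta)}$, i.e.\ $\wmax = O(n^{1/(\beta-1-\eta)})$. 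Since $\barw \ge n^{\omega(1/\log\log n)}$, in particular $\wmax \ge \barw = \omega(\log^2 n)$.

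For the lower bound on $\Delta(G)$, let $v^\star$ be a vertex of maximum weight. Because $\w{v^\star} = \wmax = \omega(\log^2 n)$, Lemma~\ref{lem:largevertices}~(ii) applies and yields $\deg(v^\star) = (1+o(1))\Ex[\deg(v^\star)] = \Theta(\wmax)$ whp, so $\Delta(G) \ge \deg(v^\star) = \Omega(\wmax)$ whp. For the upper bound, Lemma~\ref{lem:largevertices}~(i) gives that whp $\deg(v) = O(\w v + \log^2 n)$ for \emph{all} $v$ simultaneously; since $\w v \le \wmax$ and $\log^2 n = o(\wmax)$, this is $O(\wmax)$, hence $\Delta(G) = O(\wmax)$ whp. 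Combining the two bounds gives $\Delta(G) = \Theta(\wmax)$ whp.

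Finally, the two ``in particular'' claims follow immediately: $\Delta(G) = \Omega(\wmax) = \Omega(\barw)$ whp by the lower bound on $\wmax$, and $\Delta(G) = O(\wmax) = O(n^{1/(\beta-1-\eta)})$ whp by the deterministic upper bound on $\wmax$ coming from (PL2). I do not expect a genuine obstacle here; the only point needing a little care is checking $\wmax = \omega(\log^2 n)$, so that the sharper concentration in Lemma~\ref{lem:largevertices}~(ii) is available rather than merely the weaker bound~(i), and this is immediate from $\barw \ge n^{\omega(1/\log\log n)}$.
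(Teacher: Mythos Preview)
Your proposal is correct and follows essentially the same route as the paper's own proof: the paper notes $\omega(\log^2 n)\le\barw\le\wmax=O(n^{1/(\beta-1-\eta)})$ from the model definition and then says Lemma~\ref{lem:largevertices} directly implies the statement. You have simply unpacked this, spelling out explicitly how (i) and (ii) of Lemma~\ref{lem:largevertices} yield the upper and lower bounds on $\Delta(G)$, and how (PL2) gives the deterministic bounds on $\wmax$.
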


\begin{proof}
We deduce from the model definition that $\omega(\log^2 n) \le \barw \le \wmax =O(n^{1/(\beta-1-\eta)})$. Then Lemma~\ref{lem:largevertices} directly implies the statement.
\end{proof}

Next, we calculate the expected number of vertices that have degree at least $d$.

\begin{lemma} \label{lem:expnrverticesdegree}
Let $\eta>0$ be sufficiently small. Then for all $d \ge 1$, $d=d(n)=o(\barw)$, we have
$$\Omega\big(n d^{1-\beta-\eta}\big) \le \Ex[\#\{v \in V \mid \deg(v) \geq d\}] \le O \big(n d^{1-\beta+\eta}\big).$$
\end{lemma}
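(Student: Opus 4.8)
The plan is to estimate $\Ex[\#\{v \mid \deg(v) \ge d\}]$ by summing, over all vertices $v$, the probability that $\deg(v) \ge d$, and to split this sum according to the weight of $v$. Recall from \lemref{expecteddegree} that $\Ex[\deg(v)] = \Theta(\w v)$, and from \lemref{marginal} that (even after fixing $\x v$) the degree of $v$ is a sum of independent Bernoulli variables, so Chernoff bounds apply. The rough intuition is that $\deg(v) \ge d$ is ``easy'' when $\w v$ is comparable to or larger than $d$, and exponentially unlikely when $\w v \ll d$; conversely a vertex of weight $\gg d$ essentially always has degree $\ge d$. Since the number of vertices of weight $\ge w$ is $\Theta(n w^{1-\beta \pm \eta})$ by (PL2) (valid for $w \le \barw$, which is why we need $d \ll \barw$), the dominant contribution to the count comes from vertices of weight $\Theta(d)$, giving $\Theta(n d^{1-\beta})$ up to the $\eta$-slack in the exponent.

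For the \textbf{upper bound}, I would write $\Ex[\#\{v : \deg(v) \ge d\}] = \sum_v \Pr[\deg(v) \ge d]$ and split into $V_{\ge d/2}$ (say) and $V_{< d/2}$. For $v \in V_{\ge d/2}$ we just use $\Pr[\deg(v)\ge d]\le 1$ and bound $|V_{\ge d/2}| \le O(n d^{1-\beta+\eta})$ by (PL2). For $v \in V_{<d/2}$ we have $\Ex[\deg(v) \mid \x v] = \Theta(\w v) \le c \w v$ for a uniform constant $c$; choosing a further split at weight roughly $d/(4ec)$ we can use \thmref{dubhashichernoff}.(iii) (the regime $t > 2e\Ex[X]$) to get $\Pr[\deg(v) \ge d] \le 2^{-d}$ for the very light vertices, and for weights in the intermediate band $[d/(4ec), d/2]$ we use \thmref{dubhashichernoff}.(i) with a constant $\eps$ to get $\Pr[\deg(v)\ge d] \le \exp(-\Omega(d))$. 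Summing $n$ such terms still gives $n e^{-\Omega(d)}$, which is $o(n d^{1-\beta+\eta})$ since $d \ge 1$ — actually one should be slightly careful: for $d$ a small constant $e^{-\Omega(d)}$ is not small, but then $d^{1-\beta+\eta}$ is a constant too, so the bound $O(n d^{1-\beta+\eta})$ still absorbs it. So the upper bound follows.

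For the \textbf{lower bound}, it suffices to produce $\Omega(n d^{1-\beta-\eta})$ vertices that each have $\Pr[\deg(v) \ge d] = \Omega(1)$ (in fact one can even get $1 - o(1)$). Take all vertices of weight in $[Kd, \barw]$ for a suitable large constant $K$; by (PL2) there are $\Omega(n (Kd)^{1-\beta-\eta}) = \Omega(n d^{1-\beta-\eta})$ of them (using that $Kd \ll \barw$, guaranteed by $d \ll \barw$). For such a $v$, $\Ex[\deg(v)] = \Theta(\w v) \ge \Theta(Kd) \ge 2d$ if $K$ is large enough, so by \thmref{dubhashichernoff}.(ii) with $\eps = 1/2$, $\Pr[\deg(v) \ge d] \ge \Pr[\deg(v) \ge \tfrac12 \Ex[\deg(v)]] \ge 1 - \exp(-\Omega(\Ex[\deg(v)])) \ge 1 - \exp(-\Omega(d)) = \Omega(1)$. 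Summing over these vertices gives $\Ex[\#\{v:\deg(v)\ge d\}] \ge \Omega(1)\cdot \Omega(n d^{1-\beta-\eta}) = \Omega(n d^{1-\beta-\eta})$, as desired.

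The main obstacle — really the only subtlety — is bookkeeping the $\eta$'s and the constants so that the weight bands line up: (PL2) gives the lower bound on $|V_{\ge w}|$ only up to $w \le \barw$, so one must check $Kd \le \barw$ holds under the hypothesis $d \ll \barw$ (choosing $\eta$ and hence the implicit meaning of ``$\ll$'' appropriately), and one must confirm that the constant $c$ in $\Ex[\deg(v)\mid \x v] \le c\w v$ from \lemref{expecteddegree} is uniform over $v$ and over $\x v$ (it is, by (EP1) and \lemref{marginal}). There is no concentration or geometry needed here — everything reduces to Chernoff bounds on independent Bernoullis plus the weight-count estimates (PL2). A union-bound / concentration statement will be needed later to turn this expectation bound into a whp statement for \thmref{degseq1}, but that is deferred.
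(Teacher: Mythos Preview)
Your proposal is correct and follows essentially the same approach as the paper: split the sum $\sum_v \Pr[\deg(v)\ge d]$ by weight, use (PL2) to count heavy vertices, and use Chernoff bounds on the (conditionally independent) edge indicators to control the light vertices. The paper is marginally cleaner in that it chooses the single split point so that $2e\,\Ex[\deg(v)] \le 3d/4$ and applies \thmref{dubhashichernoff}.(iii) directly, avoiding your intermediate band; correspondingly you should make sure your split point ``$d/2$'' is really $d/(2c)$ (with $c$ the upper constant in $\Ex[\deg(v)\mid \x v]\le c\w v$) so that Chernoff (i) actually applies there --- but you already flagged exactly this kind of constant-bookkeeping as the only subtlety.
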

\begin{proof}
Let $\eta>0$ be sufficiently small. Recall that by Lemma~\ref{lem:expecteddegree}, it holds $\Ex[\deg(v)]=\Theta(\w v)$ for every vertex $v \in V$. Let $1 \le d \ll \barw$ and let $v$ be any vertex with weight $w_v \ge \Omega(d)$ large enough such that $\Ex[\deg(v)] \ge 2d$. Then by a Chernoff bound
$$\Pr[\deg(v) < d] \le \Pr[\deg(v) < 0.5 \Ex[\deg(v)]] \le e^{-\Ex[\deg(v)]/8} \le e^{-d/4} \le e^{-1/4}.$$ 
By the power-law assumption (PL2) there are $\Omega(n d^{1-\beta-\eta})$ vertices with weight $\Omega(d)$, and a single vertex of this set has degree at least $d$ with probability at least $1-e^{-1/4}$. By linearity of expectation, $\Ex[\#\{v \in V \mid \deg(v) \geq d\}]=\sum_{v \in [n]} \Pr[\deg(v)\geq d]=\Omega(n d^{1-\beta-\eta})$.

Next let $v$ be a vertex with weight $\w v \le O(d)$ small enough such that $2e\Ex[\deg(v)] \le 3d/4$. By a Chernoff bound (Theorem~\ref{thm:dubhashichernoff}.(iii)) we obtain 
$$\Pr[\deg(v) \ge d] \le \Pr[\deg(v) > 3d/4] \le 2^{-3d/4}.$$  
Thus, for the upper bound it follows
\begin{align*}
\Ex[\#\{v \in V \mid \deg(v) \geq d\}]&=\sum_{v \in [n]} \Pr[\deg(v)\geq d] \\&\le \vert V_{\ge O(d)}\vert + \!\!\!\!\sum_{v \in V_{\le O(d)}}\!\! \Pr[\deg(v)\geq d]\\
&\le O(n d^{1-\beta+\eta}) + n\cdot 2^{-3d/4}.
\end{align*}
Note that $d^2 \le 3 \cdot 2^{3d/4}$ holds for all $d \ge 1$. Hence
$n\cdot 2^{-3d/4} \le 3nd^{-2} < 3nd^{1-\beta+\eta}$ and indeed  it holds $\Ex[\#\{v \in V \mid \deg(v) \geq d\}] =O( n d^{1-\beta+\eta})$.
\end{proof}

After these preparations we come to the main theorem of this section which is a more precise formulation of Theorem~\ref{thm:degseq1} and states that the degree sequence follows a power law with the same exponent $\beta$ as the weight sequence.

\begin{theorem} \label{thm:degsequence}
For all $\eta>0$, whp we have
$$\Omega\big(n d^{1-\beta-\eta}\big) \le \#\{v \in V \mid \deg(v) \geq d\} \le O \big(n d^{1-\beta+\eta}\big),$$
where the first inequality holds for all $1 \leq d \leq \barw$ and the second inequality holds for all $d \ge 1$.
\end{theorem}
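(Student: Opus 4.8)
The plan is to combine the first-moment bounds of Lemma~\ref{lem:expnrverticesdegree} with a concentration argument based on Theorem~\ref{thm:concentration}, where the point is that the $\pm\eta$ slack in the exponent is generous enough to absorb all constant factors and fluctuations. Write $X_d := \#\{v \in V \mid \deg(v) \ge d\}$. By Corollary~\ref{cor:maxdegree}, whp $\Delta(G) = O(n^{1/(\beta-1-\eta)})$, so it suffices to prove the bounds for $1 \le d \le n^{1/(\beta-1-\eta)}$ (for larger $d$ the upper bound reads $X_d = 0 = O(nd^{1-\beta+\eta})$ and no lower bound is required). I would fix a large constant $C_1$ and split into a \emph{large-weight regime} $d \ge (\log n)^{C_1}$ and a \emph{small-weight regime} $d < (\log n)^{C_1}$.

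For the large-weight regime no concentration inequality is needed. Conditioning on the whp event of Lemma~\ref{lem:largevertices}(i)--(ii), there are constants $0 < c_- < c_+$ with $c_-\w v \le \deg(v) \le c_+\w v$ for all $v$ with $\w v = \omega(\log^2 n)$, and $\deg(v) = O(\log^2 n)$ for the rest. For $d \ge (\log n)^{C_1}$ (say $C_1 \ge 3$) this forces $V_{\ge d/c_-} \subseteq \{v \mid \deg(v)\ge d\} \subseteq V_{\ge d/c_+}$, hence $|V_{\ge d/c_-}| \le X_d \le |V_{\ge d/c_+}|$ \emph{deterministically} on that event. The upper bound $X_d = O(nd^{1-\beta+\eta})$ now follows from (PL2); the lower bound $X_d = \Omega(nd^{1-\beta-\eta})$ follows from (PL2) for $d \le c_-\barw$, and for $c_-\barw < d \le \barw$ by monotonicity, $X_d \ge X_{\lceil c_-\barw\rceil} = \Omega(n\barw^{1-\beta-\eta}) = \Omega(nd^{1-\beta-\eta})$ since $d = \Theta(\barw)$.

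The small-weight regime is where Theorem~\ref{thm:concentration} comes in, and this is the \textbf{main obstacle}: $X_d$ is \emph{not} Lipschitz with a small constant, since re-drawing the position of one heavy vertex $a$ changes the degrees of its $\Theta(\w a) = n^{\Omega(1)}$ neighbours, so the single Lipschitz constant permitted by Theorem~\ref{thm:concentration} would be polynomial in $n$ and the bound vacuous. I would resolve this by first conditioning on the ``heavy randomness'' $\mathcal F$: the positions of all vertices of weight $> W^* := (\log n)^{C_2}$ (with $C_2$ a constant chosen much larger than $C_1$), together with all edges incident to these heavy vertices. After this conditioning the heavy contribution $B_d := \#\{v \in V_{>W^*} \mid \deg(v)\ge d\}$ and, for each light $v$, its number $h_v$ of heavy neighbours are fixed, and $\deg(v) = h_v + \delta_v$ with $\delta_v$ the number of light neighbours of $v$. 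The remaining randomness — the positions of the light vertices and the edge-indicator variables among light vertices, an independent family in the product space set up in the proof of Claim~\ref{cla:nrbadvertices} — determines only the $\delta_v$, and conditionally on the good event of Lemma~\ref{lem:largevertices}(i) (every light vertex has degree $(\log n)^{O(1)}$) changing two of these variables changes $X_d$ by only $(\log n)^{O(1)}$. So Theorem~\ref{thm:concentration} with $M := n$ and $t := n^{3/4}$ yields $X_d = \Ex[X_d \mid \mathcal F] \pm n^{3/4}$ whp, and $n^{3/4}$ is negligible against $nd^{1-\beta\pm\eta} \ge n/(\log n)^{O(1)}$.

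It then remains to show $\Ex[X_d \mid \mathcal F] = \Theta(nd^{1-\beta\pm\eta})$ for whp-typical $\mathcal F$. Writing $\Ex[X_d \mid \mathcal F] = B_d + \sum_{v \in V_{\le W^*}} \Pr[\delta_v \ge d - h_v]$ and noting $\delta_v$ is independent of $\mathcal F$ with $\Ex[\delta_v] = \Theta(\w v)$ (by Lemma~\ref{lem:totalweight}(iii)--(iv), since all marginals are $o(1)$ for $\w v \le W^*$), the lower bound follows from $\Pr[\delta_v \ge d] = \Omega(1)$ for $\w v \ge C_- d$ (Chernoff, $C_-$ a constant) together with (PL2), using $\#\{v \mid C_- d \le \w v \le W^*\} \ge |V_{\ge C_-d}| - |V_{>W^*}| = \Omega(nd^{1-\beta-\eta})$, where $C_2 \gg C_1$ makes the subtracted term $O(n(W^*)^{1-\beta+\eta})$ lower-order for all $d \le (\log n)^{C_1}$. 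For the upper bound, $\Ex[X_d \mid \mathcal F] \le |V_{>W^*}| + \sum_v \Pr[\delta_v \ge d] + \#\{v \mid h_v \ge 1\}$; the middle sum is $O(nd^{1-\beta+\eta})$ by Lemma~\ref{lem:expnrverticesdegree}, and $\#\{v \mid h_v \ge 1\} \le \sum_{a \in V_{>W^*}} \deg(a) = \Theta(\W_{>W^*}) = O(n(W^*)^{2-\beta+\eta})$ on the good event, again lower-order for $C_2 \gg C_1$. A union bound over the $(\log n)^{C_1}$ values of $d$ in this regime, combined with the deterministic large-weight bound, finishes the proof. Beyond the Lipschitz issue itself, the step I expect to need the most care is the bookkeeping of the constants $C_1 \ll C_2$ that makes the heavy ``contamination'' terms provably negligible across the whole small-weight range.
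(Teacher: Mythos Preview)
Your large-$d$ argument matches the paper's. The gap is in the small-$d$ regime, specifically in the claim that $\delta_v$ is independent of $\mathcal F$. A heavy--light edge $\{a,v\}$ is the event $\{Y_{av} < p_{av}(\x a,\x v)\}$, which involves $\x v$; conditioning on it therefore biases the distribution of $\x v$. The light positions do remain \emph{mutually} independent under this conditioning (for distinct light $v,v'$ the heavy-edge indicators depend on disjoint coordinate blocks), so Theorem~\ref{thm:concentration} is in fact still applicable to the conditional space --- but the conditional marginal of each $\x v$ is no longer the original measure $\mu$, so condition~(EP1) is lost and you cannot conclude $\Ex[\delta_v \mid \mathcal F]=\Theta(\w v)$ or bound $\Pr[\delta_v \ge d \mid \mathcal F]$ the way you do. In a model this general the biased marginal can be essentially arbitrary, so both your upper and lower estimates on $\Ex[X_d \mid \mathcal F]$ break down.

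The paper avoids this by not conditioning at all. It sets $V' := V_{\le n^\eps}$ and works with the auxiliary count $f_d := \#\{v \in V' \mid \deg_{G[V']}(v) \ge d\}$, which depends only on the light positions (with their \emph{unconditioned} law $\mu$) and the light--light $Y$-variables --- an honest product space to which Theorem~\ref{thm:concentration} applies with Lipschitz constant $4n^{2\eps}$ on the event $\{\Delta(G[V']) < n^{2\eps}\}$. The sandwich $f_d \le X_d \le f_d + 2\sum_{v\notin V'}\deg(v)$, with the last term $n^{1-\Omega(1)}$ whp by Lemmas~\ref{lem:largevertices} and~\ref{lem:totalweight}, transfers concentration from $f_d$ to $X_d$, and $\Ex[f_d]=(1+o(1))\Ex[X_d]$ comes directly from Lemma~\ref{lem:expnrverticesdegree}. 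Your threshold $W^*=(\log n)^{C_2}$ would work equally well in place of $n^\eps$; the clean fix to your argument is simply to drop the conditioning on heavy-incident edges and work with the light-subgraph count $f_d$ instead.
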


Before we prove Theorem~\ref{thm:degsequence}, we note that together with the standard calculations from Lemma~\ref{lem:totalweight} we immediately obtain the average degree in the graph.
\begin{corollary} \label{cor:averagedegree}
With high probability, $\frac{1}{n}\sum_{v \in V} \deg(v)= \Theta(1)$ and thus $|E|=\Theta(n)$.
\end{corollary}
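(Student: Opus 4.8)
The plan is to read off both bounds from Theorem~\ref{thm:degsequence} by means of the layer-cake identity
$\sum_{v \in V} \deg(v) = \sum_{d \ge 1} \#\{v \in V \mid \deg(v) \ge d\}$,
which is exactly the kind of summation used to estimate $\W$ in Lemma~\ref{lem:totalweight}. Note that $\sum_v \deg(v) = 2|E|$, so it suffices to show $\sum_v \deg(v) = \Theta(n)$ whp.

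First I would fix a constant $\eta > 0$ small enough that $1-\beta+\eta < -1$; this is possible precisely because $\beta > 2$. By Theorem~\ref{thm:degsequence}, whp the bound $\#\{v \mid \deg(v) \ge d\} \le O(n d^{1-\beta+\eta})$ holds for all $d \ge 1$ simultaneously. Summing over $d$ and using that $\sum_{d \ge 1} d^{1-\beta+\eta}$ converges to a constant depending only on $\beta$ and $\eta$, we get $\sum_v \deg(v) = O(n)$ whp. (If one prefers a finite sum, one can instead truncate at $\Delta(G) = O(n^{1/(\beta-1-\eta)})$ using Corollary~\ref{cor:maxdegree}, but this is not needed since the tail series already converges.) For the matching lower bound, apply the first inequality of Theorem~\ref{thm:degsequence} with $d=1$ (which lies in the valid range $1 \le d \le \barw$): whp $\#\{v \mid \deg(v) \ge 1\} = \Omega(n)$, hence $\sum_v \deg(v) \ge \#\{v \mid \deg(v) \ge 1\} = \Omega(n)$ whp. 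Combining the two gives $\frac1n \sum_v \deg(v) = \Theta(1)$ whp and therefore $|E| = \frac12 \sum_v \deg(v) = \Theta(n)$ whp.

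There is essentially no hard step here: the only point requiring attention is choosing $\eta < \beta - 2$ so that the degree-tail series is summable, after which the argument is just the layer-cake identity plus a geometric-type estimate, in direct analogy with Lemma~\ref{lem:totalweight}. One could alternatively establish the expectation-level statement directly via $\Ex[\sum_v \deg(v)] = \sum_v \Ex[\deg(v)] = \Theta(\W) = \Theta(n)$ by Lemmas~\ref{lem:expecteddegree} and~\ref{lem:totalweight}, but upgrading this to a whp statement would call for another application of the bounded-differences inequality (Theorem~\ref{thm:concentration}), so routing through the already-proven Theorem~\ref{thm:degsequence} is the cleanest route.
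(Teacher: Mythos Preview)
Your proposal is correct and is precisely what the paper has in mind: the paper does not spell out a proof but simply remarks that the corollary follows from Theorem~\ref{thm:degsequence} ``together with the standard calculations from Lemma~\ref{lem:totalweight}'', which is exactly the layer-cake summation you wrote out. Your choice of $\eta<\beta-2$ to make the tail summable is the only point requiring care, and you handled it correctly.
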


\begin{proof}[Proof of Theorem~\ref{thm:degsequence}]

We first consider the case where $d$ is larger than $\log^3 n = o(\barw)$. From Condition (PL2) on the vertex weights and Lemma~\ref{lem:expecteddegree} it follows that 
$$\#\{v \in V \mid \Ex[\deg(v)] \geq 1.5d\}=\Omega\big(nd^{1-\beta-\eta}\big)$$ holds for all $\log^3 n \le d \le \barw$. Then by Lemma~\ref{lem:largevertices}, whp every vertex $v$ with $\Ex[\deg(v)] \geq 1.5d$ has degree at least $(1-o(1))1.5d \ge d$ for $n$ large enough. Hence whp there exist at least $\Omega(nd^{1-\beta-\eta})$ vertices with degree at least $d$. 
Vice-versa, by Lemma~\ref{lem:expecteddegree} we have
$$\#\{v \in V \mid \Ex[\deg(v)] \geq 0.5d\} =O\big(n d^{1-\beta+\eta}\big).$$ 
By the same arguments as above, whp every vertex $v$ with $\Ex[\deg(v)] < 0.5d$ has degree at most $(1+o(1))0.5d < d$. Thus in total there can be at most $O(n d^{1-\beta+\eta})$ vertices with degree at least $d$. This proves the theorem for $d \ge \log^3 n$.

Let $1 \le d \le \log^3 n$, $\eps>0$ be sufficiently small, $V' := V_{\le n^{\eps}}$ be the set of small-weight vertices, and $G' := G[V']$. First, we introduce some notation and define the two random variables 
$$g_d := \#\{v\in V \mid \deg(v) \geq d\}\quad \text{and}\quad f_d := \#\{v\in V' \mid \deg_{G'}(v) \geq d\}.$$ Note that by Lemma~\ref{lem:expnrverticesdegree}, we already have 
\[\Omega\big(n d^{1-\beta-\eta}\big) \le \Ex[g_d] \le O\big(n d^{1-\beta+\eta}\big),\]
and it remains to prove concentration. Clearly,
\begin{equation} \label{eq:helpingineq}
f_d \le g_d \le f_d + 2\sum_{v \in V\setminus V'} \deg(v).
\end{equation}
Next we apply Lemma~\ref{lem:largevertices} together with Lemma~\ref{lem:totalweight} and see that whp,
$$\sum_{v \in V\setminus V'} \deg(v) = \Theta\big(\W_{\ge n^{\eps}} \big) = O\big(n^{1+(2-\beta+\eta)\eps}\big)=n^{1-\Omega(1)}.$$
Recall that we assume $d \le \log^3 n$, so in particular $\Ex[g_d] =\Omega( n / (\log n)^{3(\beta-1+\eta)})$.
It follows that $\Ex\big[\sum_{v \in V\setminus V'} \deg(v)\big] = o(\Ex[g_d])$. Inequality (\ref{eq:helpingineq}) thus implies $\Ex[f_d]=(1+o(1))\Ex[g_d]$. Hence, it is sufficient to prove that the random variable $f_d$ is concentrated around its expectation, because this will transfer immediately to $g_d$.

We aim to show this concentration result via Theorem~\ref{thm:concentration}. Similarly to the proof of Claim~\ref{cla:nrbadvertices}, we can assume that the considered probability space $\Omega$ is a product space of independent random variables. More precisely, the $n$ independent random variables $\x 1, \ldots, \x n$ define the vertex set and the $n-1$ independent random variables $Y_2, \ldots, Y_n$ define the edge set, where each $Y_u$ has the form $(Y_u^1, \ldots, Y_u^{u-1})$, each $Y_u^v$ is a real number chosen uniformly at random from $[0,1]$, and for $v < u$, the edge $\{u,v\}$ is present in the graph if and only if $p_{uv}>Y_u^v$. The $2n-1$ random variables then define the product probability space $\Omega$, i.e., for every $\omega \in \Omega$, we denote by $G(\omega)$ the resulting graph, and similarly we use $G'=G'(\omega)$  and $f_d = f_d(\omega)$. We now consider the bad event:
\begin{equation} \label{eq:badevent}
\mathcal{B} := \{\omega \in \Omega: \text{ the maximum degree in  }G'(\omega) \text{ is at least }n^{2\eps}\}.
\end{equation}
We observe that $\Pr[\mathcal{B}]=n^{-\omega(1)}$, since by Lemma~\ref{lem:largevertices} whp every vertex $v \in V'$ has degree at most $O(\w{v} + \log^2 n) = o(n^{2\eps})$. Let $\omega,\omega' \in \overline{\mathcal{B}}$ such that they differ in at most two coordinates. We observe that changing one coordinate $\x{i}$ or $Y_i$ can influence only the degrees of $i$ itself and of the vertices which are neighbors of $i$ either before or after the coordinate change. It follows that $|f_d(\omega)-f_d(\omega')| \le 4 n^{2\eps} =: c$. Therefore, $f_d$ satisfies the Lipschitz condition of Theorem~\ref{thm:concentration} with bad event $\mathcal{B}$. Let $t = n^{1-\eps}=o(\Ex[f_d])$. Then since $n \Pr[\mathcal B] = n^{-\omega(1)}$, Theorem~\ref{thm:concentration} implies
$$
\Pr\big[|f_d-\Ex[f_d]| \ge t \big] \le 2e^{-\frac{t^2}{64c^2 n}} + (\tfrac{4n^2}{c}+1)\Pr[\mathcal{B}] = e^{-\Omega(n^{1-4\eps})}+n^{-\omega(1)}=n^{-\omega(1)},
$$
which proves the concentration and concludes the proof. 
\end{proof}

\section{Example: GIRGs and generalizations}\label{sec:distancemodel}

In this section, we further discuss the special cases of our model mentioned in Section~\ref{sec:model}. Mainly, we study a class which is still fairly general, the so-called \emph{distance model}. We show that the GIRG model introduced in~\cite{bringmann2015euclideanGIRG} is a special case, and we also discuss a non-metric example. In addition, with the \emph{threshold model} we consider a variation which includes in particular threshold hyperbolic random graphs.

\paragraph{The distance model:} 

We consider the following situation, which will cover both GIRGs and the non-metric example. As our underlying
geometry we specify the ground space $\Space=[0,1]^d$, where $d \ge 1$ is a
(constant) parameter of the model. We sample from this set according to the
standard (Lebesgue) measure. This is in the spirit of the classical random
geometric graphs~\cite{penrose2003}. 

To describe the distance of two points $x,y \in \Space$, assume we have some measurable function $\|.\|: [-1/2,1/2)^d \to \R_{\geq 0}$ such that $\|0\| = 0$ and $\|-x\|=\|x\|$ for all $x\in [-1/2,1/2)^d$. Note that $\|.\|$ does not need to be a norm or seminorm. We extend $\|.\|$ to $\R^d$ via $\|z\| := \| z-u\|$, where $u\in \Z^d$ is the unique lattice point such that $z-u \in [-1/2,1/2)^d$. For $r\geq 0$ and $x \in \Space$, we define the $r$-ball around $x$ to be $B_r(x):=\{x \in \Space \mid \|x-y\| \leq r\}$, and we denote by $V(r)$ the volume of the $r$-ball around $0$. Intuitively, $B_r(x)$ is the ball around $x$ in $[0,1]^d$ with the torus geometry, i.e., with $0$ and $1$ identified in each coordinate. Assume that $V\colon \R_{\geq 0} \to [0,1]$ is surjective, i.e., for each $V_0 \in [0,1]$ there exists $r$ such that $V(r)=V_0$.

Finally let $\alpha \in \R_{>0}$ be a long-range parameter. Since the case $\alpha=1$ deviates slightly from the
general case, we assume $\alpha \neq 1$. 
Let $p$ be any edge probability function that satisfies for all $u,v$ and $\x u,\x v \in \Space = [0,1]^d$,
\begin{equation}\label{eq:distancemodel}
 p_{uv}(\x{u},\x{v}) = \Theta\Big( \min\Big\{1, V(\|\x u-\x v\|)^{-\alpha} \cdot \Big( \frac{\w u \w v}{\W}\Big)^{\max\{\alpha,1\}} \Big\} \Big).
\end{equation}
Then, as we will prove later in Theorem~\ref{thm:distancemodel}, $p$ satisfies conditions (EP1) and (EP2), so it is a special case of our model.


\begin{example} \label{ex:girg}
If we choose $\|.\|$ to be the Euclidean distance $\|.\|_2$ (or any equivalent norm such as $\|.\|_{\infty}$) then we obtain the \emph{GIRG} model introduced in~\cite{bringmann2015euclideanGIRG} and \cite{SerranoKB08}, where the distance of two points $x,y$ in $[0,1]^d$ is given by their geometric distance on the torus. 
In~\cite{bringmann2015euclideanGIRG} it was shown that a graph from such a GIRG model whp has clustering coefficient $\Omega(1)$, that it can be stored with $O(n)$ bits in expectation, and that it can be sampled in expected time $O(n)$. Moreover, it was shown that hyperbolic random graphs are contained in the $1$-dimensional GIRG model. Recently, processes such as bootstrap percolation \cite{koch2016bootstrap} and greedy routing \cite{bringmann2017routing} have been analyzed on this model.
\end{example}

The next distance measure is particularly useful to model social networks: assume that two individuals share one feature (e.g., they are in the same sports club), but are very different in many other features (work, music, ...). Then they are still likely to know each other, which is captured by the minimum component distance.

\begin{example}
Let the \emph{minimum component distance} be defined by 
$$\|x\|_{\min} := \min\{x_i \mid 1\leq i \leq d\} \text{ for } x = (x_1,\ldots,x_d) \in
[-1/2,1/2)^d.$$ 
Note that the minimum component distance is not a metric for $d\geq 2$, since there are
$x,y,z\in \Space$ such that $x$ and $y$ are close in one component, $y$ and $z$
are close in one (different) component, but $x$ and $z$ are not close in any
component. Thus the triangle inequality is not satisfied. However, it still satisfies the requirements specified above, so our results of this paper apply.
\end{example}

\begin{theorem}\label{thm:distancemodel}
In the geometric setting described above, let $p$ be any function that satisfies Equation~(\ref{eq:distancemodel}). Then conditions (EP1) and (EP2) are satisfied, and we obtain an instance of the general model.
\end{theorem}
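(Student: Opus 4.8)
The plan is to verify (EP1) and (EP2) separately. For (EP1), fix $u,v$ and a position $\x u\in\Space$; by symmetry and translation-invariance of the torus we may assume $\x u = 0$. We must compute $\Ex_{\x v}[p_{uv}(0,\x v)]$, which by (\ref{eq:distancemodel}) is $\Theta$ of $\int_{\Space}\min\{1,\, V(\|y\|)^{-\alpha}(\w u\w v/\W)^{\max\{\alpha,1\}}\}\,dy$. The key idea is to change variables from $y$ to the volume coordinate $t := V(\|y\|)\in[0,1]$. Since $V\colon\R_{\ge0}\to[0,1]$ is surjective and $V(\|y\|)$ is exactly the measure of the ball $B_{\|y\|}(0)$, the pushforward of Lebesgue measure on $\Space$ under the map $y\mapsto V(\|y\|)$ is precisely Lebesgue measure on $[0,1]$: for any threshold $r$, $\mu(\{y : V(\|y\|)\le s\}) = s$ whenever $s$ is in the range of $V(\|\cdot\|)$. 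Hence the integral becomes $\int_0^1 \min\{1,\, t^{-\alpha}\lambda\}\,dt$ where $\lambda := (\w u\w v/\W)^{\max\{\alpha,1\}}$. This one-dimensional integral is a routine computation: splitting at $t_0 := \lambda^{1/\alpha}$ (capped at $1$), the contribution from $t<t_0$ is $t_0 = \min\{1,\lambda^{1/\alpha}\}$, and for $\alpha>1$ the tail $\int_{t_0}^1 t^{-\alpha}\lambda\,dt = \Theta(\lambda\, t_0^{1-\alpha}) = \Theta(t_0)$ when $t_0<1$, while for $\alpha<1$ the tail $\int_{t_0}^1 t^{-\alpha}\lambda\,dt = \Theta(\lambda) = \Theta(t_0^\alpha)$; one checks case by case that the total is always $\Theta(\min\{1,\w u\w v/\W\})$, using $\max\{\alpha,1\}/\alpha = \max\{1,1/\alpha\}$ to reconcile the exponents. (The hypothesis $\alpha\ne1$ is exactly what makes these integrals converge to a clean power rather than a logarithm.) This establishes (EP1).

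For (EP2), fix $u,v$ with $\w u,\w v\ge\barw$ and arbitrary positions $\x u,\x v$. The worst case is when $\x u,\x v$ are as far apart as possible, i.e.\ $V(\|\x u-\x v\|)$ is as large as possible, which is at most $1$. Then (\ref{eq:distancemodel}) gives $p_{uv}\ge\Omega(\min\{1,\,(\w u\w v/\W)^{\max\{\alpha,1\}}\})$. Since $\w u,\w v\ge\barw$ and $\W=\Theta(n)$ by Lemma~\ref{lem:totalweight}, this is $\Omega(\min\{1,(\barw^2/n)^{\max\{\alpha,1\}}\})$. Using $\barw\le n^{(1-\Omega(1))/(\beta-1)}$ one has $\barw^2/n = n^{-\Omega(1)}$ (as $\beta<3$ would give $\barw^2/n$ possibly large, but actually in this section $\beta$ may be anything $>2$; the relevant bound is just that $\barw^2/n\le(n/\barw^{\beta-1-\eta})^{-1}\cdot$ something), so the task is to show $(\barw^2/n)^{\max\{\alpha,1\}} \ge (n/\barw^{\beta-1-\eta})^{-1+\omega(1/\log\log n)}$. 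Taking logarithms this reduces to comparing $\max\{\alpha,1\}\cdot(\log n - 2\log\barw)$ with $(1-o(1))(\log n-(\beta-1-\eta)\log\barw)$. The crucial point is the lower bound $\barw\ge n^{\omega(1/\log\log n)}$ from the power-law assumptions, i.e.\ $\log\barw \ge \omega(1/\log\log n)\cdot\log n$: this guarantees that the multiplicative slack $\omega(1/\log\log n)$ in the exponent of the right-hand side translates into an additive $\omega(\log n/\log\log n)$ term that dominates the (bounded multiple of) $\log\barw$ discrepancy between the two sides. One sets up the inequality carefully, plugs in $\barw = n^\theta$ for $\theta = \theta(n)$ ranging over $[\omega(1/\log\log n),\,(1-\Omega(1))/(\beta-1)]$, and checks it holds uniformly; this is a short but slightly fiddly computation.

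The main obstacle is the second step, (EP2): the change-of-variables argument for (EP1) is clean once one observes the pushforward identity, but (EP2) requires chasing the precise $\omega(1/\log\log n)$-type quantifiers through the bounds on $\barw$, and one has to be careful that the estimate is uniform over the whole admissible range of $\barw$ and over all $\alpha\ne1$ (the exponent $\max\{\alpha,1\}$ can be arbitrarily large, which only helps the lower bound, but one should confirm it does not interact badly with the $\Theta$-constants). A secondary subtlety in (EP1) is justifying that the map $y\mapsto V(\|y\|)$ genuinely pushes Lebesgue measure forward to Lebesgue measure on $[0,1]$ even when $\|\cdot\|$ is not a norm and $V$ may fail to be continuous or injective — this follows from surjectivity of $V$ together with the definition $V(r) = \mu(B_r(0))$, but it deserves an explicit line.
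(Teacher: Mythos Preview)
Your treatment of (EP1) is essentially the paper's argument: both reduce the marginal probability to the one-dimensional integral $\int_0^1\min\{1,t^{-\alpha}\lambda\}\,dt$ via the observation that the pushforward of Lebesgue measure under $y\mapsto V(\|y\|)$ is uniform on $[0,1]$ (the paper phrases this with a Riemann--Stieltjes integral and a refinement of partitions, but the content is identical), and then evaluate case by case. That part is correct.

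Your argument for (EP2) contains a genuine error. You assert that ``the exponent $\max\{\alpha,1\}$ can be arbitrarily large, which only helps the lower bound'', but in the regime $\barw^2/n<1$ the quantity $(\barw^2/n)^{\max\{\alpha,1\}}$ \emph{decreases} as $\alpha$ grows, so large $\alpha$ hurts rather than helps. Concretely, take $\alpha=2$, $\beta=2.5$, and $\barw=n^{0.1}$, which lies well inside the admissible range $n^{\omega(1/\log\log n)}\le\barw\le n^{(1-\Omega(1))/(\beta-1)}$. For $u,v$ of weight $\barw$ at maximal distance one obtains $p_{uv}=\Theta\big((\barw^2/n)^\alpha\big)=\Theta(n^{-1.6})$, whereas the right-hand side of (EP2) is $(n/\barw^{1.5-\eta})^{-1+o(1)}\approx n^{-0.85}$, and the required inequality fails. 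In your log-comparison the discrepancy between the two sides is not a bounded multiple of $\log\barw$ as you claim, but rather $(\alpha-1)\log n - O(\log\barw)$, which the $\omega(\log n/\log\log n)$ slack cannot absorb when $\log\barw=o(\log n)$. The paper's own proof is terse at exactly this step --- it passes from $(\w u\w v/\W)^{\max\{\alpha,1\}}$ directly to $\Omega(\barw^2/n)$, silently dropping the exponent --- and so has the same gap when $\alpha>1$ and $\barw<n^{1/2}$. The argument is clean for $\alpha<1$ (where $\max\{\alpha,1\}=1$) and for $\barw\ge n^{1/2}$ (where $\w u\w v/\W=\Omega(1)$, hence $p_{uv}=\Theta(1)$); note that the paper explicitly adds the hypothesis $\barw=\omega(n^{1/2})$ in the threshold case $\alpha=\infty$. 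For general $\alpha>1$ you need either a similar restriction on $\barw$ or a different argument.
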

 \begin{proof}
Fix $u,v$, and also the position $\x u$. Note that $V(r)$ is the cumulative probability distribution $\Pr_{\x v}(\|\x u-\x v\|\leq r)$. The marginal edge probability is given by the Riemann-Stieltjes integral over $r$,
\[E := \Ex_{\x v}[p_{uv}(\x u,\x v) \mid \x u] =  \Theta\Big( \int_0^\infty \Lambda_{u,v}(r) dV(r) \Big),\]
where
\[\Lambda_{u,v}(r) := \min\Big\{1, V(r)^{-\alpha} \cdot \Big( \frac{\w u \w v}{\W}\Big)^{\max\{\alpha,1\}} \Big\}.\]
In particular, for every sequence of partitions $r^{(t)} = \{0=r_0^{(t)} < \ldots <r_{\ell(t)}^{(t)}\}$ with meshes tending to zero, the upper Darboux sum with respect to $r^{(t)}$ converges to the expectation,
\[E =  \Theta\Big(\lim_{t\to \infty} \sum_{s=1}^{\ell(t)}\Big(\sup_{r_{s-1}^{(t)}\leq  r\leq r_{s}^{(t)}} \Lambda_{u,v}(r)\Big) \big(V(r_{s}^{(t)})-V(r_{s-1}^{(t)})\big) \Big).\]
Since $V$ is surjective, we may refine the meshes $r^{(t)}$ if necessary such that the meshes of the partitions $V^{(t)} = \{V(r_0^{(t)}), \ldots, V(r_{\ell(t)}^{(t)})\}=:\{V_0^{(t)}, \ldots, V_{\ell(t)}^{(t)}\}$ also tend to zero. Hence,
\begin{align*}
E & = \Theta\Big(\lim_{t\to \infty} \sum_{s=1}^{\ell(t)}\min\Big\{1, (V_s^{(t)})^{-\alpha} \cdot \Big( \frac{\w u \w v}{\W}\Big)^{\max\{\alpha,1\}} \Big\} \Big(V_{s}^{(t)}-V_{s-1}^{(t)}\Big)\Big) \\
& = \Theta\Big(\int_{0}^1  \min\Big\{1, V^{-\alpha} \cdot \Big( \frac{\w u \w v}{\W}\Big)^{\max\{\alpha,1\}} \Big\} dV\Big),
\end{align*}
where the latter integral is an ordinary Riemann integral. If $\w u \w v/\W \geq 1$, the integrand is $1$ and we obtain $E = \Theta(1) = \Theta\left(\min\left\{1,\frac{\w u \w v}{\W}\right\}\right)$. On the other hand, if $\w u \w v/\W < 1$ then let $r_0:= (\frac{\w u \w v}{\W})^{\max\{\alpha,1\}/\alpha } < 1$. Note that if $r_0 = \Theta(1)$, then also $r_0 = \Theta(\w u \w v/\W)$. Therefore,
\begin{align*}
E & = \Theta\bigg(\int_0^{r_0} 1 dV+\Big( \frac{\w u \w v}{\W}\Big)^{\max\{\alpha,1\}}\int_{r_0}^1 V^{-\alpha} dV\bigg) \\
& = \begin{cases} 
\Theta\Big(r_0 + \frac{\w u \w v}{\W}(1-r_0^{1-\alpha})\Big) = \Theta \Big( \frac{\w u \w v}{\W}\Big)& \text{, if } \alpha < 1, \text{ and}\\
\Theta\bigg(r_0 +\Big( \frac{\w u \w v}{\W}\Big)^{\alpha}(r_0^{1-\alpha}-1) \bigg)  = \Theta \Big( \frac{\w u \w v}{\W}\Big)& \text{, if } \alpha> 1, 
\end{cases} 
\end{align*}
as required.

It remains to show that $p$ satisfies~(EP2). Since $V(\|\x u-\x v\|) \leq 1$, from Equation~\eqref{eq:distancemodel} we obtain the lower bound 
\[p_{uv} \geq \Omega\Big( \min\Big\{1, \Big(\frac{\w u \w v}{\W}\Big)^{\max\{\alpha,1\}}\Big\}\Big).\]
If $\w u \w v/\W \geq 1$ then there is nothing to show (since the right hand side of (EP2) is $o(1)$ by the upper bound on $\barw$). Otherwise, if $\w u \w v/\W < 1$, then
\[
p_{uv} \geq \Omega\Big( \Big(\frac{\w u \w v}{\W}\Big)^{\max\{\alpha,1\}}\Big) \geq \Omega\Big( \frac{\barw^2}{n}\Big) \geq \Big(\frac{n}{\barw^{\beta-1+\eta}}\Big) ^{-1+\omega(1/\log\log n)},
\]
where the last step follows from the lower bound on $\barw$. This concludes the proof.
\end{proof}

\paragraph{The threshold model:} 

Finally, we discuss a variation of Example~\ref{ex:girg} where we let $\alpha \to \infty$ and thus obtain a threshold function for $p$.

\begin{example} \label{ex:threshold}
Let $\|.\|$ be the Euclidean distance $\|.\|_2$ and let $p$ again satisfy (\ref{eq:distancemodel}), but this time we assume that $\alpha=\infty$. More precisely, we require
\begin{equation}\label{eq:puv2}
 p_{uv}(\x u,\x v) = \begin{cases} \Theta(1) & \text{if } \|\x u - \x v\| \le O\big(\big(\tfrac{\w u \w v}\W\big)^{1/d}\big) \\ 0 & \text{if } \|\x u - \x v\| \ge \Omega\big(\big(\tfrac{\w u \w v}\W\big)^{1/d}\big), \end{cases}  
 \end{equation}
where the constants hidden by $O$ and $\Omega$ do not have to match, i.e., there can be an interval $[c_1 (\tfrac{\w u \w v}\W)^{1/d}, c_2 (\tfrac{\w u \w v}\W)^{1/d}]$ for $\|\x u - \x v\|$ where the behaviour of $p_{uv}(\x u,\x v)$ is arbitrary. This function $p$ yields the case $\alpha = \infty$ of the \emph{GIRG} model introduced in~\cite{bringmann2015euclideanGIRG}. In~\cite{bringmann2015euclideanGIRG} we proved that threshold hyperbolic random graphs are contained in this model, and furthermore that the model whp has clustering coefficient $\Omega(1)$, it can be stored with $O(n)$ bits in expectation, and that it can be sampled in expected time $O(n)$.

Notice that the volume of a ball with radius $r_0=\Theta((\frac{\w u \w v}{\W})^\frac{1}{d})$ around any fixed $x \in \Space$ is $\Theta(\min\{1,\frac{\w u \w v}{\W}\})$. Thus, by (\ref{eq:puv2}), for fixed $\x{u}$ it follows directly that
$$\Ex_{\x v}[p_{uv}(\x u,\x v) \mid \x u] = \Theta\big(\Pr_{\x v}\big[\|\x u - \x v\| \le r \mid \x u\big] \big) = \Theta\left(\min\left\{1,\tfrac{\w u \w v}{\W}\right\}\right).$$
Since (EP1) is satisfied, Theorem~\ref{thm:degseq1} for the degree sequence already applies. In order to also fulfill (EP2), we additionally require that $2 < \beta < 3$ and $\barw=\omega(n^{1/2})$. Then for all $\w u, \w v \ge \barw$ we have $\frac{\w u \w v}{\W} = \omega(1)$. For all positions $\x u,\x v \in \Space$ we thus obtain $p_{uv}(\x u,\x v)=\Theta(1)$ by (\ref{eq:puv2}). 
\end{example}

\begin{remark}\label{rem:largebeta}
It follows from the definition that the low-weight vertices in a GIRG contain ordinary random geometric graphs as subgraphs, i.e., every pair of vertices connects with probability $\Omega(1)$ if the distance between the vertices is at most $cn^{-1/d}$, where $c$ is a constant that depends on the minimal weight $\wmin$. If $\wmin$ is sufficiently large, then these subgraphs are supercritical, i.e., they have a giant component. On the other hand, in the threshold model for $\beta >3$ sufficiently large, all edges cover a polynomially small distance $n^{-\Omega(1)}$. Thus, by combining these conditions we get a random graph model in the regime $\beta >3$ with giant components where the average distance is polynomially large. 
\end{remark}

\section{Conclusion}
\label{sec:conclusion}

We studied a class of random graphs that genericly augment Chung-Lu random graphs by an underlying ground space, i.e., every vertex has a random position in the ground space and edge probabilities may arbitrarily depend on the vertex positions, as long as marginal edge probabilities are preserved. Since our model is very general, it contains well-known special cases like hyperbolic random graphs~\cite{BogunaPK10,PapadopoulosKBV10} and geometric inhomogeneous random graphs~\cite{bringmann2015euclideanGIRG}. Beyond these well-studied models, our model also includes non-metric ground spaces, which are motivated by social networks, where two persons are likely to know each other if they share a hobby, regardless of their other hobbies.

Despite its generality, we show that all instantiations of our model have similar connectivity properties, assuming that vertex weights follow a power law with exponent $2 < \beta < 3$. In particular, there exists a unique giant component of linear size and the diameter is polylogarithmic. Surprisingly, for all instantiations of our model the average distance is the same as in Chung-Lu random graphs, namely $(2 \pm o(1))\frac{\log \log n}{|\log(\beta-2)|}$. In some sense, this shows universality of ultra-small worlds. 

We leave it as an open problem to determine whether the diameter of our model is $O(\log n)$ for $2 < \beta < 3$.

\paragraph{Acknowledgements:}
We thank Hafsteinn Einarsson, Tobias Friedrich, and Anton Krohmer for helpful discussions.
\begin{footnotesize}
\bibliographystyle{plain}
\bibliography{../girg}

\begin{thebibliography}{10}

\bibitem{abdullah2015typical}
M.~A. Abdullah, M.~Bode, and T.~M\"{u}ller.
\newblock Typical distances in a geometric model for complex networks.
\newblock {\em Internet Mathematics}, 2017, 2017.

\bibitem{spatialpreferred}
W.~Aiello, A.~Bonato, C.~Cooper, J.~Janssen, and P.~Pra{\l}at.
\newblock A spatial web graph model with local influence regions.
\newblock {\em Internet Mathematics}, 5(1-2):175--196, 2008.

\bibitem{BarabasiA99}
A.-L. Barab\'{a}si and R.~Albert.
\newblock Emergence of scaling in random networks.
\newblock {\em Science}, 286:509--512, 1999.

\bibitem{bode2014geometrisation}
M.~Bode, N.~Fountoulakis, and T.~M{\"u}ller.
\newblock On a geometrisation of the {Chung-Lu} model for complex networks.
\newblock {\em Journal of Complex Networks}, 3(3):361--387, 2015.

\bibitem{BogunaPK10}
M.~{Bogu{\~n}{\'a}}, F.~{Papadopoulos}, and D.~{Krioukov}.
\newblock {Sustaining the Internet with hyperbolic mapping}.
\newblock {\em Nature Communications}, 1(6), September 2010.

\bibitem{BollobasSR07}
B.~Bollob\'{a}s, S.~Janson, and O.~Riordan.
\newblock The phase transition in inhomogeneous random graphs.
\newblock {\em Random Structures \& Algorithms}, 31(1):3--122, 2007.

\bibitem{bonato2010geometric}
A.~Bonato, J.~Janssen, and P.~Pra{\l}at.
\newblock A geometric model for on-line social networks.
\newblock In {\em 1st International Workshop on Modeling Social Media (WOSM)},
  2010.

\bibitem{bradonjic2008structure}
M.~Bradonji{\'c}, A.~Hagberg, and A.~G. Percus.
\newblock The structure of geographical threshold graphs.
\newblock {\em Internet Mathematics}, 5(1-2):113--139, 2008.

\bibitem{bringmann2015euclideanGIRG}
K.~Bringmann, R.~Keusch, and J.~Lengler.
\newblock {Sampling Geometric Inhomogeneous Random Graphs in Linear Time}.
\newblock In {\em 25th Annual European Symposium on Algorithms (ESA)},
  volume~87 of {\em Leibniz International Proceedings in Informatics (LIPIcs)},
  pages 20:1--20:15, 2017.
\newblock Full paper available at arXiv:1511.00576.

\bibitem{bringmann2017routing}
K.~Bringmann, R.~Keusch, J.~Lengler, Y.~Maus, and A.~Molla.
\newblock Greedy routing and the algorithmic small-world phenomenon.
\newblock In {\em {ACM} Symposium on Principles of Distributed Computing
  ({PODC}) 2017}, pages 371--380, 2017.
\newblock Full paper available at arXiv:1612.05539.

\bibitem{chung2002avg}
F.~Chung and L.~Lu.
\newblock The average distances in random graphs with given expected degrees.
\newblock {\em Proceedings of the National Academy of Sciences (PNAS)},
  99(25):15879--15882, 2002.

\bibitem{Chung02}
F.~Chung and L.~Lu.
\newblock Connected components in random graphs with given expected degree
  sequences.
\newblock {\em Annals of Combinatorics}, 6(2):125--145, 2002.

\bibitem{chung2004avg}
F.~Chung and L.~Lu.
\newblock The average distance in a random graph with given expected degrees.
\newblock {\em Internet Mathematics}, 1(1):91--113, 2004.

\bibitem{Combes15}
R.~Combes.
\newblock An extension of mcdiarmid's inequality.
\newblock {\em Preprint available at arXiv:1511.05240}, 2015.

\bibitem{deijfen2013scale}
M.~Deijfen, R.~van~der Hofstad, and G.~Hooghiemstra.
\newblock Scale-free percolation.
\newblock {\em Annales de l'Institut Henri Poincar{\'e}, Probabilit{\'e}s et
  Statistiques}, 49(3):817--838, 2013.

\bibitem{dereich2012typical}
S.~Dereich, C.~M{\"o}nch, and P.~M{\"o}rters.
\newblock Typical distances in ultrasmall random networks.
\newblock {\em Advances in Applied Probability}, 44(2):583--601, 2012.

\bibitem{dorogovtsev2002evolution}
S.~N. Dorogovtsev and J.~F.~F. Mendes.
\newblock Evolution of networks.
\newblock {\em Advances in Physics}, 51(4):1079--1187, 2002.

\bibitem{Draief2010}
M.~Draief and L.~Massouli.
\newblock {\em Epidemics and Rumours in Complex Networks}.
\newblock Cambridge University Press, New York, NY, USA, 1st edition, 2010.

\bibitem{dubhashi2009concentration}
P.~D. Dubhashi and A.~Panconesi.
\newblock {\em Concentration of measure for the analysis of randomized
  algorithms}.
\newblock Cambridge University Press, 2009.

\bibitem{friedrich2015cliques}
T.~Friedrich and A.~Krohmer.
\newblock Cliques in hyperbolic random graphs.
\newblock In {\em INFOCOM 2015. 34th IEE International Conference on Computer
  Communication}, pages 1544--1552, 2015.

\bibitem{FriedrichKrohmer15}
T.~Friedrich and A.~Krohmer.
\newblock On the diameter of hyperbolic random graphs.
\newblock In {\em 42nd International Colloquium on Automata, Languages, and
  Programming (ICALP)}, Lecture Notes in Computer Science, 2015.

\bibitem{gugelmann2012random}
L.~Gugelmann, K.~Panagiotou, and U.~Peter.
\newblock Random hyperbolic graphs: degree sequence and clustering.
\newblock In {\em 39th International Colloquium on Automata, Languages, and
  Programming (ICALP)}, pages 573--585, 2012.

\bibitem{jacob2013spatial}
E.~Jacob and P.~M{\"o}rters.
\newblock A spatial preferential attachment model with local clustering.
\newblock In {\em Algorithms and Models for the Web Graph}, pages 14--25.
  Springer, 2013.

\bibitem{koch2016bootstrap}
C.~Koch and J.~Lengler.
\newblock Bootstrap percolation on geometric inhomogeneous random graphs.
\newblock In {\em 43rd International Colloquium on Automata, Languages, and
  Programming (ICALP)}, pages 147:1--147:15, 2016.

\bibitem{KrioukovPKVB}
D.~Krioukov, F.~Papadopoulos, M.~Kitsak, A.~Vahdat, and M.~Bogu\~n\'a.
\newblock Hyperbolic geometry of complex networks.
\newblock {\em Phys. Rev. E}, 82:036106, 2010.

\bibitem{kutinextension}
S.~Kutin.
\newblock Extensions to {McDiarmid's} inequality when differences are bounded
  with high probabiltiy.
\newblock Technical report, Departement of Computer Science, University of
  Chicago, 2002.

\bibitem{mueller2017diameter}
T.~M\"{u}ller and M.~Staps.
\newblock The diameter of kpkvb random graphs.
\newblock {\em Preprint available at arXiv:1707.09555}, 2017.

\bibitem{PapadopoulosKBV10}
F.~Papadopoulos, D.~Krioukov, M.~Bogu\~{n}\'{a}, and A.~Vahdat.
\newblock Greedy forwarding in dynamic scale-free networks embedded in
  hyperbolic metric spaces.
\newblock In {\em INFOCOM 2010. 29th IEEE International Conference on Computer
  Communications}, pages 1 --9, March 2010.

\bibitem{penrose2003}
M.~Penrose.
\newblock {\em Random geometric graphs}, volume~5.
\newblock Oxford University Press Oxford, 2003.

\bibitem{SerranoKB08}
M.~\'A. Serrano, D.~Krioukov, and M.~Bogu\~n\'a.
\newblock Self-similarity of complex networks and hidden metric spaces.
\newblock {\em Phys. Rev. Lett.}, 100:078701, 2008.

\bibitem{warnke2016method}
L.~Warnke.
\newblock On the method of typical bounded differences.
\newblock {\em Combinatorics, Probability and Computing}, 25(02):269--299,
  2016.

\end{thebibliography}
\end{footnotesize}

\end{document}